\title{Complete Relational Logic for Infinite-Dimensional Quantum Programs with Unbounded Assertions} 
\titlerunning{Complete Relational Logic for Infinite-Dimensional Quantum Programs} 
\author{Gilles Barthe}{Max-Planck Institute for Security and Privacy (MPI-SP), Germany \and IMDEA Software Institute, Spain}{gilles.barthe@mpi-sp.org}{https://orcid.org/0000-0002-3853-1777}{Corresponding author.}
\author{Minbo Gao}{Key Laboratory of System Software (Chinese Academy of Sciences), Beijing, China \and Institute of Software, Chinese Academy of Sciences, Beijing, China \and University of Chinese Academy of Sciences, Beijing, China}{gaomb@ios.ac.cn}{https://orcid.org/0009-0006-2976-548X}{}
\author{Jam Kabeer Ali Khan}
{Max-Planck Institute for Security and Privacy (MPI-SP), Germany}
{jamkhan@connect.hku.hk}
{https://orcid.org/0009-0007-8505-9667}
{}
\author{Matthijs Muis}{Max-Planck Institute for Security and Privacy (MPI-SP), Germany \and Eidgen\"ossische Technische Hochschule (ETH) Z\"urich, Switzerland}
{mamuis@ethz.ch}
{https://orcid.org/0009-0001-3421-5793}
{}
\author{Ivan Renison}
{Max-Planck Institute for Security and Privacy (MPI-SP), Germany}
{ivan.renison@mi.unc.edu.ar}
{https://orcid.org/0009-0006-6216-3763}
{}
\author{Keiya Sakabe}{Ludwig-Maximilians-Universit\"at München, Germany}{keiya.sakabe@lmu.de}{https://orcid.org/0009-0003-8894-4400}{}
\author{Michael Walter}{Ludwig-Maximilians-Universit\"at München, Germany \and Munich Center for Quantum Science and Technology (MCQST), Germany}{michael.walter@lmu.de}{https://orcid.org/0000-0002-3073-1408}{}
\author{Yingte Xu}{Max-Planck Institute for Security and Privacy (MPI-SP), Germany}{yingte.xu@mpi-sp.org}{https://orcid.org/0000-0001-9071-7862}{}
\author{Tianshi Yu}{Key Laboratory of System Software (Chinese Academy of Sciences), Beijing, China \and Institute of Software, Chinese Academy of Sciences, Beijing, China}{yuts@ios.ac.cn}{https://orcid.org/0009-0007-2498-3346}{}
\author{Li Zhou}{Key Laboratory of System Software (Chinese Academy of Sciences), Beijing, China \and Institute of Software, Chinese Academy of Sciences, Beijing, China}{zhouli@ios.ac.cn}{https://orcid.org/0000-0002-9868-8477}{Corresponding author.}
\authorrunning{G. Barthe et al.} 
\keywords{relational program logics, infinite-dimensional quantum programs, classical-quantum programs, linear relations, quantum optimal transport} 
\newcommand{\yes}{$\ding{52}$}
\newcommand{\no}{$\ding{54}$}
\DeclarePairedDelimiter\rbra{\lparen}{\rparen}
\DeclarePairedDelimiter\cbra{\{}{\}}
\DeclarePairedDelimiter\ave{\langle}{\rangle}
\newcommand{\states}[1]{%
  \if\relax\detokenize{#1}\relax
    \mathcal{D}^1
  \else
    \mathcal{D}^1(#1)
  \fi
}
\newcommand{\cstates}{\mathcal{S}}
\newcommand{\pstates}[1]{%
  \if\relax\detokenize{#1}\relax
    \mathcal{D}^{\le 1}
  \else
    \mathcal{D}^{\le 1}(#1)
  \fi
}
\newcommand{\allmaps}[1]{%
  \if\relax\detokenize{#1}\relax
    \mathcal{F}
  \else
    \mathcal{F}(#1)
  \fi
}
\newcommand{\bbmaps}[1]{%
  \if\relax\detokenize{#1}\relax
    \mathcal{F}_{\mathrm{bel}}
  \else
    \mathcal{F}_{\mathrm{bel}}(#1)
  \fi
}
\newcommand{\bndmaps}[1]{%
  \if\relax\detokenize{#1}\relax
    \mathcal{F}_{\mathrm{bnd}}
  \else
    \mathcal{F}_{\mathrm{bnd}}(#1)
  \fi
}
\newcommand{\posmaps}[1]{%
  \if\relax\detokenize{#1}\relax
    \mathcal{F}_{\mathrm{\geq 0}}
  \else
    \mathcal{F}_{\mathrm{\geq 0}}(#1)
  \fi
}
\newcommand{\couplings}[2]{%
  \if\relax\detokenize{#1}\relax
    \if\relax\detokenize{#2}\relax
      \mathcal{C}
    \fi
  \else
    \mathcal{C}(#1,#2)
  \fi
}
\newcommand{\dualset}[1]{%
  \if\relax\detokenize{#1}\relax
      \mathcal{G}
  \else
    \mathcal{G}(#1)
  \fi
}
\newcommand{\E}{\mathbb{E}}
\newcommand{\set}[2] {\left\{\, #1 \colon #2 \,\right\}}
\newcommand{\supp} {\operatorname{supp}}
\newcommand{\spanspace} {\operatorname{span}}
\newcommand{\Ran} {\operatorname{Ran}}
\newcommand{\mul} {\mathsf{Mul}}
\newcommand{\tdom}{\mathsf{Dom}_{\mathfrak{t}}}
\newcommand{\cH} {\mathcal{H}}
\let\geq=\ge
\let\leq=\le
\newcommand{\bN} { {\mathbb{N}}}   
\newcommand{\bC} { {\mathbb{C}}}
\newcommand{\bR} { {\mathbb{R}}}
\newcommand{\<}{{\langle}}
\renewcommand{\>}{{\rangle}}
\newcommand{\Dom}{{\operatorname{Dom}}}
\newcommand{\trunc}[2]{\textup{trunc} \left( #1, #2 \right)}
\newcommand{\distr}{\states{}}
\newcommand{\sdistr}{\pstates{}}
\newcommand{\ereals}{\overline{\mathbb{R}}}
\newcommand{\cvars}{\mathsf{cVar}}
\newcommand{\qvars}{\mathsf{qVar}}
\newcommand{\cqstate}{\Delta}
\newcommand{\cqpredicate}{Q}
\newcommand{\cqstates}[2]{\states{#1,#2}}
\newcommand{\cqpstates}[2]{\pstates{#1,#2}}
\newcommand{\cqbndmaps}[2]{\bndmaps{#1,#2}}
\newcommand{\cqbbmaps}[2]{\bbmaps{#1,#2}}
\newcommand{\qbndmaps}[1]{\bndmaps{#1}}
\newcommand{\qbbmaps}[1]{\bbmaps{#1}}
\newcommand{\linearmaps}[1]{\mathcal{L}(#1)}
\newcommand{\qstate}[1]{\states{#1}}
\newcommand{\qpstate}[1]{\pstates{#1}}
\newcommand{\ithen}{\mathbf{then}}
\newcommand{\ielse}{\mathbf{else}}
\newcommand{\ife}{\mathbf{fi}}
\newcommand{\ifte}[3]{\ifb ~ #1  ~ \ithen ~ #2 ~\ielse ~  #3 ~\ife}
\newcommand{\rndarrow}{%
  \stackrel{\,\raisebox{-.25ex}[.25ex]%
   {\tiny $\mathdollar$}}{\raisebox{-.25ex}[.25ex]{$\leftarrow$}}}
\newcommand{\whiled}[2]{\while ~#1  ~ \wdo~ #2 ~ \wod}
\newcommand{\sample}[2]{#1\rndarrow #2}
\newcommand{\assign}[2]{#1 := #2}
\newcommand{\cJ}{\mathcal{J}}
\newcommand{\coupling}[2]{\langle #1, #2 \rangle}
\newcommand{\DualSet}[1]{\mathcal{Y}\rbra*{#1}}
\newcommand{\opt}[1]{\operatorname{Opt}_P^{#1}}
\newcommand{\dopt}[1]{\operatorname{Opt}_D^{#1}}
\newcommand{\qvar}{\textup{qVar}}
\newcommand{\oq}{\overline{q}}
\newcommand{\skp}{\mathbf{skip}}
\newcommand{\guard}{\Box}
\newcommand{\ifb}{\mathbf{if}}
\newcommand{\while}{\mathbf{while}}
\newcommand{\wdo}{\mathbf{do}}
\newcommand{\wod}{\mathbf{od}}
\newcommand{\abort}{\textbf{abort}}
\newcommand{\imeas}{\mathbf{meas}}
\newcommand{\whiledtit}[3]{\while_{#3}\ {#1}\ \wdo\ {#2}}
\newcommand{\sem}[1]{\llbracket #1 \rrbracket} 
\newcommand{\AST}{\ensuremath{\mathsf{AST}}} 
\newcommand{\wprecond}[2]{\textup{wp}[#1](#2)} 
\newcommand{\triple}[3]{\{#1\} \ #2 \ \{#3\}} 
\newcommand{\rtriple}[4]{\triple{#1}{#2 \sim #3}{#4}} 
\renewcommand{\vec}[1]{\ensuremath{\mathbf{#1}}}
\renewcommand{\u}{\vec{u}}
\renewcommand{\v}{\vec{v}}
\newcommand{\w}{\vec{w}}
\newcommand{\shift}[2]{\mathrm{shift}(#1,#2)}
\newcommand{\cstate}{s}
\newcommand{\cqsim}[2]{{(\;\!\!\!|{#1},{#2}|\!)}}
\newcommand{\true}{\mathsf{True}}
\newcommand{\false}{\mathsf{False}}
\newcommand{\RuleRef}[1]{%
{\hyperref[rule:#1]{\textsc{[#1]}}}%
}
\newcommand{\RuleLabel}[1]{%
  {\small\textsc{[#1]}}%
  \label{rule:#1}%
}
\begin{document}

\maketitle
\begin{abstract}

We present sound and complete relational program logics for infinite-dimensional quantum and classical-quantum programs. The logics model assertions as self-adjoint unbounded linear relations, which simultaneously support quantitative and qualitative reasoning. Our main theoretical results include new convergence theorems and infinite-dimensional duality theorems for infinite-dimensional quantum states, which we use to establish completeness.
\end{abstract}

\section{Introduction}
There is a rich landscape of quantum programming languages, see e.g.\,\cite{heim2020quantum}. Languages
may notably differ in their underlying paradigm, e.g.\, functional vs
imperative, or classical vs. quantum control flow. Another difference, 
which is central to this paper, is whether programs operate over
finite-dimensional or (discrete) infinite-dimensional Hilbert spaces,
or equivalently if the underlying set of values is finite or countably
infinite\footnote{Throughout the paper, we only consider separable
infinite-dimensional Hilbert spaces, and simply refer them as
infinite-dimensional Hilbert spaces.}. While developing sound
foundations for the latter setting is significantly more complex,
infinite-dimensional quantum programs are needed to capture many
applications, including some examples of quantum cryptography and 
quantum walks. Yet another difference is whether programs are purely
quantum and only carry quantum computations or are classical-quantum
and carry both classical and quantum computations---note that if the
set of classical values is countable then classical-quantum programs
can be embedded into infinite-dimensional quantum programs. The case
of classical-quantum programs is important, as it provides a natural
formalism to capture many examples in quantum machine learning,
quantum and post-quantum cryptography\footnote{Quantum cryptography is the 
branch of cryptography that uses quantum phenomena to develop 
cryptographic primitives that are (conditionally) secure against 
quantum adversaries. In contrast, post-quantum cryptography develops 
classical, or non-quantum,  cryptographic primitives, that are 
(conditionally) secure against quantum adversaries. In both cases,
conditional security means security down to computational assumptions.}.

The purpose of this work is to develop foundations and program logics
for reasoning about infinite-dimensional quantum and classical-quantum
programs. Specifically, our goal is to build relational program
logics, i.e.\, program logics that can reason about executions of two
programs. Such logics are particularly well-suited to reason about
security of cryptographic schemes and generalization properties of
machine-learning, and have been extensively studied in the context of
probabilistic programs~\cite{BartheGZ09,erhl}, and finite-dimensional
quantum programs~\cite{Unr19a,GHY19,qotl}. However, extending these
logics to the infinite-dimensional setting poses two significant
challenges.

\subparagraph*{Challenge 1: assertion language.}
 A main challenge in quantum program verification is to define an assertion language that captures a large set of properties. For non-relational program logics, the common choice is to adopt
quantitative assertions~\cite{DP06}. Instead, it is common for relational program logics to use qualitative, e.g.\, projective, assertions, as they simplify lockstep reasoning, i.e.\, reasoning
about two programs that have the same control flow and perform the same number of samplings. Unfortunately, qualitative assertions also limit non-lockstep reasoning, i.e.\, reasoning about two programs whose control-flow or sampling alignment differ. Thus, the ideal solution is to combine qualitative and quantitative assertions, as explored in~\cite{erhl} in the probabilistic setting and in~\cite{qotl} for the finite-dimensional quantum setting. However, the approach developed in~\cite{qotl} relies on an ad hoc construction to allow assertions to take infinite values, 
which is required to provide adequate support for lockstep reasoning. Unfortunately, this 
construction requires several ad hoc definitions for basic operations on assertions and it is not clear how to extend it to our more general setting.

\subparagraph*{Challenge 2: complete proof system.}
A main challenge in relational program logic is to define proof systems that are practical, sound and complete. On the one hand, practical proof systems are based on coupling-based, qualitative
logics. On the other hand, qualitative, coupling-based logics are inherently incomplete for non-lockstep reasoning. Recently, \cite{qotl} shows how the tension can be overcome by leveraging duality theorems from quantum optimal transport. Informally, duality theorems provide a means to decompose any relational judgment into an equivalent (universally quantified) conjunction of unary judgments, so that completeness of the relational proof system can be derived from completeness of the non-relational proof system. Unfortunately, duality theorems for quantum optimal transport, see \Cref{sec:relwork}, are 
limited to the finite-dimensional case, and to a subset of assertions that rules out infinite-valued assertions used to support lockstep reasoning.

\paragraph*{Contributions}
This paper proposes the first sound and complete relational logics for infinite-dimensional 
quantum and classical-quantum programs and a general form of assertions. To this end, we make the following contributions.

First, we develop a new theory of assertions based on linear relations~\cite{behrndt2020boundary,cross1998multivalued}. Linear relations provide a uniform 
framework that captures both qualitative and quantitative assertions, and allows to set the values of assertions to infinite values, as required for supporting lockstep reasoning. In order to support our theory of assertions, we introduce carefully crafted notions of trace, L\"owner order, convergence, and use these notions to reestablish classic theorems, e.g.\, the Monotone Convergence Theorem and Fatou's Lemma, which are needed in the meta-theory of our program logics.

Second, we prove duality theorems for infinite-dimensional quantum states and classical-quantum states. Our duality theorems extend prior works in two dimensions: first, they consider a richer notion of quantum states (infinite-dimensional or classical-quantum); second, they consider our more general notion of assertions. The main technical ingredient of the proofs of the theorems is a dimension-independent perturbation bound. Informally, the perturbation bound controls the approximation error of the optimal transport cost when truncating infinite-dimensional states to larger and larger finite-dimensional projections. Thanks to a careful analysis that makes the bound dimension-independent, one can show that duality is preserved when taking the limit from finite-dimensional states to infinite-dimensional states, and to use the finite-dimensional duality theorem of~\cite{qotl} for the base case.

Third, we develop sound and complete program logics for infinite-dimensional quantum programs and classical-quantum programs. Our program logics distill the essence of the completeness results, by featuring a core set of rules. Notably, they use a weakest precondition rule or a lifting rule which allows connections with existing relational and unary logics. Moreover, we demonstrate the applicability of our logics by formalizing two illustrative examples from quantum walks and quantum machine learning.

\paragraph*{Organization of Paper}
\Cref{sec:probprog} and \Cref{sec:finite quantum} are introductory sections that present the essence of prior work on probabilistic and finite-dimensional quantum programs. Then, \Cref{Unbounded_Section} introduces our general notion of assertions based on linear relations and \Cref{sec:duality} establishes new duality theorems for infinite-dimensional quantum states and general assertions. Finally, \Cref{sec:infinite programs} and \Cref{sec:cqprog} leverage the results of \Cref{Unbounded_Section} and \Cref{sec:duality} to provide sound and complete program logics for infinite-dimensional quantum programs and classical-quantum programs. 
\begin{figure}
\begin{equation*}
    \begin{aligned}
      S & ::= \  \skp \\
      & \mid  \assign{x}{e}  \\
      & \mid \sample{x}{\mu}  \\
              & \mid  S_1; S_2  \\
              &\mid \ifte{e}{S_1}{S_2} \\
              &\mid \whiled{e}{S}
\end{aligned}
\begin{aligned}
  S & ::= \  \skp  \\
  & \mid q \coloneqq \u \\
  & \mid \oq \coloneqq U [\oq] \\
  & \mid  S_1; S_2  \\
  &\mid \ifb\ (\guard v\cdot M[\oq] = v \to S_v)\ \ife \\
&\mid \while\ M[\oq]=1\ \wdo\ S\ \wod 
\end{aligned}
\end{equation*}
\caption{\textsf{pWhile} and \textsf{qWhile} languages.  Here $x$ ranges over a finite set $\cvars$ of classical variables, $e$ ranges over expressions, $\mu$ ranges over distribution expressions; $q$ ranges over a finite set $\qvar$ of quantum variables, $\oq$ ranges over a distinct list of $\qvar$, $\u$ ranges over pure states, $U$ ranges over unitaries, $M$ ranges over measurements. Expressions take values over a set $\mathcal{V}$.
  }\label{fig:syntax}
\end{figure}

\section{Probabilistic Programs}\label{sec:probprog}
This section sets the stage for our work, 
by presenting how the duality theorem is used for reasoning about probabilistic programs. All the results in this section are known---sometimes in a different but equivalent form. The sole exception is the completeness theorem for unbounded assertions, which is new.

\subparagraph*{Preliminaries.}
Let $\ereals=\mathbb{R}\cup\{-\infty,+\infty\}$, and for a set $X$, let
$\allmaps{X}=\{X\rightarrow\ereals\}$ denote the set of extended-real-valued functions on $X$. We define the subset of lower-bounded functions $\bbmaps{X}=\{ f\in\allmaps{X}
\mid -\infty < \inf(f)\}$, and the set of bounded functions $\bndmaps{X}=\{ f\in\bbmaps{X} \mid
\sup(f) < +\infty \}$.
The set $\allmaps{X}$ is equipped with a partial order defined pointwisely: for $f_1, f_2\in \allmaps{X}$, we write $f_1\sqsubseteq f_2$ if $f_1(x)\le f_2(x)$ for all $x\in X$.
Given functions $f_1\in \allmaps{X_1}$ and $f_2\in \allmaps{X_2}$, we define their sum $f_1\boxplus f_2 \in \allmaps{X_1\times X_2}$ by $(f_1\boxplus f_2)(x_1, x_2)\triangleq f_1(x_1)+f_2(x_2)$.

We let $\distr(X)$ denote the set of discrete distributions over a set
$X$. For $\mu\in \distr(X)$, and $f\in \bbmaps{X}$, the expectation can always be defined by $\E_{\mu}[f] = \sum_{x\in X} \mu(x) f(x)$.
Moreover, we define the set of couplings $\mu_1\in\distr(X_1)$
and $\mu_2\in\distr(X_2)$ as the unique set
$\couplings{\mu_1}{\mu_2}\subseteq \distr(X_1\times X_2)$ such that
$\mu\in\couplings{\mu_1}{\mu_2}$ iff for all $x_1,x_2$, $\mu_1(x_1)=
\sum_{x_2} \mu(x_1,x_2)$ and $\mu_2(x_2)= \sum_{x_1}
\mu(x_1,x_2)$.

\subparagraph*{Duality theorem.}
Optimal transport~\cite{Vil03,Villani08,RR98} aims to minimize the transport cost between two distributions. A main result in optimal transport is the Kantorovich-Rubinstein theorem~\cite{Kan42}, which states that the
optimal transport cost (defined on the left of the equation) coincides with the optimal cost of the dual problem (on the right of the equation).
\begin{theorem}[Kantorovich-Rubinstein Duality Theorem]
\label{thm:kr-duality}
Let $Q\in\bbmaps{X_1\times X_2}$, and let $\dualset{Q}\subseteq
  \bndmaps{X_1}\times\bndmaps{X_2}$ such that
$(Q_1,Q_2)\in \dualset{Q}$ iff
      $Q_1 \boxplus Q_2\sqsubseteq Q$. Then
\[ \inf_{\mu\in \couplings{\mu_1}{\mu_2}} \E_{\mu}[Q] =\sup_{(Q_1,Q_2)\in\dualset{Q}}
  \{\E_{\mu_1}[Q_1]+\E_{\mu_2}[Q_2]\}. \]
\end{theorem}
In the remainder of the section, we use the duality theorem to establish soundness and completeness of relational program logics. 

\subparagraph*{Programs and assertions.}
Programs are written in the \textsf{pWhile} language; its syntax is defined in \Cref{fig:syntax}. Programs operate over 
states, where the set of states is defined as
$\cstates=\cvars \rightarrow\mathcal{V}$. Here $\cvars$ denotes a finite set of variables and $\mathcal{V}$ denotes a \emph{countably infinite} set of values. 

Each program $S$ has an interpretation $\sem{S}: \cstates \rightarrow \sdistr(\cstates)$. We say that $S$ is almost surely terminating, or AST,
if the output of a computation is always a full distribution, i.e.\, $\sem{S}:\cstates \rightarrow \distr(\cstates)$.

Relational assertions are elements of $\bbmaps{\cstates\times\cstates}$, 
i.e.\, bounded by below maps from pairs of states to \emph{extended} reals. Relational assertions are naturally ordered by $\sqsubseteq$,
which is the point-wise lifting of the usual order on extended reals.
Note that our choice of assertions is less standard than the usual \emph{non-negative} 
maps from pairs of states to (extended) reals. However, the two notions are equivalent for AST programs via a shift argument, which will be given below after the program logic is introduced.

\subparagraph*{Program logic.}
Judgments are of the form $\rtriple{P}{S_1}{S_2}{Q}$,
\textbf{where $S_1,S_2$ are AST programs}\footnote{The restriction to AST programs applies to all logics in the paper.} and $P,Q\in\bbmaps{\cstates\times\cstates}$.
We say that $\rtriple{P}{S_1}{S_2}{Q}$ is valid, written $\models
\rtriple{P}{S_1}{S_2}{Q}$, if for every $s_1,s_2\in\cstates$, there
exists $\mu\in \couplings{\sem{S_1}~s_1}{\sem{S_2}~s_2}$ such that
$\E_{\mu}[Q] \leq P(s_1,s_2)$.

\begin{remark}[Bounded by below and non-negative assertions]
  By definition of validity, we have: for every fixed constant $M$,  if $S_1, S_2$ are AST, then $\vDash \rtriple{P}{S_1}{S_2}{Q}$ iff $\vDash \rtriple{P+M}{S_1}{S_2}{Q+M}$. Therefore, we can shift all bounded by below assertions to non-negative ones.
\end{remark}

\Cref{fig:prob:rules} presents the rules of the logic. The \RuleRef{duality}
rule internalizes the duality theorem. The \RuleRef{conseq} rule is the usual
rule for consequence, and is used for strengthening pre-conditions or
weakening post-conditions. Finally, the \RuleRef{wp} rule states that one can prove
the validity of the weakest precondition, defined as
$\wprecond{S}{Q}=\lambda s. \E_{\sem{S}~s}[Q]$.
\begin{theorem}[Soundness and Completeness of core rules]\label{thm:prob:complete}
A judgment is valid iff it can be derived with the rules \RuleRef{duality},
\RuleRef{conseq}, and \RuleRef{wp}.
\end{theorem}
\begin{proof}
Soundness follows from the definition of weakest precondition and from
the duality theorem. For completeness, assume that $\models
\rtriple{P}{S_1}{S_2}{Q}$. By definition of validity and of weakest
precondition, and by linearity of expectations and \Cref{thm:kr-duality}, 
we have $\wprecond{S_1}{Q_1}\boxplus
\wprecond{S_2}{Q_2}\sqsubseteq P$ for every $(Q_1,Q_2)\in\dualset{Q}$.  By the \RuleRef{wp} and \RuleRef{conseq} rules, it
follows that $\vdash \rtriple{P}{S_1}{S_2}{Q_1\boxplus Q_2}$. One concludes from
applying the \RuleRef{duality} rule.
\end{proof}

\begin{figure}[ht]
  
  $
    \begin{prooftree}
      \hypo{ P' \sqsupseteq P}
      \hypo{ \vdash \rtriple{P}{S_1}{S_2}{Q}}
       \hypo{ Q \sqsupseteq Q'}
      \infer3{\vdash \rtriple{P'}{S_1}{S_2}{Q'}}
    \end{prooftree}
  $\quad
    \RuleLabel{conseq} \\ [0.5em]
  $
    \begin{prooftree}
      \hypo{ 
        \forall (Q_1, Q_2)\in\dualset{Q},\vdash
          \rtriple{P}{S_1}{S_2}{Q_1 \boxplus Q_2}
     }
      \infer1{\vdash \rtriple{P}{S_1}{S_2}{Q}}
    \end{prooftree}
  $ \quad
    \RuleLabel{duality} \\ [0.5em]
  $
    \begin{prooftree}
      \hypo{
    Q_1,Q_2\in\bndmaps{\cstates}
      }
      \infer1{\vdash \rtriple{\wprecond{S_1}{Q_1}\boxplus \wprecond{S_2}{Q_2}}{S_1}{S_2}{
          Q_1 \boxplus Q_2}}
    \end{prooftree}
  $\quad
    \RuleLabel{wp}
 \caption{Core proof rules}
  \label{fig:prob:rules}
\end{figure}

\subparagraph*{Lifting.}
An alternative is to introduce a lifting rule that connects unary reasoning to relational reasoning. That is, we can consider a unary judgment of the form  $\triple{P}{S}{Q}$, where $P,Q\in\bndmaps{\cstates}$. A unary judgment $\triple{P}{S}{Q}$ is valid
iff for every $s\in\cstates$, $\E_{\sem{S}s}[Q] \leq P(s)$. Then, one can add a lifting rule \RuleRef{lift} that combines two unary judgments into a relational one, and a unary rule \RuleRef{wp-u} for weakest pre-conditions:

\begin{center}
  \begin{minipage}{0.4\textwidth}
$\begin{prooftree}
      \hypo{
    Q\in\bndmaps{\cstates}
      }
      \infer1{\vdash \triple{\wprecond{S}{Q}}{S}{Q}}
    \end{prooftree}$\quad \RuleLabel{wp-u}
\end{minipage}
\hfill
\begin{minipage}{0.55\textwidth}
$\begin{prooftree}
      \hypo{\vdash \triple{P_1}{S_1}{Q_1}}
     \hypo{\vdash \triple{P_2}{S_2}{Q_2}}
      \infer2{\vdash \rtriple{P_1 \boxplus P_2}{S_1} {S_2}{Q_1 \boxplus Q_2}}
    \end{prooftree}
$\quad \RuleLabel{lift}
\end{minipage}
\end{center}

\subparagraph*{One-sided and two-sided rules.}
In practice, relational logics do not have \RuleRef{wp} as a core rule. Rather, they rely on two-sided rules, in which the two programs have the same top-level construct and execute in lock-step, and on one-sided rules, in which one program uses a given top-level construct, and the other one is either arbitrary or $\skp$. One recovers completeness by proving by induction on the structure of the program that the \RuleRef{wp} rule is derivable from the one-sided rules. The one-sided and two-sided rules of our logic are standard~\cite{erhl}.

\newcommand{\smalleqn}[1]{\mbox{$#1$}}

\subparagraph*{Running example (probabilistic case).}
For every $\phi\subseteq\cstates\times\cstates$ and $Q\in\bbmaps{\cstates\times\cstates}$, let 
$\phi \mid Q \in \bbmaps{\cstates\times\cstates}$
be defined by the clause:
$$(\phi \mid Q)(s,s') = \left\{\begin{array}{ll} Q(s,s') & \mbox{if }
(s,s')\in \phi \\ +\infty & \mbox{otherwise.} \end{array}\right.$$
Then, $\vdash \rtriple{ \top \mid 0 }{ \textsc{Bern}}{\textsc{Unif}^2}{x_1 = x_2\land x'_2 \mid 0}$ (see \Cref{fig:placeholder}), asserting that two programs yield the same distributions on $x_1$ and $x_2\land x'_2$.
The two programs are not aligned w.r.t.\, sampling: \textsc{Bern} performs one sampling and \textsc{Unif}$^2$ performs two samplings. For such examples, the \RuleRef{duality} rule is required. 
The derivation applies \RuleRef{conseq}, \RuleRef{wp}, and \RuleRef{duality}, reducing the goal to the entailment
\begin{align*}
0 \sqsupseteq\ & \wprecond{\textsc{Bern}}{f_1}\boxplus \wprecond{\textsc{Unif}^2}{f_2}
\end{align*}
for every $f_1, f_2$ such that $f_1\boxplus f_2\sqsubseteq x_1 = x_2\land x'_2 \mid 0$, i.e., $f_1(x_1) + f_2(x_2,x_2') \le 0$ if $x_1 = x_2\land x'_2$. Plugging in the concrete weakest preconditions, we conclude since the right-hand side simplifies to
$$ \lambda x_1.\big(\smalleqn{\frac{3}{4}}f_1(0) + \smalleqn{\frac{1}{4}}f_1(1)\big)\boxplus 
     \lambda x_2x_2'.\big(\smalleqn{\frac{1}{4}\sum_{ij}f_2(i,j)}\big) \le 0.$$

\begin{figure*}
    \centering
    \small

    \[
    \begin{array}{@{}c@{\qquad}c@{\qquad}c@{}}
    \begin{alignedat}{2}
        \textsc{Bern} \triangleq\quad
        & \sample{x_1}{\mathsf{Bern}\!\left(\frac{1}{4}\right)}
    \end{alignedat}
    &
    \begin{alignedat}{2}
        \textsc{Unif}^2 \triangleq\quad
        & \sample{x_2}{\{0,1\}}; \\
        & \sample{x'_2}{\{0,1\}}
    \end{alignedat}
    &
    \begin{alignedat}{2}
        \textsc{QBern} \triangleq\quad
        & q_1 \coloneqq \frac{\sqrt{3}}{2}|0\>+\frac{1}{2}|1\>; \\
        & \ifb\ (M[q_1]=1)\ \ithen\ \skp;
    \end{alignedat}
    \end{array}
    \]

    \vspace{-0.4em}

    \[
    \begin{array}{@{}c@{\qquad\qquad}c@{}}
    \begin{alignedat}{2}
        \textsc{QUnif}^2 \triangleq\quad
        & q_2 \coloneqq \frac{1}{\sqrt{2}}(|0\>+|1\>); \\
        & \ifb\ (M[q_2]=1)\ \ithen \\
        & \quad q_2' \coloneqq \frac{1}{\sqrt{2}}(|0\>+|1\>); \\
        & \quad \ifb\ (M[q_2']=0)\ \ithen\ q_2 \coloneqq |0\>
    \end{alignedat}
    &
    \begin{alignedat}{2}
        \textsc{CQUnif}^2 \triangleq\quad
        & q_2 \coloneqq \frac{1}{\sqrt{2}}(|0\>+|1\>); \\
        & x_2 \gets \imeas\; M[q_2]; \\
        & q_2' \coloneqq \frac{1}{\sqrt{2}}(|0\>+|1\>); \\
        & x_2' \gets \imeas\; M[q_2'];
    \end{alignedat}
    \end{array}
    \]

    \caption{Bernoulli sampling and uniform boolean sampling, and their quantum analogue and classical-quantum analogue. Here, we take $M$ as the standard computational basis measurement.}
    \label{fig:placeholder}
\end{figure*}

\subparagraph*{Comparison with ERHL~\cite{erhl}.}
Our logic is closely related to ERHL~\cite{erhl}, although here are some differences between the two proof systems. First, ERHL does not assume programs to be almost-surely terminating (AST); rather, AST assumptions are added as side-conditions as required for the soundness of the rules. Note that~\cite{erhl} uses a weaker notion of coupling, called $\star$ or partial couplings, to accommodate reasoning about programs that do not have the same probability of termination. Second, ERHL only considers non-negative assertions. As a consequence, the duality theorem needs to be adapted so that $\dualset{Q}$ only contains positive functions---this can be achieved with a little bookkeeping, as shown in~\cite{qotl}. However, note that the duality rule from~\cite{qotl} is restricted to bounded post-conditions. \Cref{thm:prob:complete} shows that this restriction can be lifted by invoking a
more general duality theorem.

\section{Finite-Dimensional Quantum Programs}
\label{sec:finite quantum}
This section continues to set the stage for the main results by showing how the duality theorem is used for reasoning about finite-dimensional quantum programs, that is, programs operating on a finite-dimensional Hilbert space such as those of qubits.
In this section, we work exclusively with bounded assertions, which ensures that all expectations remain finite.

\subparagraph*{Preliminaries.}

Let $\mathcal{H}$ be a finite-dimensional complex Hilbert space with inner product $\ave{\cdot,\cdot}$ and induced norm $\|\u\|\triangleq\sqrt{\ave{\u,\u}}$. Let $\linearmaps{\cH}$ denote the set of linear operators on $\mathcal{H}$. We define the operator norm as $\|A\|\triangleq \sup_{\|u\|=1} \|Au\|$ and the trace as $\tr(A)\triangleq \sum_{\u} \ave{\u,A\u}$ over an orthonormal basis $\{\u\}$ of $\cH$. For composite systems, the partial trace $\tr_1: \linearmaps{\cH_1 \otimes \cH_2} \to \linearmaps{\cH_2}$ is the unique linear map satisfying $\tr_1(A \otimes B) = \tr(A) B$ for all $A \in \linearmaps{\cH_1}$ and $B \in \linearmaps{\cH_2}$, with $\tr_2$ defined similarly. Intuitively, the partial trace returns the marginal operator of a subsystem.

$A\in \linearmaps{\cH}$ is called self-adjoint if $\ave{\u,A\u} = \ave{A\u,\u}$ for all $\u\in \cH$.
We reload the notation $\qbndmaps{\cH}$ for the set of \emph{self-adjoint} operators on $\mathcal{H}$\footnote{Since $\cH$ is finite-dimensional, any $A \in \linearmaps{\cH}$ is bounded.}. We use $\sqsubseteq$ to denote the L\"{o}wner order on $\qbndmaps{\cH}$, i.e., for $A, B\in \qbndmaps{\cH}$, $A\sqsubseteq B$ iff $B-A$ is positive semidefinite.
For $A\in \qbndmaps{\cH_1}$ and $B\in \qbndmaps{\cH_2}$, we define $A\boxplus B\in \qbndmaps{\cH_1\otimes \cH_2}$ as  $A\boxplus B \triangleq A\otimes I + I\otimes B$. 

We use $\qstate{\cH}$ to denote the set of density operators on $\cH$, i.e., positive semidefinite operators with trace $1$,
and $\qpstate{\cH}$ for partial density operators with trace $\leq 1$.
Moreover, we write $\couplings{\rho_1}{\rho_2}\subseteq \qstate{\mathcal{H}_1\otimes \mathcal{H}_2}$
for the set of couplings of $\rho_1\in\qstate{\mathcal{H}_1}$ and $\rho_2\in\qstate{\mathcal{H}_2}$,
i.e., all $\rho$ such that $\tr_2(\rho)=\rho_1$ and $\tr_1(\rho)=\rho_2$.

\subparagraph*{Duality theorem.}
Quantum optimal transport~\cite{jan24optimal} aims to minimize the transport cost between two density operators. 
The following theorem generalizes \Cref{thm:kr-duality} to finite-dimensional quantum systems with bounded cost. 
\begin{theorem}[Kantorovich-Rubinstein Duality Theory for Finite Dimensional Quantum Systems~\cite{qotl}]%
\label{thm:quantum-Kantorovich-duality-finite-dim}
     Let $\cH_1$ and $\cH_2$ be two finite-dimensional Hilbert spaces, 
    $\rho_1\in \qstate{\cH_1}$, $\rho_2\in \qstate{\cH_2}$ be two density operators, and $Q\in \qbndmaps{\cH_1\otimes \cH_2}$. 
    Let $\dualset{Q}\subseteq \qbndmaps{\cH_1}\times \qbndmaps{\cH_2}$ such that $(Q_1, Q_2)\in \dualset{Q}$ iff $ Q_1\boxplus Q_2 \sqsubseteq Q$.
    Then, 
    \begin{equation*}
      \inf_{\rho\in\couplings{\rho_1}{\rho_2}} \tr (Q\rho) 
    = \sup_{(Q_1, Q_2)\in \dualset{Q}} \tr(Q_1\rho_1) + \tr(Q_2\rho_2).
    \end{equation*}
\end{theorem}

This duality theorem allows us to reason about couplings without explicitly constructing joint states, and thus serves as the semantic justification of the \RuleRef{duality} rule in our relational program logic.

\subparagraph*{Programs and Assertions.}
Programs are written in the \textsf{qWhile} language, whose syntax is given in \Cref{fig:syntax}. 
Programs operate 
over states, where the set of states is defined 
as $\pstates{\mathcal{H}}$, and
$\cH =\bigotimes_{q\in \qvars} \cH_q (\mathcal{V}_{q})$. Here, $\qvars$ denotes a finite set of quantum variables, $\mathcal{V}_q$ is the set of possible classical values for $q$, and $\cH(\mathcal{V})$ is the Hilbert space spanned by the orthonormal basis $\{|v\>: v \in \mathcal{V}\}$. In this section, we assume that all $\mathcal{V}_q$ are finite; it follows that $\mathcal{H}$ is finite-dimensional.

Each program $S$ has an interpretation
$\sem{S}\in \mathcal{QO}(\mathcal{H})$, i.e., a completely positive trace non-increasing map from $\qpstate{\mathcal{H}}$ to $\qpstate{\mathcal{H}}$. We say that $S$ is almost surely terminating, or AST, if $\sem{S}$ is trace-preserving, i.e., $\sem{S}$ : $\qstate{\mathcal{H}} \rightarrow \qstate{\mathcal{H}}$.

Relational assertions are elements of $\qbndmaps{\mathcal{H}\otimes\mathcal{H}}$, i.e.\, the self-adjoint (aka, Hermitian, for finite-dimensional space) operators on $\mathcal{H}\otimes\mathcal{H}$. Relational assertions are naturally ordered by L\"{o}wner order $\sqsubseteq$.
Note that our choice of assertions is less standard than using bounded positive operators.
However, the two notions are equivalent for AST programs, since any assertion can be shifted to a positive one, and the expectation of the offset behaves uniformly in pre- and post-expectations.

\subparagraph*{Program logic.}
Judgments are of the form $\rtriple{P}{S_1}{S_2}{Q}$, where $S_1,S_2$ are AST programs and 
$P,Q\in\qbndmaps{\mathcal{H}\otimes\mathcal{H}}$.
We say that $\rtriple{P}{S_1}{S_2}{Q}$ is valid, written $\models\rtriple{P}{S_1}{S_2}{Q}$, 
 if for every $\rho \in \mathcal{D}^1(\mathcal{H} \otimes \mathcal{H})$, 
  there exists a coupling $\sigma\in \couplings{\sem{S_1}(\tr_2(\rho))}{\sem{S_2}(\tr_1(\rho))}$ such that
  $\tr(P \rho) \geq \tr(Q \sigma)$.

The core rules are formally the same as those in \Cref{fig:prob:rules}, with the notations $\sqsubseteq$, $\dualset{}$ and $\boxplus$ redefined as above. The only exception is the weakest precondition $\wprecond{S}{Q} \in \qbndmaps{\mathcal{H}}$, which must be redefined independently and satisfies $\tr(\wprecond{S}{Q}\rho) = \tr(Q\sem{S}(\rho))$ for all $\rho\in\qstate{\mathcal{H}}$. 
\begin{theorem}[Soundness and Completeness of core rules for qWhile, cf.~\cite{qotl}]\label{thm:qotl-complete}
A judgment is valid iff it can be derived with the rules \RuleRef{duality},
\RuleRef{conseq}, and \RuleRef{wp}.
\end{theorem}
Similar to the probabilistic setting, the \RuleRef{wp} rule reduces split post-condition to unary weakest-precondition reasoning, and the \RuleRef{duality} rule converts the existence of a coupling satisfying $Q$ into universally quantified split post-condition justified by
\Cref{thm:quantum-Kantorovich-duality-finite-dim}.

\subparagraph*{Comparison with QOTL~\cite{qotl}.}

QOTL uses extended self-adjoint operators, namely self-adjoint operators with a possible
$+\infty$-eigenspace, as assertions to capture also qualitative properties of quantum states. However, its duality rule is restricted to bounded assertions, and consequently the  completeness result coincides with \Cref{thm:qotl-complete}.
A key motivation for introducing such extended assertions in QOTL is the development of useful two-sided proof rules based on measurement conditions.
In this setting, extended self-adjoint operators are used as predicates: the $+\infty$-eigenspace serves as the qualitative component that enforces lockstep execution of two programs, while the finite part forms the quantitative component describing the properties of interest.
Finally, QOTL also introduces partial couplings rather than full couplings, which allows to establish sound but not complete one-sided and two-sided rules that do not always require programs to be AST.

\section{Linear Relations}\label{Unbounded_Section}

This section develops the analytic foundations required for extended operators on infinite-dimensional quantum systems.
In particular, we introduce a framework that supports the approximation and convergence arguments needed for later duality theorems.
From now on, we assume that the Hilbert space $\cH$ is separable, i.e., it admits a countable orthonormal basis.

\subsection{Challenges with Infinite Dimensions}

We begin by recalling basic notions that remain consistent with the finite-dimensional setting.
For bounded operators (with respect to operator norm), self-adjointness, positivity, and hence the L\"owner order are defined in the same way.
We write $\qbndmaps{\cH}$ for the set of self-adjoint bounded operators.
Among bounded operators, trace-class operators---those with a finite, basis-independent trace---closely parallel the finite-dimensional case; for example, the partial trace is well defined on them.
(Partial) density operators are positive trace-class operators and are therefore defined in the same way.

\subparagraph*{Infinity.}
There are naturally two distinct sources of ``infinity'' that arise in the infinite-dimensional
setting:
\begin{itemize}
    \item \emph{Infinity from approximation (unboundedness).}
    This occurs when a sequence of bounded values grows without bound.
    For example, the operator $A = \sum_n n \ket{n}\!\bra{n}$ has countably many eigenvalues, but an infinite operator norm.
    This is the standard notion of an \emph{unbounded operator} in functional analysis, characterized by a dense domain.
    \item \emph{Infinity from values (singularity).}
    This represents a ``hard'' constraint, corresponding to the value $+\infty$.
    For instance, an assertion requiring a state to be orthogonal to a subspace $S$ assigns infinite cost to any component supported on $S$.
    Such constraints correspond to operators with an eigenspace associated with the eigenvalue $+\infty$.
\end{itemize}

Unboundedness captures growing quantitative costs, such as the expected distance in an infinite quantum walk, while $+\infty$-valued assertions encode qualitative constraints~\cite{qotl}. A unified treatment of these two forms of infinity is essential to achieve the desired expressiveness and completeness of program logics. This unified treatment is rather simple in the probabilistic setting. Intuitively,  the two sources of infinity are  unified by pointwise lifting the codomain of assertions to the extended reals: for $f,g\in\bbmaps{X}$, $(f+g)(x)$ is finite if and only if both $f(x)$ and $g(x)$ are
finite. However, in the quantum setting, this simple picture breaks down as explained below.

\subparagraph*{Challenge 1: failure of operator sums.}
The naive analogue of pointwise addition is the operator sum
\[
(A+B)\u \triangleq A\u +B\u,
\]
which is defined only when both $A\u$ and $B\u$ exist.
However, for self-adjoint operators, addition need not preserve self-adjointness~\cite{reed1972methods}, and the spectral decomposition theorem may fail.


This suggests shifting attention to quadratic forms, which directly compute the expectations: for bounded $A$, $\mathfrak{t}_A(\u)=\langle\u,A\u\rangle$.
At this level, addition is pointwise defined by
$\mathfrak{t}_{A+B}=\mathfrak{t}_A+\mathfrak{t}_B$, 
which remains well-behaved even for infinite-valued quadratic forms.
Notably, $\mathfrak{t}_A(\u)$ may be finite even when $A\u$ is undefined.

\subparagraph*{Challenge 2: representation of quadratic forms.}
While quadratic forms correctly capture both unbounded growth and $+\infty$-valued singularity,
they are abstract and lack a direct algebraic representation.
Standard unbounded operators must be densely defined and therefore cannot encode value infinity.
QOTL~\cite{qotl} addresses this by separating finite operators from infinite projections, but this
leads to ad hoc definitions of algebraic operations, where interactions between finite and infinite
parts must be specified manually.

To reconcile these issues, we adopt the formalism of \emph{linear relations}, which represent
operators as linear subspaces of $\cH\times\cH$.
This framework uniformly accommodates unbounded growth and value infinity; moreover, self-adjoint linear relations admit a spectral theorem with values in $\ereals$, providing a principled treatment of infinite spectral parts.
This formalism serves primarily as a unifying representation rather than introducing additional semantic structure.

\subsection{Definitions and Basic Properties}

\subparagraph*{Preliminaries}
A linear relation is a subspace (a.k.a. graph; not necessarily closed) of $\mathcal{H} \times \mathcal{H}$.
The adjoint of a linear relation $A$, denoted $A^*$, is the subspace:
\[
    A^* \triangleq \{ (\u, \v) \in \mathcal{H} \times \mathcal{H} \mid \forall\, (\u', \v') \in A, \langle \v, \u' \rangle = \langle \u, \v' \rangle \}.
\]
A linear relation $A$ is \textit{self-adjoint} if $A = A^*$. It is known that self-adjointness automatically ensures the relation is closed \cite[Proposition 1.3.2]{behrndt2020boundary}.
We say a self-adjoint linear relation $A$ is bounded below by $m \in \mathbb{R}$ if $\<\u,\v\>\geq m\<\u,\u\>$ holds for any pair $(\u,\v)\in A$. If $m\geq 0$, then $A$ is called positive, and it admits the square root, denoted $\sqrt{A}$, the unique positive linear relation such that
$A = \{(\u,\v)\mid (\u,\w)\in \sqrt{A}\text{~and~}(\w,\v)\in \sqrt{A}\}$~\cite[Theorem 1.5.9]{behrndt2020boundary}.
For a linear relation $A$ bounded below by $m$, we can \emph{shift} $A$ to a positive linear relation: $\shift{A}{m}\triangleq\{(\u,\v-m\u)\mid (\u,\v)\in A\}$.
In the following, we only focus on bounded-below self-adjoint linear relations, or simply linear relations, denoted by $\bbmaps{\cH}$. 

\newcommand{\cl}[1]{%
  \overline{\smash{#1}\vphantom{i}}\vphantom{#1}%
}

Now, we explain how linear relations capture infinity.
Given $A\in\bbmaps{\cH}$ bounded below by $m$, let $\tdom(A)$ be the finite part, known as the quadratic domain, and $\mul(A)$ be the multivalued part:
\begin{align*}
    \tdom(A)& \triangleq \{\u\in\cH:~\exists\, \v\in\cH~\text{s.t.~}(\u,\v)\in \sqrt{\shift{A}{m}}\};\\
    \mul(A)& \triangleq \{\v\in\cH:~(0,\v)\in A\}.
\end{align*}
$\tdom(A)$ is the subspace of states with finite expectations, but is not necessarily closed under the usual norm (i.e., the norm induced by the inner product). Infinity is captured in two ways: the unboundedness is encoded in the difference $\cl{\tdom(A)} \backslash \tdom(A)$ where $\cl{\,\cdot\,}$ denotes the closure, while the singularity corresponds to $\mul(A)$ which is a closed subspace.
According to \cite[Theorem 1.5.1]{behrndt2020boundary}, the domain is orthogonal to the multivalued part, i.e., $\tdom(A) \perp \mul(A)$, and the orthogonal decomposition $\cH = \cl{\tdom(A)} \oplus \mul(A)$ holds. This allows us to associate $A$ (bounded below by $m$) with a semi-bounded closed quadratic form (or simply quadratic form) $\mathfrak{t}_{A}$ :
\begin{equation*}
    \mathfrak{t}_{A}(\u) \triangleq \left\{
    \begin{aligned}
    &\<\v,\v\>+m\<\u,\u\>,&& \text{if\,} \left(\u,\v\right)\in \sqrt{\shift{A}{m}}\text{\,\&\,}\v\in\mul(A)^{\perp},\\
    &+\infty,&&\text{otherwise}
    \end{aligned}
    \right.
\end{equation*}
which is automatically well-defined due to the uniqueness of such $\v$ if $\u \in \tdom(A)$.
The quadratic form treats the two sources of infinity uniformly, by assigning the value $+\infty$ to both cases, while reserving finite values strictly for $\tdom(A)$.
Kato's First Representation Theorem~\cite{kato2013perturbation,behrndt2020boundary} legitimizes this unification by establishing a one-to-one correspondence between semi-bounded closed quadratic forms and linear relations, thereby giving the algebraic operations of linear relations by point-wise lifting on their associated quadratic forms. This additionally suggests treating both sources of infinity (and in fact, infinite expectation has only these two sources) uniformly and indistinguishably, from the perspective of  expectation.

The spectral theorem is another key property that allows a linear relation to be identified with a spectral measure on the extended reals. This viewpoint is particularly convenient for defining functional calculus, such as giving the explicit form of square roots, and for formulating convergence theorems.
\begin{proposition}[Spectral theorem, \cite{schmudgen2012unbounded,behrndt2020boundary}]\label{prop:spectral-thm}
    There is a one-to-one correspondence between a linear relation $A\in\bbmaps{\cH}$ bounded below by $m$ and spectral measures $E_A$ on the interval $[m,+\infty]$ in the extended real line. The correspondence is established via the quadratic form, i.e., for any $\u \in \cH$, its quadratic form value is given simply by the moment of the spectral measure:
    $$\mathfrak{t}_A(\u) = \int_{[m,+\infty]} \lambda \, d\<\u, E_A(\lambda)\u\>.$$
\end{proposition}

The spectral theorem provides an alternative way to understand infinity: the projection onto the infinite eigenspace is exactly the projection onto the multivalued part of $A$, i.e., $E_A(\{+\infty\}) = \mul(A)$; the integral on the finite part, $\int_{[m,+\infty)} \lambda \, d\<\u, E_A(\lambda)\u\>$ recovers the quadratic form of standard unbounded operators, i.e., the finite part together with unbounded growth.

With all these basic ingredients, we are ready to extend some commonly used operations for linear relations. The first one is the extended L\"owner order, which is ordered by expectations ranging over all states:

\begin{definition}[Extended L\"owner Order]
    \label{def:extended_lowner_order_Text}
    Let $A, B$ be two linear relations. We define the partial order $A \sqsubseteq B$ via their associated quadratic forms:
    \[ A \sqsubseteq B \iff \mathfrak{t}_A(\u) \le \mathfrak{t}_B(\u), \quad \forall\, \u \in \mathcal{H}. \]
    This inequality holds pointwise on the extended real line $\mathbb{R}\cup\{+\infty\}$.
\end{definition}
Note that $A \sqsubseteq B$ implies the inclusion of the domains $\tdom(B) \subseteq \tdom(A)$. 
Recalling \Cref{prop:spectral-thm}, we can also define $A \sqsubseteq B$ based on the spectral measures.
Furthermore, we can show that $(\bbmaps{\cH},\sqsubseteq)$ indeed forms an $\omega$-complete partial order.

Next we turn to the expectation (i.e., trace) of linear relations acting on density operators, which is defined via the standard measure-theoretic formulations of quantum mechanics (see e.g., \cite{holevo2011probabilistic, reed1972methods}):
\begin{definition}[Extended Trace]\label{ExTr}
    We define the extended trace $\Tr:\bbmaps{\cH}\times\pstates{\cH}\rightarrow \bR\cup\{+\infty\}$ as:


    \[
    \Tr(A\rho)= \int_{[m,+\infty]} \lambda \, d\tr(E_A(\lambda)\rho),
    \]
    if $A$ is lower bounded by $m\in\mathbb{R}$. The
    $\tr(\cdot)$ on the right-hand side denotes the standard trace, noting that $E_A(\lambda)\rho$ is in fact a trace-class operator.
\end{definition}
The extended trace can also be equivalently defined from the quadratic form, given the spectral decomposition of $\rho$. The extended trace remains finite if and only if both of the following conditions hold:
\begin{enumerate}
     \item The support of $\rho$ is in $\tdom(A)$, and
    \item The sum of the quadratic form values converges. 
\end{enumerate}
Intuitively, this formalizes a \textit{support check}: any overlap with the ``infinite energy'' subspace results in infinite cost, while for valid states, it reduces to the standard expectation value.


The last construction is the finite truncation of a linear relation, which provides a bounded approximation of every linear relation, including those that are unbounded or with value infinity.
Suppose $A=\int_{[m,+\infty]}\lambda  dE_{A}(\lambda)$ and $n\in\bN$ with $n\ge m$. For any $\u\in\cH$, we denote
\[
\trunc{A}{n}\u=\int_{m}^{n} \lambda  dE_{A}(\lambda)\u+\int_{[n,+\infty]} n dE_{A}(\lambda)\u.
\]
Note that on every $E_A((k,+\infty])$ we define $\trunc{A}{n}$ to have spectrum $n$ if $n\le k$, thus $\trunc{A}{n}$ is a bounded self-adjoint \textit{operator}, i.e., its operator norm is bounded by $\max(|m|,n)$. 
Note that our truncation method differs from that of physicists, which focuses on unbounded growth rather than
value singularities and typically yields a finite-dimensional subspace.
In contrast, we only require the spectrum of the truncated operator to be a bounded set.

\subparagraph*{Degenerate cases of linear relations.}
As a consistency check, we show how the notion of linear relations introduced above degenerates into
standard operator-based semantics in familiar cases.

\begin{enumerate}
    \item \emph{Bounded operators.}
    If $A$ is a bounded self-adjoint operator, the associated linear relation is its graph:
    $(\u,\v)\in A$ if and only if $\v = A\u$.
    In this case, the spectrum is bounded and contained in $\mathbb{R}$,
    the extended trace coincides with the standard trace $\tr(A\rho)$, and no domain issues arise.

    \item \emph{Unbounded operators.}
    If $A$ is a (densely defined) unbounded self-adjoint operator, the corresponding linear relation is again its graph:
    $(\u,\v)\in A$ if and only if $\v = A\u$,
    where only those $\u$ that belong to the operator domain, i.e., those for which $A\u\in\cH$ is well defined, are collected.
    Thus, unbounded operators correspond exactly to single-valued linear relations with trivial multivalued part.

    \item \emph{Value singularities (hard constraints).}
    In QOTL~\cite{qotl}, an extended operator is represented as a bounded operator
    $A_{\mathrm{fin}}$ together with a projection (or closed subspace) $\mathsf{P}_\infty$ corresponding to
    the value infinity.
    This semantics is naturally captured by a multivalued linear relation defined by
    $(\u,\v)\in A$ if and only if $\u\in \mathsf{P}_{\infty}^{\perp}$ and
    $\v = A_{\mathrm{fin}}\u + \w$ and $\w\in \mathsf{P}_{\infty}$ (i.e., allowing direct sum of $A_{\mathrm{fin}}\u$ and $\mathsf{P}_{\infty}$).
    The multivalued part of the relation is precisely $\mathsf{P}_{\infty}$, corresponding to the
    $+\infty$-eigenspace.
\end{enumerate}

\subparagraph*{Guarded linear relations.}
Finally, as an example, we introduce the notion of \emph{guarded linear relations}, which serves as the quantum analogue to guarded assertions in probabilistic programming.
In the quantum setting, a ``guard'' is represented by a projection $\mathsf{P}$. Given a linear relation $A$ (as an observable), we wish to define the guarded linear relation $P\mid A$ that agrees with $A$ on the closed subspace $\mathsf{P}$, but imposes infinity on any state overlapping with $\mathsf{P}^{\perp}$.
Thus, 
$$\mathsf{P}\mid A=\{(\u, \v + \w)\mid \u\in \mathsf{P},\ (\u,\v)\in A,\ \text{and}\ \w\in P^\bot\}.$$
We can check its quadratic form exactly satisfies what we want:
\begin{equation*}
    \mathfrak{t}_{\mathsf{P} \mid A}(\u) = \left\{
    \begin{aligned}
    &\mathfrak{t}_A(\u),&&\text{if\,}\u\in \mathsf{P};\\
    &+\infty,&&\text{otherwise}.
    \end{aligned}\right.
\end{equation*}
Similarly, its extended trace is consistent with $A$ if $\rho$ ``satisfies'' $P$:
\[
\Tr((\mathsf{P}\mid A) \rho) = \left\{
    \begin{aligned}
    &\Tr(A\rho),&&\text{if\,}\ker(\rho)^{\perp}\subseteq \mathsf{P};\\
    &+\infty,&&\text{otherwise}.
    \end{aligned}\right.
\]

\subsection{Convergence Theorems}

Now we present convergence theorems for linear relations using truncations, which will be used to prove the unbounded duality theorem (\Cref{thm:quantum-Kantorovich-duality-unbounded}).
Specifically, these convergence theorems are instrumental to derive a duality theorem for unbounded quantum costs by applying a truncation-and-limit argument based on the bounded quantum duality theorem.

\begin{theorem}[Convergence Theorems]
\label{thm:main-convergence-for-trunc}
    Let $A$ be a self-adjoint, bounded below linear relation on a Hilbert space $\mathcal{H}$, $A_n=\trunc{A}{n}$, and $\rho\in\pstates{\cH}$.
    We have
    \begin{itemize}
        \item  (Monotone Convergence Theorem for Truncations) 
     $
		\lim\limits_{n\rightarrow\infty} \tr(A_n\rho)=\Tr(A\rho).
    $
    \item (Generalized Quantum Fatou's Lemma for Truncations) Suppose $\rho_n\in\pstates{\cH}$ converges to $\rho$ in trace norm, then $
		\liminf\limits_{n\rightarrow\infty} \tr(A_n\rho_n)\ge\Tr(A\rho).
    $
    \item  (Lower Semi-continuity of Expectations) If $\rho_n\to\rho$ under the trace norm, then \[
	\liminf_{n\rightarrow\infty}\Tr(A\rho_n)\geq\Tr(A\rho).
    \]
    Moreover, if $A$ is bounded, then the equality holds.
    \end{itemize}
    Note that $A_n$ is always bounded; therefore, $A_n\rho$ is trace-class, and the standard trace $\tr$ can be used instead of $\Tr$. 
\end{theorem}

These constructions provide the analytic basis for the truncation and convergence arguments used in the subsequent duality results.

\section{Infinite-Dimensional Duality Theorem}\label{sec:duality}

\begin{figure*}[t]
\centering
\resizebox{\textwidth}{!}{
\begin{tikzpicture}[>=Stealth,x=1cm,y=1cm,font=\small]

\begin{scope}[shift={(0,0)}]

  \draw[->] (0,0) -- (3.2,0) node[right] {$S_1$};
  \draw[->] (0,0) -- (0,-2.4) node[below] {$S_2$};

  \draw (0,-2.0) rectangle (3,0);
  \node at (1.5,-1.1) {$\mu$};

  \fill[gray!20] (0,0.15) rectangle (3,0.4);
  \node[above] at (1.5,0.4) {$\mu_1$};
  \fill[gray!20] (-0.35,-2.0) rectangle (-0.1,-0.2);
  \node[left] at (-0.35,-1.1) {$\mu_2$};

  \draw[->] (1.5,-2.3) -- (1.5,-2.9)
     node[midway,right] {Truncation};

  \draw (0,-5.5) rectangle (3,-3.2);

  \fill[gray!20] (0.6,-4.7) rectangle (2.4,-3.6);
  \draw[thick]   (0.6,-4.7) rectangle (2.4,-3.6);

  \draw[decorate,decoration={brace,mirror,raise=3pt}]
    (0.6,-4.7) -- (2.4,-4.7)
    node[midway,below=4pt] {$S_1'$};
  \draw[decorate,decoration={brace,raise=3pt}]
    (0.6,-4.7) -- (0.6,-3.6)
    node[midway,left=4pt] {$S_2'$};

  \node[right, align=left] at (0.7,-4.1)
    {$\mu(S_1'\!\times\! S_2')$\\
    $\ge 1-2\delta$};

\end{scope}

\begin{scope}[shift={(6,-1.5)}]

  \coordinate (Ptop) at (0,0.9);
  \coordinate (Pbot) at (0,-3.4);

  \coordinate (Dtop) at (8.2,0.9);
  \coordinate (Dbot) at (8.2,-3.4);

  \coordinate (Mid)  at (4.1,-2.35);

  \node[align=center] at (Ptop) {primal\\[0.3em]
    $\displaystyle
      \inf_{\mu}\,\mathbb{E}_{\mu}[C]
    $
  };

  \node[align=center] at (Pbot) {finite primal\\[0.3em]
    $\displaystyle
      \inf_{\mu'}\,\mathbb{E}_{\mu'}\!\bigl[C|_{S_1'\times S_2'}\bigr]
    $
  };

  \draw[->] (0,0.25) -- (0,-2.75);
  \node[left=10pt, align=center] at (0,-1.25) {Truncation};
  \node[right=10pt, align=left] at (0,-1.25) {Error:\\
    $2\delta \lVert C\rVert_{\infty}$\\
    (additive)};

  \node[align=center] at (Dtop) {dual\\[0.3em]
    $\displaystyle
      \sup_{f_1,f_2}
      \left(\mathbb{E}_{\mu_1}[f_1]+\mathbb{E}_{\mu_2}[f_2]\right)
    $
  };

  \node[align=center] at (Dbot) {finite dual\\[0.3em]
    $\displaystyle
      \sup_{f_1',f_2'}
      \left(\mathbb{E}_{\mu_1'}[f_1']+\mathbb{E}_{\mu_2'}[f_2']\right)
    $
  };

  \draw[->] (8.2,-2.75) -- (8.2,0.25);
  \node[left=12pt, align=center] at (8.2,-1.25)
    {``pointwise partial\\ minimization''\\[0.2em]
     \textbf{Obstacle for}\\
     \textbf{quantum case}};
  \node[right=10pt, align=left] at (8.2,-1.25)
    {Error:\\
     $(1-2\delta)$ multiplicative\\
     $O(\delta \lVert C\rVert_{\infty})$ additive};

\node[align=center] (FD) at ($(Pbot)!0.5!(Dbot)+(0,0.45)$) {finite duality};
\draw ($(FD.south)+(-0.8,-0.08)$) -- ($(FD.south)+(0.8,-0.08)$);
\draw ($(FD.south)+(-0.8,-0.20)$) -- ($(FD.south)+(0.8,-0.20)$);

\end{scope}
\end{tikzpicture}
}
\caption{Proof strategy for the infinite-dimensional duality for bounded costs. Left: coupling $\mu$ on $S_1\times S_2$ and truncation to a large-mass block $S_1'\times S_2'$. Right: primal error from truncation and finite duality obtained by improving admissible dual potentials, with multiplicative error $(1-2\delta)$ and additive $O(\delta)$. 
In the probabilistic setting this improvement uses pointwise partial minimization; in the quantum setting it fails and is replaced by a dimension-independent perturbation bound.
}
\label{fig:OT-duality-proof-sketch}
\end{figure*}

We now present our main duality theorem for infinite-dimensional quantum systems and our general notion of assertions, together with a proof sketch. This result generalizes \Cref{thm:kr-duality,thm:quantum-Kantorovich-duality-finite-dim} to the infinite-dimensional setting. The notions of partial trace and couplings for density operators extend naturally from the finite-dimensional case.

\begin{theorem}[Kantorovich Duality for Infinite-Dimensional Quantum Systems with Bounded by Below Cost]%
    \label{thm:quantum-Kantorovich-duality-unbounded}
     Let $\mathcal{H}_1$ and $\mathcal{H}_2$ be two Hilbert spaces, 
    $\rho_1\in \qstate{\mathcal{H}_1}$, $\rho_2\in \qstate{\mathcal{H}_2}$ be two density operators, and $Q\in \qbbmaps{\cH_1\otimes \cH_2}$ be a bounded-by-below self-adjoint linear relation.
Let $\dualset{Q}\subseteq \qbndmaps{\cH_1}\times \qbndmaps{\cH_2}$ be such that $(Q_1, Q_2)\in \dualset{Q}$ iff $Q_1\boxplus Q_2\sqsubseteq Q$.
    Then, 
    \begin{equation*}
         \inf_{\rho\in\couplings{\rho_1}{\rho_2}} \Tr (Q\rho)
         = \sup_{(Q_1,Q_2)\in \dualset{Q}} \Tr(Q_1\rho_1) + \Tr(Q_2\rho_2).
    \end{equation*}
\end{theorem}



\subparagraph*{Proof strategy.} Proving Theorem~\ref{thm:quantum-Kantorovich-duality-unbounded} requires addressing both infinite dimensionality and unbounded costs. 
In the probabilistic setting, this is proved via a standard strategy by first establishing the infinite-dimensional duality for bounded costs and then applying the convergence theorem to push from bounded to unbounded costs.
The infinite-dimensional duality for bounded costs is further refined via four steps (see~\Cref{fig:OT-duality-proof-sketch}): truncation, finite-dimensional duality, dual-pair improvement, and error control~\cite{Vil03}. A critical technique is pointwise partial minimization in the third step as $\bar{f}_1(x) \triangleq \inf_{y\in S_2'} C(x,y) -f_2'(y)$ which ``improves'' dual pairs and preserves feasibility, which, unfortunately, fails in the quantum setting as explained below:


\subparagraph*{Challenge: Non-linearity prevents pointwise partial minimization.} In the quantum setting, 
dual variables are now self-adjoint operators: they must be \emph{linear} functionals of density operators. The linearity prevents applying an analogue of classical partial minimization: this is because taking an infimum over one subsystem like 
$$f(\rho_1) \triangleq \inf_{\rho_2\in \qstate{\cH_2}} \inf_{\rho\in \couplings{\rho_1}{\rho_2}}\Tr(Q\rho) - \Tr(Q_2\rho_2) $$
is a nonlinear operation that generally fails to yield a valid self-adjoint operator, i.e., there does not exist $Q_1$ such that $f(\rho) = \Tr(Q_1\rho)$ for all $\rho$, on the remaining subsystem. 

\subparagraph*{Solution: A dimension-independent perturbation bound.}
We address this via a core lemma that extracts near-optimal, uniformly norm-controlled dual solutions from the finite-dimensional semidefinite program. The lemma gives a dimension-independent bound that enables lifting these solutions to the infinite-dimensional setting, replacing classical pointwise improvement with an operator-norm-controlled construction.

\subsection{Perturbation Bound}
This subsection is devoted to establish our core technical tool, a dimension-independent perturbation lemma, as stated below. 
\begin{lemma}
\label{lem:perturbation-analysis}
     Let $\mathcal{H}_1$ and $\mathcal{H}_2$ be two finite-dimensional Hilbert spaces, 
    $\rho_1\in \qstate{\mathcal{H}_1}$, $\rho_2\in \qstate{\mathcal{H}_2}$ be two density operators. Let $Q\in \qbndmaps{\mathcal{H}_1\otimes \mathcal{H}_2}$ be a bounded positive operator.
    Let $\dualset{Q}\subseteq \qbndmaps{\cH_1}\times \qbndmaps{\cH_2}$ such that $(A_1, A_2)\in \dualset{Q}$ iff $A_1\boxplus A_2\sqsubseteq Q $.
     We denote $\dopt{Q}\coloneqq\sup_{(A_1, A_2)\in \dualset{Q}} \Tr(A_1\rho_1) + \Tr(A_2\rho_2).$
    Then, for any $\varepsilon >0$, 
    there exist bounded operators $A_1$ and $A_2$ such that $(A_1, A_2)\in \dualset{Q}$, 
    $\tr(A_1\rho_1) + \tr(A_2\rho_2) \ge \dopt{Q} -\varepsilon$,
    and $\max \{\norm{A_1}, \norm{A_2}\}\le \norm{Q} + \frac{2}{\varepsilon}\norm{Q}^2$.
\end{lemma}

\begin{proof}
The proof is purely constructive, which we demonstrate as follows.

\textbf{Step 1:} \emph{Start from an approximate maximizer.} Without loss of generality, we can find self-adjoint operators $B_1$, $B_2$ such that $\tr(B_1\rho_1) + \tr(B_2\rho_2)\ge \dopt{Q}-\frac{\varepsilon}{2}$, with $\lambda_{\max}(B_1) \le \norm{Q}$ and $\lambda_{\max}(B_2) = 0$ by performing the constant shift $B_1\gets B_1- cI$, $B_2 \gets B_2+cI$. 

\textbf{Step 2:} \emph{Clip eigenvalues to obtain a uniform norm bound.} In the following, we choose the basis of $\mathcal{H}_1$ and $\mathcal{H}_2$ such that $B_1$ and $B_2$ are diagonal.
We then define $A_1$ and $A_2$ to be the diagonal matrices on that basis with the following entries:
\begin{equation*}
        (A_1)_{kk}  \triangleq \max\cbra*{-\norm{Q}-\frac{2}{\varepsilon}\norm{Q}^2, (B_1)_{kk}-\frac{\varepsilon}{2} }, \quad 
        (A_2)_{\ell \ell}  \triangleq \max\cbra*{-\norm{Q}-\frac{2}{\varepsilon}\norm{Q}^2, (B_2)_{\ell \ell}}. 
\end{equation*}
Note that our construction always implies $A_1\sqsupseteq B_1 - \frac{\varepsilon}{2}I$ and $A_2\sqsupseteq B_2$.

\textbf{Step 3:} \emph{Control the loss in objective value.} The above construction gives $\tr(A_1\rho_1)\ge \tr(B_1\rho_1) - \frac{\varepsilon}{2}$ and 
$\tr(A_2\rho_2)\ge \tr(B_2\rho_2)$, hence $ \tr(A_1\rho_1) + \tr(A_2\rho_2) \ge \dopt{Q} -\varepsilon$
as desired.

\textbf{Step 4:} \emph{Check feasibility via a Schur complement argument.} The remaining part is to show $A_1\otimes I+ I\otimes A_2\sqsubseteq Q$. 


This is equivalent to show $Q - (A_1\otimes I + I\otimes A_2) \sqsupseteq 0 $. As the latter term $(A_1\otimes I + I\otimes A_2) $ is constructed from $(B_1\otimes I + I\otimes B_2) $ and diagonal (with diagonal entries of the form $(A_1)_{kk}+(A_2)_{\ell \ell}$),
it is natural to prove the positivity using the following proposition of Schur complement.

\begin{lemma}[Schur complement{~\cite[Theorem 1.12]{Horn2005}}]\label{lemm:schur-complement-psd-condition}
Consider a Hermitian block matrix
\begin{align*}
    H = \begin{bmatrix}
    H_{11} & H_{12} \\
    H_{12}^\dagger & H_{22}
    \end{bmatrix}.
\end{align*}
 s.t. $H_{11} \sqsupseteq \alpha I$, $H_{22} \sqsupseteq \gamma I$, and $\norm{H_{12}}^2 \leq \gamma \alpha$, with $\gamma,\alpha>0$. Then $H \sqsupseteq 0$.
\end{lemma}

To use the Schur complement argument, let $\mathcal{J}$ denote the set of indices where we did not clip, i.e.,
$    \mathcal{J} \triangleq \cbra*{(k,\ell)\mid (A_1)_{kk} =  (B_1)_{kk}-\frac{\varepsilon}{2} \text{ and }
        (A_2)_{\ell \ell} = (B_2)_{\ell \ell}}$.

If $(k,\ell) \in \cJ$, then clearly
$     (A_1)_{kk} + (A_2)_{\ell \ell} =  (B_1)_{kk} + (B_2)_{\ell \ell}-\frac{\varepsilon}{2}$
while if $(k,\ell) \not\in \cJ$, it is easy to show
$(A_1)_{kk} + (A_2)_{\ell \ell} \le -\frac{2}{\varepsilon} \norm{Q}^2$.

Sorting the product basis such that the coordinates in~$\cJ$ come last, and writing
\begin{equation*}
        Q = \begin{bmatrix}
        Q_{\cJ^c\cJ^c} & Q_{\cJ^c\cJ} \\
        Q_{\cJ\cJ^c} & Q_{\cJ\cJ}
    \end{bmatrix}, \quad
    B_1 \otimes I + I \otimes B_2 = \begin{bmatrix}
        \hat D_{\cJ^c\cJ^c} & \\
        & \hat D_{\cJ\cJ}
    \end{bmatrix},
\end{equation*}
we see that in order to prove $A_1 \otimes I + I \otimes A_2 \sqsubseteq Q$, it suffices to show the following:
\begin{equation*}
        \begin{bmatrix}
        Q_{\cJ^c\cJ^c} + \frac{2}{\varepsilon} \norm{Q}^2 I &
        Q_{\cJ^c\cJ} \\
        Q_{\cJ\cJ^c} &
        Q_{\cJ\cJ} - \hat D_{\cJ\cJ} + \frac{\varepsilon}{2} I
    \end{bmatrix}
    \sqsupseteq 0.
\end{equation*}
  
We now use \Cref{lemm:schur-complement-psd-condition}.
First we note that the upper left block $ \sqsupseteq \frac{2}{\varepsilon} \norm{Q}^2 I$, because $Q_{\cJ^c\cJ^c} \sqsupseteq 0$.
It is also clear that $\norm{Q_{\cJ^c\cJ}}^2 \leq \norm{Q}^2$.
On the other hand, since $B_1\otimes I + I\otimes B_2 \sqsubseteq Q$, we know that
\begin{align*}
    \begin{bmatrix}
        Q_{\cJ^c\cJ^c} - \hat D_{\cJ^c\cJ^c} &
        Q_{\cJ^c\cJ} \\
        Q_{\cJ\cJ^c} &
        Q_{\cJ\cJ} - \hat D_{\cJ\cJ}
    \end{bmatrix}
    \sqsupseteq 0,
\end{align*}
and hence $Q_{\cJ\cJ} - \hat D_{\cJ\cJ} + \frac{\varepsilon}{2} I\sqsupseteq \frac{\varepsilon}{2} I$.
Now the result follows by an application of \Cref{lemm:schur-complement-psd-condition}, with $\alpha=\frac{2}{\varepsilon}\norm{Q}^2$ and $\gamma=\frac{\varepsilon}{2}$.
\end{proof}

\subsection{\texorpdfstring{Proof Sketch of \Cref{thm:quantum-Kantorovich-duality-unbounded}}{Proof Sketch of Theorem (Quantum Kantorovich Duality Unbounded)}}

We now prove our main \Cref{thm:quantum-Kantorovich-duality-unbounded} by first establishing the result for bounded positive cost using the perturbation bound developed above, and then extending the argument to the general case via the convergence theorem for linear relations that are bounded below.
The argument is presented as a proof sketch, highlighting how each step contributes to controlling approximation errors and enabling passage to the limits of infinite-dimensional and unbounded costs.






\subparagraph*{Results for bounded positive cost.}
Following \Cref{fig:OT-duality-proof-sketch}, the bounded positive case is proved by four steps:


\textbf{Step 1:} \emph{Truncation.} Let $\delta > 0$ be a fixed constant in the following.
Given a density operator $\rho_1$ on $\mathcal{H}_1$, consider its spectral decomposition $\sum_j \lambda_{1j}\ket{e_{1j}}\bra{e_{1j}}$.
Since $\tr(\rho_1)=1$, there exists an integer $N_1$ such that
$\sum_{j=1}^{N_1}\lambda_{1j} \ge 1-\delta$.
Let $K_1$ be the finite-dimensional subspace spanned by $\left\{\ket{e_{1j}}\right\}_{j=1}^{N_1}$.
Applying the same construction to $\rho_2$ yields a finite-dimensional subspace $K_2$ of $\cH_2$. We project the primal problem onto $K_1\otimes K_2$.
A careful analysis shows that this truncation leads to an additive error of at most $2\delta \|Q\|$ in the primal optimal value $\inf_{\rho}\tr(Q\rho)$ between the original and truncated problems.


\textbf{Step 2:} \emph{Finite-dimensional duality.} 
The quantum Kantorovich duality theorem for finite-dimensional Hilbert spaces
(\Cref{thm:quantum-Kantorovich-duality-finite-dim}) applies to the truncated problem.
As a result, the primal and dual optimal values coincide in the finite-dimensional setting.



\textbf{Step 3:} \emph{Improving over an admissible pair.}
By applying the core lemma (\Cref{lem:perturbation-analysis}), which we discussed in the previous subsection, we obtain a near-optimal dual feasible pair $Q_1'$ and $Q_2'$ for the truncated dual problem, satisfying $\max\{\norm{Q_1'}, \norm{Q_2'}\}\le \norm{Q} + \frac{1}{\varepsilon}\norm{Q}^2$.
Using these operators, we define the improved dual feasible solution as follows:
    \begin{equation*}
        Q_1 \triangleq Q_1' - \frac{\varepsilon}{2} \mathsf{P}_{K_1} -  \rbra*{\frac{\norm{Q}^2}{\varepsilon}+\norm{Q_2'}} \mathsf{P}_{K_1^{\perp}}, \quad
        Q_2 \triangleq Q_2' - \frac{\varepsilon}{2} \mathsf{P}_{K_2} -  \rbra*{\frac{\norm{Q}^2}{\varepsilon}+\norm{Q_1'}} \mathsf{P}_{K_2^{\perp}},
    \end{equation*}    
    where $\mathsf{P}_{K}$ stands for the projection onto the space $K$.
By choosing $\varepsilon=\sqrt{\delta}$, a careful analysis shows that the dual optimal values differ by at most    
 $  3\sqrt{\delta}+4\delta \norm{Q}+6\sqrt{\delta}\norm{Q}^2$
from that of the truncated dual problem.


\textbf{Step 4:} \emph{Combining all and taking a limit.} 
From the preceding arguments, we conclude that for any $\delta>0$, the optimal primal and dual values differ by at most
$3\sqrt{\delta}+6\delta \|Q\|+6\sqrt{\delta}\|Q\|^2$.
Since $Q$ is bounded, letting $\delta\to 0$ yields vanishing error, thereby establishing the duality theorem for bounded positive cost operators.


\subparagraph*{Generalizing to bounded-by-below cases via the convergence theorem.} For a self-adjoint linear-relation $Q$ that is bounded below, we consider its truncation sequence $Q^{(n)}\triangleq\trunc{Q}{n}$. Note that each $Q^{(n)}$ is a bounded self-adjoint operator and $Q^{(n)}\sqsubseteq Q$.
Recall that $\dopt{Q}$ denotes the dual optimal value (\Cref{lem:perturbation-analysis}).
By monotonicity of the dual problem under truncation, we have $\dopt{Q} \ge \sup_{n} \dopt{Q^{(n)}} $.
By the bounded duality theorem and the convergence theorem (\Cref{thm:main-convergence-for-trunc}), 
we obtain $$\sup_{n} \dopt{Q^{(n)}} = \sup_{n} \opt{Q^{(n)}} \ge \liminf_{n} \opt{Q^{(n)}} \ge \opt{Q}.$$
Combining this with weak duality, $\dopt{Q}\le \opt{Q}$, yields the desired duality result for bounded-by-below cost.

This completes the proof of our generalized quantum Kantorovich duality theorem. In the next section, we apply this duality to infinite-dimensional quantum relational logic.

\section{Infinite-Dimensional Quantum Programs}\label{sec:infinite programs}
This section extends the results of Section~\ref{sec:finite quantum} and establishes a sound and complete relational program logic for infinite-dimensional quantum programs and our general form of assertions.

\subparagraph*{Preliminaries.} 
We write $\mathcal{QO}(\mathcal{H})$ for the set of quantum operations, i.e., the completely positive trace-nonincreasing linear maps from $\qpstate{\cH}$ to $ \qpstate{\cH}$, where completely positive means for every \emph{finite}-dimensional Hilbert space $\cH_{\mathrm{aux}}$ and every density operator $\rho\in\qstate{\cH\otimes\cH_{\mathrm{aux}}}$, $(\mathcal{E}\otimes I_{\cH_{\mathrm{aux}}})(\rho)$ is positive.

\subparagraph*{Syntax and semantics.}
Programs are written in the \textsf{qWhile} language, with syntax defined in \Cref{fig:syntax}.
In contrast to finitely-valued quantum programs, we allow variables to range over possibly countably infinite sets of values, such as the natural numbers or integers.
Accordingly, programs operate over states $\qstate{\mathcal{H}}$, where
$\mathcal{H} = \bigotimes_{q \in \qvars} \mathcal{H}_q(\mathcal{V}_q)$ is an infinite-dimensional separable Hilbert space.

Each program $S$ has an interpretation $\sem{S}\in \mathcal{QO}(\mathcal{H})$~\cite{Ying11}. 
We say that $S$ is almost surely terminating, or AST, if $\sem{S}$ is trace-preserving, i.e., $\sem{S}$: $\qstate{\mathcal{H}} \rightarrow \qstate{\mathcal{H}}$.

\subparagraph*{Assertions.}
Relational assertions are elements of $\qbbmaps{\mathcal{H}\otimes\mathcal{H}}$, i.e.\, the extended self-adjoint operators bounded by below on $\mathcal{H}\otimes\mathcal{H}$, and are ordered by the extended L\"{o}wner order $\sqsubseteq$.

The weakest precondition $\wprecond{S}{Q} \in \qbndmaps{\mathcal{H}}$ for a program $S$ is defined only for bounded assertions $Q \in \qbndmaps{\mathcal{H}}$, and satisfies
$\tr(\wprecond{S}{Q}\,\rho) = \tr(Q\,\sem{S}(\rho))$
for all $\rho \in \qstate{\mathcal{H}}$.
Our definition of $\wprecond{S}{Q}$ mildly extends prior work~\cite{Ying11,Ying16} by allowing bounded $Q$, since any bounded operator can be normalized and shifted to lie between $0$ and $I$.

\subparagraph*{Program logic.}
Judgments are of the form $\rtriple{P}{S_1}{S_2}{Q}$, where $S_1,S_2$ are AST programs and 
$P,Q\in\qbbmaps{\mathcal{H}\otimes\mathcal{H}}$.
We say that $\rtriple{P}{S_1}{S_2}{Q}$ is valid, written $\models\rtriple{P}{S_1}{S_2}{Q}$, 
if for every $\rho \in \qstate{\cH\otimes \cH}$, 
there exists a coupling $\sigma$ in 
  $\couplings{\sem{S_1}(\tr_2(\rho))}{\sem{S_2}(\tr_1(\rho))}$ such that
  $\Tr(P \rho) \geq \Tr(Q \sigma)$.
  
The rules in \Cref{fig:prob:rules} are still sufficient to establish a sound and complete proof system, with appropriate adaptations of definitions to account for linear relations.
To justify the soundness of the \RuleRef{duality} rule, we proceed as follows. First observe that the judgment
$\rtriple{P}{S_1}{S_2}{Q}$ is valid if and only if, for all
$\rho \in \mathcal{D}^1(\mathcal{H} \otimes \mathcal{H})$,
$\Tr(P\rho) \ge \inf_{\sigma \in \couplings{\sigma_1}{\sigma_2}} \Tr(Q\sigma)$,
where $\sigma_1 \triangleq \sem{S_1}(\tr_2(\rho))$ and
$\sigma_2 \triangleq \sem{S_2}(\tr_1(\rho))$ are the respective outputs of
$S_1$ and $S_2$; the infimum is attained since the set of couplings is compact~\cite[Theorem 1.4]{qstrasseninf}.
Next, applying \Cref{thm:quantum-Kantorovich-duality-unbounded} to rewrite the
right-hand side yields
\begin{align*}
\Tr(P\rho)
&\ge \sup_{(Q_1,Q_2)\in\dualset{Q}}
   \bigl(\Tr(Q_1\sigma_1) + \Tr(Q_2\sigma_2)\bigr) \\
&= \sup_{(Q_1,Q_2)\in\dualset{Q}}
   \inf_{\sigma\in\couplings{\sigma_1}{\sigma_2}}
   \Tr\bigl((Q_1\boxplus Q_2)\sigma\bigr),
\end{align*}
where the second equality follows from the fact that the expectation of a split
assertion $Q_1\boxplus Q_2$ depends only on the marginals and is independent of
the choice of coupling
$\sigma\in\couplings{\sigma_1}{\sigma_2}$, which establishes the soundness of the \RuleRef{duality} rule.
The \RuleRef{duality} rule thus reduces a bounded-by-below postcondition to a family of judgments with split and bounded assertions, which can be further discharged using the \RuleRef{wp} rule, the weakest precondition rule in the relational setting.

\begin{theorem}[Soundness and Completeness of core rules]\label{thm:quantum-infinite-complete}
A judgment is valid iff it can be derived with the rules \RuleRef{duality},
\RuleRef{conseq}, and \RuleRef{wp}.
\end{theorem}


As discussed, the \RuleRef{wp} rule is typically replaced by more practical rules, such as one-sided or two-sided rules adapted from~\cite{qotl}, which exploit the syntactic structure of programs, or by an instantiation of the lifting rule that invokes QHL~\cite{Ying11} for single programs.

\subparagraph*{Running example (quantum case).}
Let $Q \triangleq |00\>_{q_1q_2}\<00| + |11\>_{q_1q_2}\<11| $ and consider the judgment
$$\vdash \rtriple{ I \mid 0 }{ \textsc{QBern}}{\textsc{QUnif}^2}{Q \mid 0}$$
(see \Cref{fig:placeholder} for detailed programs), asserting that two programs yield 
the same partial states on $q_1$ and $q_2$. Note that the two programs are not aligned w.r.t.\, measurement, and therefore the \RuleRef{duality} rule is required for the proof. Moreover, note that the post-condition involves infinity and therefore cannot be proved using the bounded duality rule of~\cite{qotl}. However, it can be proved in our logic. The derivation applies \RuleRef{conseq}, \RuleRef{wp}, and \RuleRef{duality}, reducing the goal to the entailment
\begin{align*}
0 \sqsupseteq\ & \wprecond{\textsc{QBern}}{A_1}\boxplus \wprecond{\textsc{QUnif}^2}{A_2}
\end{align*}
for every $A_1, A_2$ such that $A_1\boxplus A_2\sqsubseteq Q\mid 0$, which can be discharged using elementary reasoning.

\subparagraph*{\textbf{Case study:} Expected distance in an infinite quantum walk.}
Quantum walks~\cite{ABN+01, kempe2003quantum} are the quantum analogues of classical random walks, and constitute a fundamental technique in quantum algorithms.
We analyze discrete-time quantum walks~\cite{ABN+01, kempe2003quantum} using a coin register $\cH_c$ with basis $\{\ket{L},\ket{R}\}$ and an infinite-dimensional position register $\cH_p$. We additionally introduce an infinite-dimensional time register $\cH_t$ for coherent control to remain in a purely quantum setting.
Following~\cite{ABN+01}, let $C$ be the Hadamard operator on $\cH_c$ , and $S$ the shift on $\cH_c\otimes \cH_p$ where $S\ket{R}\ket{n} = \ket{R}\ket{n+1}$ and $S\ket{L}\ket{n} = \ket{L}\ket{n-1}$.
A single step is $W\triangleq S(C\otimes I_p)$, and the controlled walk is $CW \triangleq \sum_{j=0}^{+\infty} |j\>\<j| \otimes W^j$.
The quantum random walk program $S_{QW}$ is as follows, which initializes the coin to $\ket{R}$ and position to $\ket{0}$ before applying $CW$:
\begin{equation*}
    S_{QW}\triangleq {c}\coloneqq{\ket{R}}; {p}\coloneqq{\ket{0}}; {t,c,p}\coloneqq CW[t,c,p].
\end{equation*}

A well-known property of quantum walks~\cite{ABN+01} is their linear growth: after $t$ steps, the expected distance from the origin scales as $\Theta(t)$, in contrast to the $\Theta(\sqrt{t})$ scaling exhibited by classical random walks. This asymptotic separation is a key mechanism underlying the speedups obtained by numerous quantum algorithms based on quantum walks.
We analyze this behavior using relational quantum program logic as follows. 

We model the step count distribution $\mu$ via $|\varphi_{\mu}\> = \sum \sqrt{\mu_j}\ket{j}_t$, and let $\mathsf{P}_{\mu} \triangleq |\varphi_{\mu}\>\<\varphi_{\mu}|$. To bound the expected distance, we decompose the position operator $P$ into positive ($P_1$) and negative ($P_2$) parts, where $P_{1}=\sum_{j=0}^{\infty} j|j\>_p\<j|$ and $P_{2}=\sum_{j=-\infty}^{0} - j|j\>_p\<j|$.
The following judgment asserts that the expected distance is upper bounded by a constant $c_0$ that depends on $\ket{\varphi_{\mu}}$:
$$ \rtriple{\mathsf{P}_{\mu}[t^{\ave{1}}] \otimes \mathsf{P}_{\mu}[t^{\ave{2}}] \mid c_0I}{S_{QW}}{S_{QW}}{ P_{1}\boxplus P_{2}}.$$ 
For example, if $\ket{\phi_{\mu}}=\sum_{j\ge 1}\sqrt{2^{-j}}\ket{j}$, we can take $c_0=2$.

To prove this, we first apply \RuleRef{duality}, reducing the postcondition to the truncated $\trunc{P_{\pm}}{n}$ for all $n \in \mathbb{N}$.
Then, by \RuleRef{wp} and \RuleRef{conseq}, it suffices to show
$ \mathsf{P}_{\mu}[t^{\ave{1}}] \otimes \mathsf{P}_{\mu}[t^{\ave{2}}] \mid c_0I \sqsupseteq A_1\boxplus A_2 $,
where $A_i \triangleq \wprecond{S_{QW}}{\trunc{P_{i}}{n}}$ for $i=1,2$. These simplify to:
\begin{equation*}
     A_i = \sum_{j=0}^{+\infty}\sum_{k=-\infty}^{+\infty} \sum_{l=L,R} |j\>_{t}\<j|\otimes \big(|lk\>_{c,p}\<R0|B_j|R0\>_{c,p}\<lk|\big),
\end{equation*}
with $B_{ij}= (W^{\dagger})^j (I_{c}\otimes \trunc{P_{i}}{n}) W^j$.

Finally, simplifying the extended L\"owner order requires proving
$   cI \sqsupseteq (\mathsf{P}_{\mu}A_1\mathsf{P}_{\mu}) \otimes I + I \otimes (\mathsf{P}_{\mu}A_2\mathsf{P}_{\mu})$.
Equivalently, the sum of the largest eigenvalues of $\mathsf{P}_{\mu}A_1 \mathsf{P}_{\mu}$ and $\mathsf{P}_{\mu}A_2 \mathsf{P}_{\mu} $ is at most $c_0$.
Notably, these eigenvalues differ due to the intrinsic asymmetry of quantum walks~\cite{ABN+01}.
Similarly, lower bounds on the expected distance can be derived by encoding the diagonal operator $t-\abs{p}$ in the postcondition; we omit the details as they are analogous.
\section{Classical-Quantum Programs}\label{sec:cqprog}

In this section, we transfer our completeness results for relational logics of infinite-dimensional quantum programs to the setting of classical--quantum hybrid programs. To this end, we rely on the well-known fact that countable classical state spaces can be embedded into infinite-dimensional quantum systems, allowing classical-quantum programs to be considered as purely infinite-dimensional quantum programs.

\subparagraph*{Preliminaries.}
Let $X$ denote a countable set, 
and $\cH$ be a separable Hilbert space.
All classical-quantum objects such as $\qbbmaps{X,\cH}$, $\qbndmaps{X,\cH}, \cqstates{X}{\cH}$, as well as operations including (partial) trace and sum $\boxplus$, are obtained by pointwise lifting quantum ones over~$X$, with two basic remarks: 1) boundedness should be uniform for all $x \in X$; and 2) (partial) traces are aggregated over~$X$.
For example, the expectation $\E_\cqstate[\cqpredicate]$ is defined by $\sum_{x\in X}\Tr(\cqpredicate(x)\cqstate(x))$ for $\cqpredicate\in\bbmaps{X,\cH}$ and $\cqstate\in\states{X,\cH}$.

\subparagraph*{Duality Theorem.}
Our duality theorem for infinite-dimensional quantum systems can be generalized to classical-quantum setting, resulting in the following duality theorem.
\begin{theorem}[Kantorovich-Rubinstein Duality Theorem for Classical-Quantum Systems]
\label{thm:cq-kr-duality}
Let $\cqpredicate\in\cqbbmaps{X_1\times X_2}{\cH_1\otimes\cH_2}$, and let $\dualset{\cqpredicate}\subseteq
  \cqbndmaps{X_1}{\cH_1}\times\cqbndmaps{X_2}{\cH_2}$ such that
$(\cqpredicate_1,\cqpredicate_2)\in \dualset{\cqpredicate}$ iff
      $\cqpredicate_1 \boxplus \cqpredicate_2\sqsubseteq \cqpredicate$. Then, for any $\cqstate_1\in \cqstates{X_1}{\cH_1}$ and $\cqstate_2\in \cqstates{X_2}{\cH_2}$,
$$
 \inf_{\cqstate
  \in \couplings{\cqstate_1}{\cqstate_2}} \E_{\cqstate}[\cqpredicate] =\sup_{(\cqpredicate_1,\cqpredicate_2)\in\dualset{\cqpredicate}}
  \{\E_{\cqstate_1}[\cqpredicate_1]+\E_{\cqstate_2}[\cqpredicate_2]\}.$$
\end{theorem}

To prove the theorem, we reduce the problem to the (infinite-dimensional) purely quantum setting. The reduction relies on an embedding-retraction pair, where the embedding map $\iota$ sends a classical–quantum linear operator $\cqstate$ to a purely quantum operator defined by $\iota (\cqstate)\triangleq \sum_{x\in X} \ket{x}\bra{x} \otimes \cqstate(x)$, and its dual, the retraction map $\mathcal{R}$, which sends a purely quantum operator $A$ to a classical–quantum operator via $\mathcal{R}(A) (x) = \bra{x}A\ket{x}$.
The following propositions capture the essential properties of these maps and form the backbone of the reduction argument; together, they enable the transfer of coupling and feasibility statements between the classical–quantum and purely quantum settings:
\begin{alphaenumerate}
    \item $\inf_{\cqstate\in \couplings{\cqstate_1}{\cqstate_2}} \E_{\cqstate}[\cqpredicate] = \inf_{\rho\in \couplings{\iota(\cqstate_1)}{\iota(\cqstate_2)}}\Tr(\iota(\cqpredicate)\rho)$.
    \item $\tr(A\iota(\cqstate)) = \E_{\cqstate}[\mathcal{R}(A)]$ if $A$ is bounded.
    \item If $(A_1, A_2)\in \dualset{\iota(\cqpredicate)}$, then $(\mathcal{R}(A_1), \mathcal{R}(A_2)) \in \dualset{\cqpredicate}$.
\end{alphaenumerate}

We now sketch the proof of~\cref{thm:cq-kr-duality}.
As usual, we refer to the infimum as the primal value and the supremum as the dual value. By (a), the classical–quantum primal value coincides with the purely quantum primal value with cost $\iota(\cqpredicate)$.
 By~\cref{thm:quantum-Kantorovich-duality-unbounded}, this equals the corresponding quantum dual value
$$\sup_{(A_1,A_2)\in\dualset{\iota(\cqpredicate)}}
  \{\tr(A_1\iota(\cqstate_1))+\tr(A_2\iota(\cqstate_2))\}.$$
  Using (b), this expression becomes  $$\sup_{(A_1,A_2)\in\dualset{\iota(\cqpredicate)}}  \{\E_{\cqstate_1}[\mathcal{R}(A_1)]+\E_{\cqstate_2}[\mathcal{R}(A_2)]\}.$$
  Finally, by (c), every feasible quantum pair $(A_1, A_2)$  induces a feasible pair $(\mathcal{R}(A_1), \mathcal{R}(A_2)) $
 for the classical–quantum dual problem. Hence, the above value is upper-bounded by the classical–quantum dual value. The claim follows by weak duality. 

\subparagraph*{Syntax and semantics.}
Programs are written in the \textsf{cqWhile} language, with syntax defined in \Cref{fig:syntax-cqwhile}. 
\textsf{cqWhile} is a combination of \textsf{pWhile} and \textsf{qWhile}, enriched with explicit measurement statements that allow information to be extracted from the quantum system and stored in classical variables.
As a result, the classical control flow in \textsf{qWhile} can be replaced by classical guards in \textsf{cqWhile}.

\begin{figure}
\begin{equation*}
    \begin{aligned}
      S ::=\ &  \skp 
        \mid  \assign{x}{e}  
        \mid \sample{x}{\mu}  \\
        & \mid q \coloneqq \ket{e} 
        \mid \oq \coloneqq U [\oq]
        \mid x\gets \imeas\; M[\oq] \\
        & \mid  S_1; S_2  
        \mid \ifte{e}{S_1}{S_2} 
        \mid \whiled{e}{S}
\end{aligned}
\end{equation*}
\vspace{-1em}
\caption{\textsf{cqWhile} languages. Here $x$ ranges over a finite set $\cvars$ of classical variables, $q$ ranges over a finite set $\qvars$ of quantum variables, $e$ ranges over classical expressions, $\mu$ ranges over distribution expressions, $U$ ranges over unitaries, $M$ range over quantum measurements. Expressions take values over a set $\mathcal{V}$.  
}\label{fig:syntax-cqwhile}
\end{figure}

Programs operate over classical-quantum states 
$\cqstates{\cstates}{\cH}$, where $\cstates = \cvars\rightarrow \mathcal{V}$ for classical states, and $\cH = \otimes_{x\in\qvars}\cH_{x}(\mathcal{V}_x)$ the Hilbert space for quantum systems.
Each program $S$ has an interpretation $\sem{S} : \cqstates{\cstates}{\cH}\rightarrow \cqpstates{\cstates}{\cH}$, a point-wise lifting of completely positive linear maps~\cite{feng2020,barbosa2021easypqc}. We say that $S$ is almost surely terminating, or AST, if $\sem{S}$ is trace-preserving, i.e., $\sem{S} : \cqstates{\cstates}{\cH}\rightarrow \cqstates{\cstates}{\cH}$.
 
\subparagraph*{Assertions.}

Relational assertions are elements of $\cqbbmaps{\cstates\times \cstates}{\cH\otimes \cH}$, and are ordered by $\sqsubseteq$, the pointwise lifting of extended L\"owner order.
Note that our choice of assertions is less standard than the usual requirement of \emph{positive} ones. However, the two notions are equivalent for AST programs.

There are two important assertions, namely the weakest precondition and Boolean guarded assertions, which are used for establishing the \RuleRef{wp} rule and one/two-sided rules respectively. Specifically,
the weakest precondition $\wprecond{S}{\cqpredicate} \in \cqbndmaps{\cstates}{\cH}$ for a single AST program $S$ and $\cqpredicate\in \cqbndmaps{\cstates}{\cH}$ is the unique assertion which satisfies $\E_{\cqstate}[\wprecond{S}{\cqpredicate}] = \E_{\sem{S}(\cqstate)}[\cqpredicate]$ for all $\cqstate\in\cqstates{\cstates}{\mathcal{H}}$~\cite{feng2020}.

\begin{example}[Boolean Guarded Assertions]
For every $\phi\subseteq\cstates\times\cstates$ and $\cqpredicate\in\bbmaps{\cstates\times\cstates, \cH\otimes \cH}$, let 
$\phi \mid \cqpredicate \in \bbmaps{\cstates\times\cstates, \cH\otimes \cH}$
be the Boolean guarded assertion defined by the clause:
$$(\phi \mid \cqpredicate)(s,s') 
= \left\{\begin{array}{ll} \cqpredicate(s,s') & \mbox{if }
(s,s')\in \phi \\ A_\infty & \mbox{otherwise,} \end{array}\right.
$$
where $A_\infty\triangleq\{(0,u):u\in \cH\otimes\cH\}$ is the fully infinite linear relation, i.e., its quadratic form or extended trace are always infinite for non-zero states or density operators, respectively. 
\end{example}

\subparagraph*{Program logic.}

Judgments are of the form $\rtriple{P}{S_1}{S_2}{Q}$, where $S_1,S_2$ are AST programs and 
$P,Q\in\cqbbmaps{\cstates\times \cstates}{\mathcal{H}\otimes\mathcal{H}}$.
We say that $\rtriple{P}{S_1}{S_2}{Q}$ is valid, written $\models\rtriple{P}{S_1}{S_2}{Q}$, 
if for every $\cqstate \in \cqstates{\cstates\times \cstates}{\mathcal{H} \otimes \mathcal{H}}$, 
there exists a coupling $\cqstate'$ in 
  $\couplings{\sem{S_1}(\tr_2(\cqstate))}{\sem{S_2}(\tr_1(\cqstate))}$ such that
  $\E_{\cqstate}[P] \geq \E_{\cqstate'}[Q]$.
  
The rules in \Cref{fig:prob:rules} are still sufficient to establish a sound and complete proof system, with careful adaptation of definitions in the classical-quantum case.

\begin{theorem}[Soundness and Completeness of core rules]\label{thm:classical-quantum-infinite-complete}
A judgment is valid iff it can be derived with the rules \RuleRef{duality},
\RuleRef{conseq}, and \RuleRef{wp}.
\end{theorem}

\subparagraph*{Practical rules.}
As in the other settings, we can derive one-sided and two-sided rules, as well as lifting rules that leverage existing hybrid quantum Hoare logics~\cite{feng2020}, to replace the \RuleRef{wp} rule in practice.
\Cref{fig:rules:twosided} presents selected rules used in the following case study.
The \RuleRef{while} rule for loops follows the standard pattern: it requires the two programs to proceed in lockstep and that the invariant is preserved by the lockstep execution of the loop bodies.
The \RuleRef{sample-supp} rule compares two sampling commands and provides a marginal but useful improvement over the sampling rule of~\cite{erhl} (originally stated in a probabilistic setting, which is immaterial here), by allowing the guarded assertion to be strengthened using the support of the witness coupling~$\mu$.
Here, $\E_{(v, w)\sim \mu}[\psi[v/x_1, w/x_2]] \triangleq \sum_{(v,w)} \mu(v,w) \psi[v/x_1, w/x_2] $.
The assertions involved are all unbounded as they merge both qualitative and quantitative parts.

\begin{figure}
\small
  $
  \begin{prooftree}
      \hypo{\vdash \rtriple{b_1\wedge b_2 \mid \psi}{c_1}{c_2}{b_1 \leftrightarrow b_2 \mid \psi}}
      \infer1{\vdash \rtriple{b_1 \leftrightarrow b_2 \mid \psi}{\whiled{b_1}{c_1}}{\whiled{b_2}{c_2}}{\neg b_1\wedge \neg b_2 \mid \psi}}
    \end{prooftree}
  $ \\[0.5em]
 $   
 \begin{prooftree}
       \hypo{\mu \in \couplings{\mu_1}{\mu_2}\qquad \supp (\mu) \subseteq \xi}
      \infer1{\vdash \rtriple{
       \forall x_1 x_2.~\xi \rightarrow b \mid \E_{(v, w)\sim \mu}[\psi[v/x_1, w/x_2]]}{\sample{x_1}{\mu_1}}{\sample{x_2}{\mu_2}}{
       b \mid \psi}}
    \end{prooftree}
   $
  \caption{Selected two-sided rules: the top rule is \RuleLabel{while}, and the bottom rule is \RuleLabel{sample-supp}.}
  \label{fig:rules:twosided}
\end{figure}

\subparagraph*{Running example (classical-quantum case).}
We analyze our final running example that exercises the \RuleRef{duality} rule,
with the left program \textsc{Bern} and the right program \textsc{CQUnif}$^2$
which initializes and measures two qubits $q_2$ and $q'_2$, which yields two bits $x_2$ and $x_2'$. 
Our goal is to show that the two programs compute the same distribution, i.e.,
$$\rtriple{\top \mid 0}{\textsc{Bern}}{\textsc{CQUnif}^2}{x_1=(x_2\wedge x_2')\mid 0}.$$
Again, the two programs are not \emph{aligned} w.r.t.\, sampling and measurement instructions, and therefore the \RuleRef{duality} is required.
We prove it by first applying the \RuleRef{duality} rule to obtain the (infinitary) judgment
$$\rtriple{ 0 }{S_1}{S_2}{Q_1\boxplus Q_2 }$$
for all $(Q_1,Q_2) \in \dualset{x_1 = (x_2\wedge x_2')\mid 0}$.
Next, applying the \RuleRef{wp} rule leads to the (simplified) proof obligation
$$
{\scriptstyle
\begin{aligned}
    & \left(3Q_1[0/x_1] + Q_1[1/x_1]\right)\boxplus  \left(\sum_{ij} \<ij|Q_2[i/x_2'][j/x_2]|ij\>\right) \sqsubseteq 0,
\end{aligned}}
$$
which is discharged with elementary reasoning.

\subparagraph*{\textbf{Case study:} Algorithmic stability of quantum neural networks.}

Algorithmic stability is a central concept in learning theory, formalizing the requirement that a learning algorithm's output changes only slightly when a single training example is modified~\cite{BE02}. Stability has also been studied in quantum machine learning~\cite{CHC+22,GEB24}.
We revisit the stability of quantum neural networks (QNNs) as introduced in \cite{YXX25}. Abstracting away implementation details, we model a QNN by a program $S_{T,X,\phi_0}$ parameterized by a training set $X$, depth $T$, and initial state $\ket{\phi_0}$, which iteratively applies data-dependent unitaries $U_x$ sampled from $X$. This abstraction captures the essential mechanism by which training data influence quantum evolution while remaining general.
\begin{equation*}
       S_{T,X,\phi_0} \triangleq {q}\coloneqq{\ket{\phi_0}};  \assign{t}{0}; 
        \while~ t< T ~\wdo 
              \;   \sample{x}{X}; q \coloneqq U_x[q] ;  \assign{t}{t+1}; 
         \wod
\end{equation*}
Assuming each $U_x$ is a small perturbation of the identity ($\norm{U_x-I}\le c$ for all $x$), we express stability as a relational judgment between executions on neighboring datasets $X_1\sim_1 X_2$:
$$         \rtriple
 {X_1\sim_1 X_2 \mid 4cT } { S_{T, X_1, \phi_0}} {S_{T, X_2, \phi_0}} { \lambda(m_1, m_2).
   (P\otimes I -I\otimes P)}.$$
   The postcondition encodes the variational characterization of trace distance~\cite[Section VIII.B]{qotl} (a quantum analogue of total variation distance), while the bound $4cT$ reflects the cumulative effect of $T$ layers. 
The main step of the proof is to apply the \RuleRef{while} rule (Figure~\ref{fig:rules:twosided}) with the invariant
$$
\psi \triangleq (t^{\ave{1}} =t^{\ave{2}}) \mid \lambda(m_1, m_2). P\otimes I- I\otimes P + (T-t^{\ave{1}})4c I
$$
and loop guards $b_1 = (t^{\ave{1}} < T)$,  $b_2 = (t^{\ave{2}} < T)$.
After the \RuleRef{while} rule, it suffices to show
$\rtriple{b_1\wedge b_2\mid \psi}{B}{B}{b_1\wedge b_2\mid \psi}$, where $B$ denotes the loop body. This follows from \RuleRef{sample-supp} and \RuleRef{conseq}, using a coupling that is the identity except on the unique differing samples $x_1\in X_1$ and $x_2\in X_2$.

\section{Related Work}\label{sec:relwork}

\begin{figure}[t]
\small
\begin{tabular}{|l|l|l|l|l|l|}
 \hline
 Logic & Language & Dimension & Assertion Language & Completeness  \\
  \hline
\textsf{pRHL}~\cite{BartheGZ09}  & \textsf{pWhile} && boolean-valued &
$\no$  \\
$\mathbb{E}$\textsf{pRHL}~\cite{BartheEGHS18} & \textsf{pWhile} && $+\infty$-valued &
$\no$  \\
\textsf{eRHL}~\cite{erhl} & \textsf{pWhile} && real-valued &
$\yes^{\textrm{s}}$  \\
\textsf{eRHL} $+$ ~\cite{qotl}
& \textsf{pWhile} && $+\infty$-valued &
$\yes^{\textrm{b}}$ \\
\textbf{\Cref{sec:probprog}}
& \textbf{\textsf{pWhile}} & &
\textbf{$+\infty$-valued} &
\textbf{$\yes$}  \\
\hline 
\textsf{rqPD}~\cite{GHY19} & \textsf{qWhile} & finite & observable &
$\no$  \\
\textsf{eqRHL}~\cite{LU19} & \textsf{qWhile} & countably infinite  & observable &
$\no$   \\
\textsf{qOTL}~\cite{qotl} & \textsf{qWhile} & finite & $+\infty$-valued observable &
$\yes^{\mathrm{b}}$ \\
\textbf{\Cref{sec:infinite programs}} & 
\textbf{\textsf{qWhile}} & \textbf{countably infinite} &
\textbf{linear relation} &
\textbf{$\yes$} \\
\hline
\textsf{qRHL}~\cite{Unr19a} & \textsf{cqWhile} & countably infinite & projection &
$\no$  \\
\textsf{EasyPQC}~\cite{barbosa2021easypqc}  & \textsf{cqWhile} & finite & boolean-valued &
$\no$   \\
\textbf{\Cref{sec:cqprog}} & \textbf{\textsf{cqWhile}} & \textbf{countably infinite} & \textbf{linear relation} &
\textbf{$\yes$}  \\
\hline
\end{tabular}
\caption{Comparison with prior logics. Ours results are shown in \textbf{bold}.
Language denotes the language setting, either probabilistic (\textsf{pWhile}), purely quantum (\textsf{qWhile}), or classical-quantum (\textsf{cqWhile}). Dimension indicates the dimension of the evolving quantum systems. $\yes{}$, $\yes^s$, $\yes^b$ indicates that completeness holds unconditionally, for split post-conditions, and for bounded post-conditions.  
} 
\label{fig:completeness}
\end{figure}

\subparagraph*{Relational program logics.}
There is a large body of work that develops coupling-based relational program logics for probabilistic and quantum programs, e.g.\,~\cite{BartheGZ09,Unr19a,GHY19,LU19,barbosa2021easypqc,DBLP:journals/pacmpl/0001BHKKM21,DBLP:journals/pacmpl/GregersenAHTB24,DBLP:journals/pacmpl/HaselwarterLAGTB25,erhl,qotl}. \Cref{fig:completeness} provides an overview of selected logics. All these logics are sound w.r.t.\, program semantics. However, few relational logics achieve completeness, and only for restricted classes of assumptions. In the probabilistic setting, \cite{erhl} achieves completeness for a class of split post-conditions. In the quantum setting, \cite{qotl} uses the duality theorem to achieve completeness for finite-dimensional quantum programs and bounded assertions---and similarly for the probabilistic case. Our results subsume all completeness results, but showing completeness for a more general class of programs, without any restrictions on assertions. Most of the aforementioned works focus on the discrete setting; however, a few works, e.g.~\cite{DBLP:journals/entcs/Sato16} consider the continuous setting. In addition, some works e.g.~\cite{DBLP:journals/pacmpl/BaoDF25}, develop relational program logics where coupling-based rules are derived rather than primitives, but these works do not consider completeness issues.

\subparagraph*{Quantum optimal transport.}

There is growing interest in extending optimal transport~\cite{Villani08,cuturi2019computational} to the quantum settings. An early attempt \cite{ZS98} defines quantum transport cost via a probabilistic Monge distance approach. However, most subsequent work~\cite{GMP16,GP17,GP18,CGP20,CEFZ23,FECZ22a} adopts a coupling-based formulation. Under this formulation,
Kantorovich-type duality results were established in \cite{CGP22,GP22a,ZYYY22}. 
A complementary perspective is provided by dynamic formulations:
channel-based approaches include~\cite{DT21,BPTV25,HCW25,DP25};
\cite{CM17,CM20,Wir21,Wir22,WZ21} adopt a gradient flow approach.
Another active direction concerns Lipschitz-type transport distances~\cite{DMTL21,DT23,DKP24}.
Quantum optimal transport also has applications in quantum machine learning \cite{CYL+19,JM23} and in quantum Markov chains \cite{BLMT25}.
  For a broader overview, we refer to the recent survey \cite{Bea25}.

\subparagraph*{Linear relations.}

Our work rests on the theory of linear relations—initiated by Arens~\cite{arens1961operational} and consolidated by Cross~\cite{cross1998multivalued}—which provides the lattice-theoretic basis for our quantum assertions. From an analytic perspective, it generalizes the classic method of defining self-adjoint operators via quadratic forms, a technique originating with Friedrichs' extensions~\cite{friedrichs1934spektraltheorie}. This line of work was formalized by Kato's representation theorem~\cite{kato2013perturbation}, extended by Simon to lower semi-continuous forms—which is essential for establishing convergence~\cite{Simon_1978}—and ultimately generalized to linear relations by Hassi et al.~\cite{Behrndt_Hassi_de_Snoo_Wietsma_2010, behrndt2020boundary} for a rigorous treatment of singular observables.

Our approach also shares deep structural connections with non-commutative integration theory, particularly traces in von Neumann algebras~\cite{ruskai1972inequalities} and $\tau$-measurable operators~\cite{fack1986generalized, haagerup1979lp}, which provides the broader functional analytic context for unbounded operators.
 \section{Discussion and Conclusion}

We have introduced sound and complete program logics for infinite-dimensional quantum programs and classical-quantum programs. Our logics are based on a new interpretation of assertions as linear relations, together with new duality theorems for discrete infinite-dimensional quantum states.

\subparagraph*{Beyond almost-sure termination.}
A first direction for future work is to relax the almost-sure termination (AST) assumption. In our present development, AST plays three roles. It ensures that program outputs can always be coupled, it matches the hypotheses of the duality theorems, and it allows bounded-by-below assertions to be related to non-negative ones. Removing, or at least weakening, this assumption would therefore require changes at both the semantic and proof-theoretic levels. One possible route is to replace full couplings with more general notions such as partial couplings, which can compare programs with different probabilities of termination. Another is to develop or invoke a duality theorem that does not require termination-preserving semantics. This points naturally to unbalanced optimal transport, where the two marginals need not have the same total mass. Understanding whether such dualities can support complete relational logics for non-AST quantum programs is an important theoretical direction.

\subparagraph*{From foundational logics to practical reasoning.}
The logics developed in this paper are intentionally foundational: they isolate the core principles needed for soundness and completeness. For practical verification, however, these core systems need to be enriched with additional proof rules. In particular, one-sided and two-sided rules, rules for common program constructs, and domain-specific reasoning principles will be essential for making the logics usable in larger examples. We expect that, once equipped with such derived rules, the quantum and classical-quantum logics developed here can reach a level of practicality comparable to their probabilistic counterparts.

At the same time, we do not expect these logics to serve as the basis of a fully automated verification tool. As in \textsf{pRHL}, the intended mode of use is likely to be interactive: users guide the proof by choosing couplings, invariants, and decompositions, while automation assists with routine proof obligations. A key difference from the probabilistic setting is the nature of entailment between assertions. In our setting, assertions are linear relations, and entailment involves reasoning about extended Löwner order and potentially unbounded quantum observables. Developing partial automation for such entailment problems is therefore a central step toward making the logics practically usable.

\subparagraph*{Applications to post-quantum and quantum cryptography.}
A major application area for future work is to use the logics to formalize post-quantum and quantum cryptography, respectively to prove security of new post-quantum cryptographic standards and of quantum key distribution (QKD). A main challenge for the post-quantum case is to develop appropriate proof rules to reason about quantum random oracles (QROM)~\cite{boneh2011random}, and cost logics to reason about the complexity of classical-quantum adversaries. In contrast, a main challenge for the quantum case is to develop expressive proof rules for the adversary---in the quantum case, the adversary may be entangled with the state of the cryptographic system whereas in the post-quantum case the cryptographic system is classical.


\nocite{AB06,MMS96,reed1980methods,rudin1987real}
\bibliography{ref}

\appendix

\section{Mathematical Preliminaries on Linear Relations and Unbounded Observables}
\label{sec:math_prelim}

In this section, we provide a rigorous mathematical foundation for handling unbounded quantum observables. While bounded observables are described by bounded self-adjoint operators, unbounded observables (such as energy or execution time) require the more general framework of \emph{Linear Relations} (LRs) to properly handle domain issues and singularities (infinite values).

\subsection{Linear Relations: Beyond Unbounded Operators}
\label{subsec:lr_definitions}

A linear relation $T$ on a Hilbert space $\mathcal{H}$ is defined as a closed linear subspace of the direct sum $\mathcal{H} \oplus \mathcal{H}$. We identify an operator $A$ with its graph $\mathcal{G}(A) = \{(x, Ax) \mid x \in \Dom(A)\}$. LRs generalize operators by allowing ``multivalued'' behavior.

\begin{definition}[Basic Notions]
    Let $T \subset \mathcal{H} \oplus \mathcal{H}$ be a linear relation.
    \begin{itemize}
        \item \textbf{Domain and Range}: $\Dom(T) = \{x \mid \exists y, (x,y) \in T\}$, $\Ran(T) = \{y \mid \exists x, (x,y) \in T\}$.
        \item \textbf{Kernel and multivalued Part}: $\ker(T) = \{x \mid (x,0) \in T\}$, $\mul(T) = \{y \mid (0,y) \in T\}$.
        \item \textbf{Inverse}: $T^{-1} = \{(y,x) \mid (x,y) \in T\}$. Note that $\ker(T) = \mul(T^{-1})$.
    \end{itemize}
    $T$ is an operator if and only if $\mul(T) = \{0\}$.
\end{definition}

\begin{definition}[Closed and Densely Defined Relations]
    Let $T \subset \mathcal{H} \oplus \mathcal{H}$ be a linear relation.
    \begin{itemize}
        \item \textbf{Closed Linear Relation}: $T$ is called \emph{closed} if it is a closed subspace of $\mathcal{H} \oplus \mathcal{H}$ with respect to the product topology. (Note: A closed operator has a closed graph).
        \item \textbf{Densely Defined}: $T$ is \emph{densely defined} if its domain $\Dom(T) = \{x \mid \exists y, (x,y) \in T\}$ is dense in $\mathcal{H}$.
    \end{itemize}
\end{definition}

\begin{definition}[Adjoint and Self-Adjointness]
    The adjoint $T^*$ is defined by
    \[
    T^* \triangleq \{ (\u, \v) \in \mathcal{H} \oplus \mathcal{H} \mid \forall\, (\u', \v') \in T, \langle \v, \u' \rangle = \langle \u, \v' \rangle \}.
\]
    $T$ is \emph{symmetric} if $T \subset T^*$, and \emph{self-adjoint} if $T = T^*$.
\end{definition}

A crucial structural property of self-adjoint linear relations is the orthogonal decomposition of the Hilbert space. Since $\mul(H^*) = (\Dom(H))^\perp$ generally holds, self-adjointness ($H=H^*$) implies the following decomposition.

\begin{proposition}[Canonical Decomposition, \cite{cross1998multivalued,behrndt2020boundary}]
    Let $H$ be a self-adjoint linear relation on $\mathcal{H}$. Then the multivalued part is just the orthogonal complement of the domain:
    \[ \mul(H) = (\Dom(H))^\perp. \]
    Consequently, the Hilbert space admits the orthogonal decomposition:
    \[ \mathcal{H} = \overline{\Dom(H)} \oplus \mul(H). \]
    With respect to this decomposition, $H$ can be uniquely split into a densely defined self-adjoint operator $H_{\operatorname{op}}$ acting on the subspace $\overline{\Dom(H)}$ and a purely multivalued component on $\mul(H)$. Thus $\mul(H)$ is always closed.
\end{proposition}

In the context of quantitative verification (e.g., execution time, energy), observables are typically non-negative or at least have a finite lower bound. We formally define this property for linear relations.

\begin{definition}[Bounded Below Linear Relations]
    A self-adjoint linear relation $H$ is said to be \emph{bounded below} if there exists a real constant $\gamma \in \mathbb{R}$ such that
    \[ \langle x, y \rangle \ge \gamma \|x\|^2, \quad \forall (x, y) \in H. \]
    In particular, if $\gamma = 0$, $H$ is called \emph{positive}.
\end{definition}

For any self-adjoint linear relation $H$ bounded below by $\gamma$, we can shift it to a positive relation $H^{\prime}\triangleq\{(x,y-\gamma x)\mid (x,y)\in H\}$. Note that $\Dom(H)=\Dom(H^{\prime})$ and $\mul(H)=\mul(H^{\prime})$. We denote $H^{\prime}$ as $\shift{H}{\gamma}$.

For bounded below relations, the spectral decomposition can be formulated without involving negative infinity, simplifying the topological structure to the one-point compactification of the semi-infinite line. 
Since we don't do not allow $-\infty$ to occur, in this appendix we use $\infty$ to denote $+\infty$. 
To formulate the spectral decomposition for relations that may include "infinite values" (singularities), we must extend the standard measure-theoretic framework to the extended real line.

\begin{definition}[Spectral Measures on the Extended Real Line]\label{def:spectral-measure}
    Let $\overline{\mathbb{R}} = \mathbb{R} \cup \{+\infty\}$ be the extended real line equipped with the standard topology (where open neighborhoods of $+\infty$ are of the form $(a, +\infty]$). Let $\mathbb{B}(\overline{\mathbb{R}})$ denote the Borel $\sigma$-algebra on $\overline{\mathbb{R}}$.
    A \emph{spectral measure} $E$ on a Hilbert space $\mathcal{H}$ is a map $E$ from $\mathbb{B}(\overline{\mathbb{R}}) $ to orthogonal projections, satisfying:
    \begin{enumerate}
        \item \textbf{Normalization}: $E(\overline{\mathbb{R}}) = I$ (the identity operator).
        \item \textbf{Orthogonality}: $E(\Omega_1 \cap \Omega_2) = E(\Omega_1) E(\Omega_2)$ for any $\Omega_1, \Omega_2 \in \mathbb{B}(\overline{\mathbb{R}})$.
        \item \textbf{$\sigma$-additivity}: For any countable family of disjoint sets $\{\Omega_i\}$, $E(\bigcup_i \Omega_i) = \sum_i E(\Omega_i)$ in the strong operator topology.
    \end{enumerate}
    Crucially, a spectral measure on $\overline{\mathbb{R}}$ can have a non-zero projection at infinity, $E(\{+\infty\})$, which captures the singular or multivalued component of the associated linear relation.
\end{definition}

Given a spectral measure $E$, for any vector $\u \in \mathcal{H}$, the map $\Omega \mapsto \langle \u, E(\Omega)\u \rangle = \|E(\Omega)\u\|^2$ defines a finite, non-negative scalar measure on $\overline{\mathbb{R}}$. We denote this induced measure by $\mu_\u$.
Consequently, for any measurable function $f: \overline{\mathbb{R}} \to \mathbb{C}$, the operator $f(H)$ is rigorously defined via integration with respect to these scalar measures:
\[ \langle \u, f(H) \v \rangle := \int_{\overline{\mathbb{R}}} f(\lambda) \, d\langle \u, E(\lambda)\v \rangle, \]
where the domain consists of vectors for which the integral $|f(\lambda)|^2$ is finite. In our shorthand notation, $d\|E_H(\lambda)\u\|^2$ refers to integration with respect to the scalar measure $\mu_\u$.

\begin{theorem}[Spectral Theorem for Bounded Below Relations, \cite{cross1998multivalued,schmudgen2012unbounded,behrndt2020boundary}]\label{thm:spectral-theorem-for-LR}
    Let $H$ be a self-adjoint linear relation bounded below by $\gamma$. There exists a unique spectral measure $E_H(\cdot)$ on the extended real interval $[\gamma, \infty]$ (as a subset of $\mathbb{R} \cup \{+\infty\}$) such that:
    \begin{enumerate}
        \item The relation is represented by the spectral integral:
        \[ H = \int_{[\gamma, \infty]} \lambda \, dE_H(\lambda). \]
        Specifically, for any pair $(x,y) \in H$, we have $\langle x,y \rangle = \int_{[\gamma, \infty]} \lambda \, d\|E_H(\lambda)x\|^2$.
        \item The multivalued part corresponds precisely to the eigenspace at positive infinity: $\mul(H) = \Ran(E_H(\{+\infty\}))$.
    \end{enumerate}
\end{theorem}

\begin{corollary}
    For any positive linear relation $H=\int_{[0,\infty]} \lambda \, dE_H(\lambda)$, there exists a unique positive linear relation $\sqrt{H}=\int_{[0,\infty]} \sqrt{\lambda} \, dE_H(\lambda)$, called the square root of $H$. Note that $\Dom(\sqrt{H})$ can be strictly larger then $\Dom(H)$, and $\Dom(\sqrt{H})$ can be characterized by \cite[Proposition 5.16]{schmudgen2012unbounded}.
\end{corollary}

\begin{definition}
    For any self-adjoint linear relation $H$ bounded below by $\gamma$, and a positive number $c\in[0,+\infty)$, we denote
    \[
    cH\triangleq\{(x,cy)\in\cH\oplus\cH\mid(x,y)\in H\}.
    \]
    Note that $cH$ is also a self-adjoint linear relation $H$ bounded below by $c\gamma$. If $H = \int_{[\gamma, \infty]} \lambda \, dE_H(\lambda)$, it's obvious that $cH = \int_{[c\gamma, \infty]} c\lambda \, dE_H(\lambda)$.
\end{definition}

\subsection{Quadratic Forms and Generalized Sums}
\label{subsec:forms_and_sums}

The arithmetic sum of two unbounded linear relations (or operators) $A+B$ is often ill-defined due to the intersection of their domains being too small or empty. To rigorously define sums of observables (e.g., total energy or time), we rely on the correspondence between linear relations and quadratic forms, which relaxes the domain constraints of operator algebra.

\begin{definition}[Sesquilinear and Quadratic Forms]
    A \emph{sesquilinear form} is a map $\mathfrak{s}: \Dom_{\mathfrak{s}} \times \Dom_{\mathfrak{s}} \to \mathbb{C}$, where $\Dom_{\mathfrak{s}}$ is a linear subspace of $\mathcal{H}$, such that $\mathfrak{s}(\cdot, \cdot)$ is linear in the first argument and conjugate-linear in the second.
    
    The associated \emph{quadratic form} $\mathfrak{t}: \Dom_{\mathfrak{t}} \to \mathbb{C}$ (with $\Dom_{\mathfrak{t}} = \Dom_{\mathfrak{s}}$) is defined by $\mathfrak{t}(\u) := \mathfrak{s}(\u, \u)$.
    Conversely, the sesquilinear form can be recovered from the quadratic form via the \emph{polarization identity}:
    \[ 
        \mathfrak{s}(\u, \v) = \frac{1}{4} \sum_{k=0}^3 i^{-k} \mathfrak{t}(\u + i^k \v). 
    \]
    In the future, whenever it does not cause confusion, we will use \( \mathfrak{t}(\u, \v) \) to denote the sesquilinear form induced by the quadratic form \( \mathfrak{t} \).
\end{definition}

\begin{definition}[Semiboundedness and Closedness]
    A quadratic form $\mathfrak{t}$ is called:
    \begin{itemize}
        \item \textbf{Hermitian} (or \emph{symmetric}):   if it takes real values, i.e., $\mathfrak{t}(\u) \in \mathbb{R}$ for all $\u \in \Dom_{\mathfrak{t}}$.
        
        \item \textbf{Semibounded (Bounded Below)}: if there exists $\gamma \in \mathbb{R}$ such that $\mathfrak{t}(\u) \ge \gamma \|\u\|^2$ for all $\u \in \Dom_{\mathfrak{t}}$.
        \item \textbf{Closed}: if $\mathfrak{t}$ is semibounded and its domain $\Dom_{\mathfrak{t}}$ is complete with respect to the form norm $\|\u\|_{\mathfrak{t}} := (\mathfrak{t}(\u) + (1-\gamma)\|\u\|^2)^{1/2}$. Equivalently, $\mathfrak{t}$ is closed if it is lower semi-continuous in the Hilbert space topology.
    \end{itemize}
\end{definition}

For verification purposes, we work with observables defined on the entire Hilbert space but taking infinite values. 
Note that for a semibounded closed Hermitian quadratic form $\mathfrak{t}$, if $\u\notin\Dom_{\mathfrak{t}}$, we always denote $t(\u)=+\infty$.
This motivates the following construction via spectral measures.

\begin{definition}[Extended Quadratic Form from Linear Relations]
    Let $H$ be a self-adjoint linear relation bounded below by $\gamma$. The \emph{extended quadratic form} $\mathfrak{t}_H: \mathcal{H} \to \mathbb{R} \cup \{+\infty\}$ is defined by the spectral integral:
    \[ 
        \mathfrak{t}_H(\u) := \int_{[\gamma, \infty]} \lambda \, d\|E_H(\lambda)\u\|^2. 
    \]
    The \emph{quadratic domain} of this form is the subspace where the integral is finite: $\Dom_{\mathfrak{t}}(H)= \{ \u \in \mathcal{H} \mid \mathfrak{t}_H(\u) < \infty \}$.
    Note that $\mathfrak{t}_H(\u) = \infty$ if $\u$ has any overlap with the multivalued part $\mul(H)$ or the integral on $\bR$ diverges, and $\Dom_{\mathfrak{t}}(H)$ can be strictly larger than $\Dom(H)$. In fact, it is not hard to prove that $\Dom_{\mathfrak{t}}(H)=\Dom_{\mathfrak{t}}(\shift{H}{\gamma})=\Dom(\sqrt{\shift{H}{\gamma}})$.
\end{definition}

The following theorem connects the abstract forms to our spectral definitions, justifying the form sum operation.

\begin{theorem}[First Representation Theorem, \cite{kato2013perturbation,behrndt2020boundary}]
    \label{thm:first_representation}
    There is a one-to-one correspondence between (either densely or non-densely defined) closed semibounded quadratic forms and bounded below self-adjoint linear relations.
    Specifically, for any closed semibounded Hermitian form $\mathfrak{t}$, there exists a unique bounded below self-adjoint relation $H$ such that $\mathfrak{t} = \mathfrak{t}_H$ (in the sense of the extended definition above).
\end{theorem}
\begin{proof}[Proof Sketch]
    The correspondence is established via the Riesz representation theorem in a rigged Hilbert space setting.
    
    $(\Rightarrow)$ Given a bounded below self-adjoint relation $H$, the spectral theorem provides a unique spectral measure $E_H$. The form is explicitly constructed via the integral $\mathfrak{t}(\u) = \int \lambda \, d\|E_H(\lambda)\u\|^2$, which is known to be closed and semibounded.
    
    $(\Leftarrow)$ Conversely, let $\mathfrak{t}$ be a semibounded closed form. We define the relation $H$ directly as the set of pairs satisfying the representation condition:
    \[ 
        H = \left\{ (x, y) \in \Dom_{\mathfrak{t}} \times \mathcal{H} \mid \forall u \in \Dom_{\mathfrak{t}}, \mathfrak{t}(x, u) = \langle y, u \rangle \right\}. 
    \]
    Standard results in perturbation theory (see \cite[Theorem VI.2.1]{kato2013perturbation}) confirm that this set forms a self-adjoint linear relation bounded below, with its multivalued part $\mul(H)$ precisely equal to the orthogonal complement of the form domain $\Dom(\mathfrak{t})^\perp$.
\end{proof}

Therefore, for a bounded below linear relation $H$, it's easy to see that $\mathfrak{t}_{cH}(x)=c\cdot\mathfrak{t}_{H}(x)$ holds for every $x\in\cH$. 

The correspondence between relations and forms allows us to rigorously extend the standard order on bounded operators to the unbounded and multivalued setting.

\begin{definition}[Extended L\"owner Order]
    \label{def:extended_lowner_order}
    Let $A$ and $B$ be self-adjoint linear relations which are both bounded below by $m$. We define the partial order $A \sqsubseteq B$ via their associated extended quadratic forms:
    \[ A \sqsubseteq B \iff \mathfrak{t}_A(\u) \le \mathfrak{t}_B(\u), \quad \forall \u \in \mathcal{H}. \]
    This inequality holds pointwise on the extended real line $[m, \infty]$. Note that formally $A \sqsubseteq B$ implies the inclusion of form domains: $\Dom(\mathfrak{t}_B) \subseteq \Dom(\mathfrak{t}_A)$.
\end{definition}


This correspondence allows us to define the sum of relations via the sum of their forms, bypassing domain issues in the operator space.

\begin{definition}[Form Sum of Linear Relations]
    Let $A$ and $B$ be two self-adjoint linear relations bounded below. Let $\mathfrak{t}_A$ and $\mathfrak{t}_B$ be their associated closed semibounded forms (via Theorem \ref{thm:first_representation}).
    We define the \emph{form sum} $A \dot{+} B$ as the unique self-adjoint linear relation associated with the sum of forms $\mathfrak{t} := \mathfrak{t}_A + \mathfrak{t}_B$, where:
    \[ \Dom_{\mathfrak{t}} = \Dom_{\mathfrak{t}}(A) \cap \Dom_{\mathfrak{t}}(B), \quad \mathfrak{t}(\u) = \mathfrak{t}_A(\u) + \mathfrak{t}_B(\u). \]
\end{definition}

\begin{proposition}[Well-definedness]
    The form sum $A \dot{+} B$ is a well-defined self-adjoint linear relation bounded below.
\end{proposition}

\begin{proof}
    Since $A$ and $B$ are bounded below, their associated forms $\mathfrak{t}_A$ and $\mathfrak{t}_B$ are bounded below and closed (i.e., lower semi-continuous in the Hilbert space topology).
    
    Consider the sum form $\mathfrak{t} = \mathfrak{t}_A + \mathfrak{t}_B$.
    \begin{itemize}
        \item \textbf{Semiboundedness}: If $\mathfrak{t}_A(\u) \ge \gamma_A \|\u\|^2$ and $\mathfrak{t}_B(\u) \ge \gamma_B \|\u\|^2$, then $\mathfrak{t}(\u) \ge (\gamma_A + \gamma_B) \|\u\|^2$. Thus $\mathfrak{t}$ is bounded below.
        \item \textbf{Closedness}: Since $\mathfrak{t}_A$ and $\mathfrak{t}_B$ are closed, they are lower semi-continuous functions. The sum of two lower semi-continuous functions is also lower semi-continuous. Therefore, the form $\mathfrak{t}$ is closed.
    \end{itemize}
    
    Since $\mathfrak{t}$ is a closed semibounded form, by the First Representation Theorem (Theorem \ref{thm:first_representation}), there exists a unique semibounded self-adjoint linear relation $S$ such that $\mathfrak{t}_S = \mathfrak{t}$. We define $A \dot{+} B := S$. This guarantees the existence and uniqueness of the form sum.
\end{proof}

\subsection{Strong Resolvent Convergence (SRC)}
\label{subsec:src}

To discuss the convergence of observables (e.g., in loop iterations), we require a topology that behaves well for unbounded objects. The standard operator norm topology is inapplicable because the difference between two unbounded operators is generally undefined or unbounded. Instead, we look at their “bounded inverses”.

\begin{definition}[Resolvent Operator]
    Let $H$ be a self-adjoint linear relation and $z \in \mathbb{C} \setminus \mathbb{R}$ be a complex number with non-zero imaginary part. The \emph{resolvent operator} $R_H(z)$ is defined explicitly as the set of pairs:
    \[ R_H(z)=(H-z)^{-1} := \left\{ (y, x) \in \mathcal{H} \oplus \mathcal{H} \;\middle|\; (x, y + z x) \in H \right\}. \]
\end{definition}

\begin{remark}[Why the Resolvent is Bounded]
    While $H$ itself may be unbounded or multivalued (containing ``infinite'' eigenvalues), the resolvent $R_H(z)$ transforms the problem into a bounded setting.
    Intuitively, since $H$ is self-adjoint, its ``spectrum'' (generalized eigenvalues) lies entirely on the real line $\mathbb{R}$. Because $z$ is chosen from the complex plane ($\Im(z) \neq 0$), it is strictly separated from the spectrum of $H$. In linear algebra terms, the matrix $(H - zI)$ is invertible because $z$ is not an eigenvalue.
    Formally, for any self-adjoint relation, $R_H(z)$ is always a bounded linear operator defined on the entire Hilbert space $\mathcal{H}$, satisfying the bound:
    \[ \|R_H(z)\| \le \frac{1}{|\Im(z)|}. \]
    This allows us to reduce the convergence of pathological relations to the convergence of well-behaved bounded operators.
\end{remark}

\begin{definition}[Strong Resolvent Convergence]
    A sequence of self-adjoint relations $H_n$ is said to converge to $H$ in the strong resolvent sense ($H_n \xrightarrow{SRC} H$) if for some (and hence all) $z \in \mathbb{C} \setminus \mathbb{R}$:
    \[ R_{H_n}(z)\u \to R_H(z)\u, \quad \forall \u \in \mathcal{H} \]
    in the norm topology of $\mathcal{H}$.
\end{definition}

The following theorem connects the monotonic order of observables with SRC, justifying our use of limits.

\begin{theorem}[Monotonicity and Convergence]
    \label{thm:monotonicity_convergence}
    Let $\{H_n\}$ be a sequence of self-adjoint relations bounded below such that $H_1 \sqsubseteq H_2 \sqsubseteq \dots$.
    Then there exists a unique self-adjoint relation $H_\infty$ such that $H_n \xrightarrow{SRC} H_\infty$.
    Furthermore, $H_{\infty}$ is the least upper bound (supremum) of the sequence with respect to $\sqsubseteq$, i.e., $H_\infty = \sup_{n} H_n$.
\end{theorem}

\begin{proof}[Proof Sketch]
    The existence of the limit is established by moving to the dual picture of quadratic forms.
    Let $\mathfrak{t}_n$ be the closed form associated with $H_n$. The monotonicity condition implies that the sequence of forms is pointwise non-decreasing.
    We define the limit form $\mathfrak{t}_{\infty}(\u) := \lim_{n \to \infty} \mathfrak{t}_{n}(\u)$ (valued in $\mathbb{R} \cup \{+\infty\}$) with the quadratic domain $\Dom_{\mathfrak{t}_{\infty}} = \{ \u \in \mathcal{H} \mid \lim \mathfrak{t}_{n}(\u) < \infty \}$.
    It is a standard result in perturbation theory (see \cite{Simon_1978} or \cite[Theorem VIII.3.11]{kato2013perturbation}) that the limit of an increasing sequence of closed forms is itself a closed form.
    By the First Representation Theorem, this limit form $\mathfrak{t}_\infty$ uniquely determines a self-adjoint relation $H_\infty$. The strong resolvent convergence also follows from the convergence of the associated forms \cite[Theorem 3.1]{Behrndt_Hassi_de_Snoo_Wietsma_2010}. Finally, We show $H_{\infty} = \sup_n H_n$.
    \begin{itemize}
        \item \textit{Upper Bound:} By definition, $\mathfrak{t}_n(\u) \le \mathfrak{t}_\infty(\u)$ for all $n$. This is equivalent to $H_n \sqsubseteq H_\infty$.
        \item \textit{Least Upper Bound:} Suppose $K$ is any upper bound, i.e., $H_n \sqsubseteq K$ for all $n$. This means $\mathfrak{t}_n(\u) \le \mathfrak{t}_K(\u)$ for all $n$. Taking the supremum over $n$, we get:
        \[ \mathfrak{t}_\infty(\u) = \sup_n \mathfrak{t}_n(\u) \le \mathfrak{t}_K(\u). \]
        This implies $H_\infty \sqsubseteq K$.
    \end{itemize}
    Thus, $H_\infty$ is the unique supremum.
\end{proof}

\begin{remark}[Intuition]
    Intuitively, Strong Resolvent Convergence ensures that the spectral properties of the sequence behave continuously. For an increasing sequence of observables (representing, e.g., accumulated costs), the ``energy levels'' shift upwards. SRC guarantees that this shift leads to a well-defined limit observable $H_\infty$, whose spectral projections are the strong limits of the spectral projections of $H_n$. This excludes pathological behaviors where the spectrum might ``evaporate'' or oscillate wildly.
\end{remark}
\begin{corollary}\label{corollary:convex-convergence}
    Let $\{H_n\}$ be a sequence of self-adjoint relations uniformly bounded below by $\gamma$, and $\{a_n\}$ be a sequence of positive numbers with $\sum_n a_n=1$.
    Then there exists a self-adjoint relation $H$ bounded below by $\gamma$ such that the form sum $\sum_{k=1}^{n}a_k H_k$ convergent to $H$ in the sense of SRC. We denote $H=\sum_{n=1}^{\infty}a_n H_n$.
\end{corollary}
\begin{proof}
    This can be directly obtained by \Cref{thm:first_representation} and \Cref{thm:monotonicity_convergence}.
\end{proof}

Finally, to connect our theoretical framework with practical verification techniques, we define the truncation of a linear relation. Consistent with standard program semantics (e.g., timeout or saturation), we cap the value at $n$ rather than projecting it to zero.

\begin{definition}[Truncation of Linear Relations]
    \label{def:truncation}
    Let $H$ be a self-adjoint linear relation bounded below by $\gamma$. For any real number $n > \gamma$, the \emph{truncation} $H^{(n)}$ is defined as the bounded self-adjoint operator:
    \[ H^{(n)} := \int_{[\gamma, n]} \lambda \, dE_H(\lambda) + n \cdot E_H((n, \infty]). \]
    Intuitively, this operator behaves as $H$ where the value is small, and saturates to the constant $n$ on the subspace corresponding to large or infinite values (including the multivalued part).
\end{definition}

\begin{proposition}[Convergence of Truncations]
    \label{prop:truncation_src}
    The sequence of truncations $\{H^{(n)}\}_{n \in \mathbb{N}}$ converges to $H$ in the strong resolvent sense:
    \[ H^{(n)} \xrightarrow{SRC} H \quad \text{as } n \to \infty. \]
\end{proposition}

\begin{proof}
    Fix $z \in \mathbb{C} \setminus \mathbb{R}$ and $\u \in \mathcal{H}$. We compare the resolvents.
    The resolvent of the truncation $H^{(n)}$ acts on the two spectral subspaces as:
    \[ R_{H^{(n)}}(z)\u = \int_{[\gamma, n]} \frac{1}{\lambda - z} \, dE_H(\lambda)\u + \frac{1}{n - z} E_H((n, \infty])\u. \]
    The resolvent of the original relation $H$ is:
    \[ R_{H}(z)\u = \int_{[\gamma, \infty]} \frac{1}{\lambda - z} \, dE_H(\lambda)\u = \int_{[\gamma, n]} \frac{1}{\lambda - z} \, dE_H(\lambda)\u + \int_{(n, \infty]} \frac{1}{\lambda - z} \, dE_H(\lambda)\u. \]
    Subtracting the two, the terms on $[\gamma, n]$ cancel out. The squared norm of the difference is determined purely by the tail integral:
    \[ 
        \| (R_H(z) - R_{H^{(n)}}(z))\u \|^2 = \int_{(n, \infty]} \left| \frac{1}{\lambda - z} - \frac{1}{n - z} \right|^2 \, d\|E_H(\lambda)\u\|^2. 
    \]
    We define the integrand function $f_n(\lambda) = \left| \frac{1}{\lambda - z} - \frac{1}{n - z} \right| \cdot \mathbb{I}_{(n, \infty]}(\lambda)$, where we adopt the convention $1/(\infty - z) = 0$.
    \begin{itemize}
        \item \textbf{Pointwise Convergence}: For any fixed finite $\lambda$, eventually $n > \lambda$, so $\lambda \notin (n, \infty]$ and $f_n(\lambda) = 0$. For $\lambda = +\infty$, $f_n(+\infty) = |0 - \frac{1}{n-z}| \to 0$ as $n \to \infty$. Thus $f_n(\lambda) \to 0$ pointwise everywhere on $[\gamma, \infty]$.
        \item \textbf{Domination}: The term $|\frac{1}{\lambda - z}|$ is bounded by $1/|\Im(z)|$, and $|\frac{1}{n - z}|$ is similarly bounded. Thus, $|f_n(\lambda)| \le \frac{2}{|\Im(z)|}$, which is integrable with respect to the finite measure $d\|E_H(\cdot)\u\|^2$.
    \end{itemize}
    By the Lebesgue Dominated Convergence Theorem for abstract measures (see, e.g., \cite[Theorem 1.34]{rudin1987real}), the integral vanishes as $n \to \infty$. Therefore, $R_{H^{(n)}}(z)\u \to R_H(z)\u$.
\end{proof}
\begin{proposition}[Monotonicity of Truncations]
    \label{prop:truncation_monotonicity}
    The sequence of truncations is monotonically increasing and bounded above by $H$ in the extended Löwner order. Specifically, for any $\gamma < n < m$:
    \[ H^{(n)} \sqsubseteq H^{(m)} \sqsubseteq H. \]
\end{proposition}

\begin{proof}
    This follows from the functional calculus. The truncation $H^{(n)}$ corresponds to the function $f_n(\lambda) = \min(\lambda, n)$ (defined on $[\gamma, \infty]$). Since $n < m$, we have $f_n(\lambda) \le f_m(\lambda) \le \lambda$ for all $\lambda \in [\gamma, \infty]$. The operator inequality follows immediately from the order-preserving property of the spectral calculus.
\end{proof}

\subsection{Extended Trace and Expectation Values}
\label{subsec:extended_trace}

Standard quantum mechanics defines the expectation value via $\Tr(A\rho)$. For unbounded observables or linear relations, we require a rigorous definition that handles infinite energy and singularities consistently.

\subsubsection{Definitions and Equivalence}

We provide two equivalent definitions for the expectation value. The first is based on spectral measure theory (coordinate-free), and the second is based on basis expansion (computational).

\begin{definition}[Definition via Spectral Integral]
    \label{def:trace_integral}
    Let $H$ be a self-adjoint linear relation bounded below by $\gamma$, and $\rho$ be a partial density operator on $\cH$. Let $E_H$ be the spectral measure of $H$ on the extended real line $\bR\cup\{+\infty\}$. The \emph{extended trace} is defined as the Lebesgue-Stieltjes integral with respect to the induced measure $\mu_\rho^H(\Omega) = \tr(E_H(\Omega)\rho)$:
    \[
        \Tr(H\rho) := \int_{[\gamma, \infty]} \lambda \, d\tr(E_H(\lambda)\rho).
    \]
    This value is well-defined in $\mathbb{R} \cup \{+\infty\}$. Specifically, if $\rho$ has overlap with the multivalued part $\mul(H)$, the term $\infty \cdot \tr(E_H(\{+\infty\})\rho)$ ensures the expectation diverges correctly.
\end{definition}

Alternatively, one can define the trace by inspecting the action of the quadratic form on the eigenstates of $\rho$. This corresponds to the intuitive "domain check" procedure in program verification.

\begin{definition}[Definition via Basis Expansion / Domain Check]
    \label{def:trace_basis}
    Let $\rho = \sum_{k} p_k |\u_k\rangle\langle\u_k|$ be the spectral decomposition of $\rho$, where $\{|\u_k\rangle\}$ is an orthonormal basis of eigenvectors and $p_k \ge 0$. Let $\mathfrak{t}_H$ be the closed quadratic form associated with $H$. The extended trace is:
    \[
        \Tr(H\rho) := \sum_{k} p_k \cdot \mathfrak{t}_H(\u_k),
    \]
    where we adopt the convention that $\mathfrak{t}_H(\u) = +\infty$ if $\u \notin \Dom_{\mathfrak{t}}(H)$ (which includes the case where $\u$ has a component in $\mul(H)$).
\end{definition}

\begin{proposition}[Equivalence and Basis Independence]
    The definitions \ref{def:trace_integral} and \ref{def:trace_basis} are equivalent. 
\end{proposition}

\begin{proof}
    By the spectral theorem, the form value is $\mathfrak{t}_H(\u_k) = \int_{[\gamma, \infty]} \lambda \, d\|E_H(\lambda)\u_k\|^2$. Substituting this into the sum:
    \[
        \text{Sum} = \sum_k p_k \int_{[\gamma, \infty]} \lambda \, d\langle \u_k, E_H(\lambda)\u_k \rangle.
    \]
    To justify swapping the sum and integral, we consider the product measure on $\mathbb{N} \times [\gamma, \infty]$ (counting measure $\times$ spectral measure). Since the measures are $\sigma$-finite, we apply Rudin's Fubini Theorem \cite[Theorem 8.8]{rudin1987real}. 
    
    We assume $\gamma < 0$ without loss of generality (if $\gamma \ge 0$, the negative integration interval is empty, and the proof reduces entirely to the positive part). We split the domain into two parts:
    
    1. \textbf{Negative Part $[\gamma, 0)$}: The integrand $\lambda$ is bounded, and the measure is finite. Thus, the integral converges absolutely:
    \[ \sum_k p_k \int_{[\gamma, 0)} |\lambda| \, d\mu_{\u_k} < \infty. \]
    By Fubini's theorem for integrable functions (specifically \cite[Theorem 8.8(c)]{rudin1987real}), we can legally interchange the sum and integral.
    
    2. \textbf{Positive Part $[0, \infty]$}: The integrand $\lambda$ is non-negative. By Fubini-Tonelli's theorem (specifically \cite[Theorem 8.8(a)]{rudin1987real}, the non-negative case), the interchange is valid regardless of whether the value is finite or infinite.
    
    Combining both parts via linearity:
    \[
        \text{Sum} = \int_{[\gamma, \infty]} \lambda \, d\left( \sum_k p_k \langle \u_k, E_H(\lambda)\u_k \rangle \right) = \int_{[\gamma, \infty]} \lambda \, d\tr(E_H(\lambda)\rho),
    \]
    which recovers the spectral integral definition.
\end{proof}

\subsubsection{Linearity Properties}

The extended trace behaves linearly, respecting the lower bounds.

\begin{proposition}[Linearity]
    \label{prop:trace_linearity}
    The extended trace satisfies:
    \begin{enumerate}
        \item[(1)] \textbf{Linearity in State:} For partial density operators $\rho_1, \rho_2 $ and scalars $0\leq c_1, c_2 \leq 1$:
        \[ \Tr(H (c_1 \rho_1 + c_2 \rho_2)) = c_1 \Tr(H \rho_1) + c_2 \Tr(H \rho_2). \]
        \item[(2)] \textbf{Additivity in Observable:} For self-adjoint relations $A, B$ bounded below (where the form sum $A \dot{+} B$ is defined):
        \[ \Tr((A \dot{+} B)\rho) = \Tr(A\rho) + \Tr(B\rho). \]
        \item[(3)] \textbf{Convexity in Observable:} Let $\{H_n\}$ be a sequence of self-adjoint relations uniformly bounded below by $\gamma$, and $\{a_n\}$ be a sequence of positive numbers with $\sum_n a_n=1$. Then
        \[ \Tr((\sum_{n=1}^{\infty}a_n H_n)\rho) = \sum_{n=1}^{\infty}a_n\Tr(H_n\rho). \]
    \end{enumerate}
\end{proposition}
\begin{proof}
    For (1), notice that the induced measure is linear in $\rho$: $\mu_{c_1\rho_1 + c_2\rho_2}(\Omega) = c_1\mu_{\rho_1}(\Omega) + c_2\mu_{\rho_2}(\Omega)$. The result follows from the linearity of the Lebesgue-Stieltjes integral with respect to the measure.
    
    For (2), we employ the equivalent basis definition (\ref{def:trace_basis}). Let $\rho = \sum p_k |\u_k\rangle\langle\u_k|$. By the definition of the form sum, $\mathfrak{t}_{A \dot{+} B}(\u) = \mathfrak{t}_A(\u) + \mathfrak{t}_B(\u)$ for any vector $\u$ (with the convention that sums involving $+\infty$ are $+\infty$). Thus:
    \[
        \sum_k p_k \mathfrak{t}_{A \dot{+} B}(\u_k) = \sum_k p_k (\mathfrak{t}_A(\u_k) + \mathfrak{t}_B(\u_k)) = \sum_k p_k \mathfrak{t}_A(\u_k) + \sum_k p_k \mathfrak{t}_B(\u_k).
    \]
    The rearrangement is valid because the terms are bounded below.

    For (3), without loss of generality, we assume $\gamma \ge 0$ (otherwise, replace $H_n$ with $H_n - \gamma I \ge 0$).
    From \Cref{corollary:convex-convergence}, $H=\sum_{n=1}^{\infty}a_nH_n$ is a well-defined self-adjoint relation.
    Let $\rho = \sum_{k} \lambda_k |\u_k\rangle\langle\u_k|$ be the spectral decomposition of the state $\rho$, where $\lambda_k \ge 0$ and $\sum \lambda_k = 1$.
    By the definition of the extended trace in terms of quadratic forms:
    \begin{align*}
        \Tr\left( \left(\sum_{n=1}^{\infty} a_n H_n\right) \rho \right) 
        &= \sum_{k} \lambda_k \mathfrak{t}_{H}(\u_k) \tag{\Cref{def:trace_basis}} \\
        &= \sum_{k} \lambda_k \left( \sum_{n} a_n \mathfrak{t}_{H_n}(\u_k) \right) \tag{Form sum, \Cref{corollary:convex-convergence}} \\
        &= \sum_{k} \sum_{n} \lambda_k a_n \mathfrak{t}_{H_n}(\u_k).
    \end{align*}
    Since all terms $\mathfrak{t}_{H_n}(\u_k)$ are non-negative (due to the lower bound assumption) and the coefficients $\lambda_k, a_n$ are positive, we can swap the order of summation (Tonelli's Theorem for series):
    \begin{align*}
        \cdots &= \sum_{n} a_n \left( \sum_{k} \lambda_k \mathfrak{t}_{H_n}(\u_k) \right) \tag{Swap Sums, see \cite{rudin1987real}} \\
        &= \sum_{n} a_n \Tr(H_n \rho). \tag{\Cref{def:trace_basis}}
    \end{align*}
    This completes the proof.
\end{proof}
\subsubsection{Convergence Theorems}

We now establish the convergence properties. Firstly we establish that the extended trace is the limit of its truncations.

\begin{lemma}[Approximation by Truncations]
    \label{lemma:trace_approx}
    Let $H$ be bounded below by $\gamma$ and $H^{(M)}$ be its truncation at $M > \gamma$. Then for any partial density operator $\rho$:
    \[ \Tr(H^{(M)}\rho) \le \Tr(H\rho) \quad \text{and} \quad \lim_{M \to \infty} \Tr(H^{(M)}\rho) = \Tr(H\rho). \]
\end{lemma}

\begin{proof}
    The truncation is defined via the functional calculus as $H^{(M)} := f_M(H)$, where $f_M(\lambda) = \min(\lambda, M)$ is a Borel function on the extended real line $\overline{\mathbb{R}}$.
    
    Recall the spectral theorem for self-adjoint relations (\Cref{thm:spectral-theorem-for-LR}), which establishes a one-to-one correspondence between $H$ and a spectral measure $E_H$ on $\overline{\mathbb{R}}$. By the definition of the functional calculus, the operator $H^{(M)}$ is given by the integral $\int \lambda\, dE_{H^{(M)}}(\lambda)$.
    Consequently, the expectation value transforms as:
    \[ \Tr(H^{(M)}\rho) = \int_{[\gamma, \infty]} \lambda \, d\tr(E_{H^{(M)}}(\lambda)\rho) = \int_{[\gamma, \infty]} f_M(\lambda) \, d\tr(E_H(\lambda)\rho). \]
    Thus, calculating $\Tr(H^{(M)}\rho)$ is equivalent to integrating the truncated function $f_M(\lambda)$ against the original measure $\mu_\rho^H(\cdot)=\tr(E_H(\cdot)\rho)$.
    
    Now we analyze the limit. For sufficiently large $M$, $f_M(\lambda) \le \lambda$ holds everywhere on the spectrum (treating $\infty$ naturally), implying $H^{(M)} \sqsubseteq H$ and $\Tr(H^{(M)}\rho) \le \Tr(H\rho)$.
    
    Furthermore, $f_M(\lambda) \nearrow \lambda$ pointwise as $M \to \infty$. To apply the Monotone Convergence Theorem (which requires non-negative functions), we consider the shifted sequence $g_M(\lambda) = f_M(\lambda) - \gamma$. Since $H$ is bounded below by $\gamma$, we have $f_M(\lambda) \ge \gamma$, so $g_M(\lambda) \ge 0$. Applying the standard MCT (\cite[Theorem 1.26]{rudin1987real}) to $g_M$:
    \[ \lim_{M \to \infty} \int_{[\gamma, \infty]} (f_M(\lambda) - \gamma) \, d\mu_\rho^H = \int_{[\gamma, \infty]} (\lambda - \gamma) \, d\mu_\rho^H. \]
    Since the measure is finite ($\mu_\rho^H(\overline{\mathbb{R}}) \le 1$), the constant term $\gamma$ is integrable and cancels from both sides, yielding $\lim \Tr(H^{(M)}\rho) = \Tr(H\rho)$.
\end{proof}

With this lemma, we prove the fundamental convergence theorems in the natural order.

\begin{theorem}[Quantum Fatou's Lemma]
    \label{thm:quantum_fatou}
    Let $\{H_n\}$ be a sequence of self-adjoint relations uniformly bounded below by $\gamma$, such that $H_n \xrightarrow{SRC} H$. For any fixed partial density operator $\rho $:
    \[ \Tr(H\rho) \le \liminf_{n \to \infty} \Tr(H_n \rho). \]
\end{theorem}

\begin{proof}
    Fix a truncation level $M > \gamma$. Let $H_n^{(M)}$ and $H^{(M)}$ be the truncations of $H_n$ and $H$ respectively, defined by the function $f_M(\lambda) = \min(\lambda, M)$.
    
    We invoke the continuity of the functional calculus with respect to strong resolvent convergence. Recall that for self-adjoint relations, SRC is defined via the strong convergence of resolvents $R_z(H_n) \to R_z(H)$, which are bounded single-valued operators.
    According to \cite[Theorem VIII.20(b)]{reed1980methods}, for any sequence of self-adjoint operators (or relations via their resolvents) converging in SRC, $f(H_n) \to f(H)$ strongly for any \emph{bounded continuous function} $f$.
    
    Since our operators are uniformly bounded below by $\gamma$, the truncation function $f_M$ restricted to the spectrum $[\gamma, \infty)$ is bounded and continuous. Thus:
    \[ H_n^{(M)} = f_M(H_n) \xrightarrow{strong} f_M(H) = H^{(M)}. \]
    
    For bounded operators, strong convergence implies the convergence of expectation values (trace) against a fixed trace-class operator $\rho$. Thus:
    \[ \lim_{n \to \infty} \Tr(H_n^{(M)}\rho) = \Tr(H^{(M)}\rho). \]
    
    By Lemma \ref{lemma:trace_approx}, we have the approximation inequality $\Tr(H_n \rho) \ge \Tr(H_n^{(M)} \rho)$. Taking the liminf:
    \[ \liminf_{n \to \infty} \Tr(H_n \rho) \ge \lim_{n \to \infty} \Tr(H_n^{(M)} \rho) = \Tr(H^{(M)} \rho). \]
    Since this holds for any $M$, letting $M \to \infty$ and applying Lemma \ref{lemma:trace_approx} again ($\Tr(H^{(M)}\rho) \to \Tr(H\rho)$) proves the theorem.
\end{proof}

\begin{theorem}[Quantum Monotone Convergence Theorem]\label{thm:quantum-monotone-convergence}
    Let $\{H_n\}$ be an increasing sequence of self-adjoint relations bounded below ($H_1 \sqsubseteq H_2 \sqsubseteq \dots$) converging to $H$ in SRC. Then for any $\rho$:
    \[ \lim_{n \to \infty} \Tr(H_n \rho) = \Tr(H \rho). \]
\end{theorem}

\begin{proof}
    By Quantum Fatou's Lemma, $\Tr(H\rho) \le \liminf \Tr(H_n\rho)$.
    Conversely, since $H_n \sqsubseteq H$ for all $n$, we have $\Tr(H_n\rho) \le \Tr(H\rho)$ by monotonicity. Thus $\limsup \Tr(H_n\rho) \le \Tr(H\rho)$.
    Combining these gives the limit.
\end{proof}

\begin{theorem}[Lower Semi-continuity with respect to State]
    \label{thm:lsc_rho}
    Let $H$ be a self-adjoint relation bounded below. If $\{\rho_n\}$ converges to $\rho$ in trace norm, then:
    \[ \Tr(H\rho) \le \liminf_{n \to \infty} \Tr(H\rho_n). \]
\end{theorem}

\begin{proof}
    Fix $M$. Since $H^{(M)}$ is bounded, the map $\sigma \mapsto \Tr(H^{(M)}\sigma)$ is continuous. Thus $\lim_n \Tr(H^{(M)}\rho_n) = \Tr(H^{(M)}\rho)$.
    Using the approximation inequality $\Tr(H\rho_n) \ge \Tr(H^{(M)}\rho_n)$:
    \[ \liminf_{n \to \infty} \Tr(H\rho_n) \ge \lim_{n \to \infty} \Tr(H^{(M)}\rho_n) = \Tr(H^{(M)}\rho). \]
    Letting $M \to \infty$, the RHS converges to $\Tr(H\rho)$ by Lemma \ref{lemma:trace_approx}.
\end{proof}

\begin{example}[Strict Inequality / Energy Escape]
    The inequality in Theorem \ref{thm:lsc_rho} can be strict.
    The expectation value can "escape" to infinity even if the state converges strongly.
    
    Consider a Hilbert space with orthonormal basis $\{|k\rangle\}_{k=0}^\infty$. Let $H$ be an unbounded operator with eigenvalues $\lambda_k = k^2$ (e.g., a discrete harmonic oscillator), so $H|k\rangle = k^2|k\rangle$.
    Let $\rho = |0\rangle\langle 0|$ be the ground state. Construct a sequence of mixed states mixing a large portion of the ground state with a tiny portion of a high-energy state:
    \[ \rho_n = \left(1 - \frac{1}{n}\right) |0\rangle\langle 0| + \frac{1}{n} |n\rangle\langle n|. \]
    
    \begin{enumerate}
        \item \textbf{State Convergence}: In trace norm, $\rho_n$ converges to $\rho$:
        \[ \|\rho_n - \rho\|_1 = \| -\frac{1}{n}|0\rangle\langle 0| + \frac{1}{n}|n\rangle\langle n| \|_1 = \frac{1}{n} + \frac{1}{n} = \frac{2}{n} \to 0. \]
        \item \textbf{Expectation Divergence}: The expectation value of $H$ in state $\rho_n$ is:
        \[ \Tr(H\rho_n) = \left(1 - \frac{1}{n}\right) \cdot 0 + \frac{1}{n} \cdot n^2 = n. \]
        As $n \to \infty$, $\Tr(H\rho_n) \to \infty$.
    \end{enumerate}
    Thus, we have a strict inequality:
    \[ \Tr(H\rho) = 0 < \infty = \liminf_{n \to \infty} \Tr(H\rho_n). \]
    This phenomenon confirms that the expectation value functional is only lower semicontinuous, not continuous, with respect to the trace norm topology.
\end{example}

\begin{theorem}[Generalized Quantum Fatou's Lemma]
    \label{thm:gen_fatou}
    Let $\{H_n\}$ be a sequence of self-adjoint relations uniformly bounded below by $\gamma$ with $H_n \xrightarrow{SRC} H$. Let $\{\rho_n\}$ be a sequence of states with $\|\rho_n - \rho\|_1 \to 0$. Then:
    \[ \Tr(H\rho) \le \liminf_{n \to \infty} \Tr(H_n \rho_n). \]
\end{theorem}

\begin{proof}
    Fix a truncation level $M > \gamma$. Let $H_n^{(M)}$ and $H^{(M)}$ be the truncations defined by $f_M(\lambda) = \min(\lambda, M)$.
    
    First, we establish a lower bound using monotonicity. Since $f_M(\lambda) \le \lambda$ on the spectrum $[\gamma, \infty)$, we have the operator inequality $H_n \sqsupseteq H_n^{(M)}$ for all $n$. Consequently, for the expectation values:
    \[ \Tr(H_n \rho_n) \ge \Tr(H_n^{(M)} \rho_n). \]
    
    Next, we analyze the convergence of the right-hand side. Note that $H_n^{(M)}$ and $H^{(M)}$ are bounded operators.
    \begin{enumerate}
        \item By the continuity of functional calculus (as argued in Theorem \ref{thm:quantum_fatou}), $H_n \xrightarrow{SRC} H$ implies $H_n^{(M)} \xrightarrow{strong} H^{(M)}$ strongly. Moreover, the sequence $\{H_n^{(M)}\}$ is uniformly bounded in operator norm by $\max(|\gamma|, M)$.
        \item The states converge in trace norm: $\|\rho_n - \rho\|_1 \to 0$.
    \end{enumerate}
    It is a standard property of the trace functional that if bounded operators $A_n \to A$ strongly (with uniform norm bound) and states $\sigma_n \to \sigma$ in trace norm, then $\Tr(A_n \sigma_n) \to \Tr(A \sigma)$. Thus:
    \[ \lim_{n \to \infty} \Tr(H_n^{(M)} \rho_n) = \Tr(H^{(M)} \rho). \]
    
    Taking the limit inferior on the inequality $\Tr(H_n \rho_n) \ge \Tr(H_n^{(M)} \rho_n)$:
    \[ \liminf_{n \to \infty} \Tr(H_n \rho_n) \ge \lim_{n \to \infty} \Tr(H_n^{(M)} \rho_n) = \Tr(H^{(M)} \rho). \]
    
    Finally, letting $M \to \infty$, the term $\Tr(H^{(M)}\rho)$ converges to $\Tr(H\rho)$ by Lemma \ref{lemma:trace_approx}.
\end{proof}

\subsection{Proof of the Main Convergence Theorem in Section \ref{Unbounded_Section}}
\label{subsec:proof_main_convergence}

We conclude this mathematical appendix by rigorously justifying the \textbf{Main Convergence Theorem} (Theorem \ref{thm:main-convergence-for-trunc}) presented in the main text. That theorem summarizes the convergence properties specifically for the sequence of truncations $A_n$.

\begin{proof}[Proof of Theorem \ref{thm:main-convergence-for-trunc}]
    Let $A$ be bounded below and $A_n$ be its truncation at $n$.
    
    \textbf{1. Monotone Convergence for Truncations:}
    This is exactly Lemma \ref{lemma:trace_approx} (Approximation by Truncations). 

     \textbf{2. Generalized Quantum Fatou Lemma:}
    This is a specific instance of Theorem \ref{thm:gen_fatou} (Generalized Quantum Fatou's Lemma).
    By Proposition \ref{prop:truncation_src} (in the main text), $A_n \xrightarrow{SRC} A$. Thus we get the conclusion by applying  Theorem \ref{thm:gen_fatou} on $A_n$.
    
    \textbf{3. Lower Semi-continuity (LSC):}
    This corresponds to Theorem \ref{thm:lsc_rho} (LSC with respect to State). 
\end{proof}
\section{Quantum Optimal Transport}

\begin{lemma}[Compactness of couplings, see~{\cite[Theorem 1.4]{qstrasseninf}}]%
    \label{lem:coupling-compact}
    For any density operators $\rho_1$ and $\rho_2$, the set of coupling, i.e., $ \couplings{\rho_1}{\rho_2}$, is compact with respect to the trace-norm topology.
\end{lemma}




\subsection{Proof of the Duality Theorem with Bounded Costs}

\begin{theorem}[Kantorovich Duality for Infinite Dimensional Quantum Systems]%
    \label{thm:quantum-Kantorovich-duality-appendix}
     Let $\mathcal{H}_1$ and $\mathcal{H}_2$ be two Hilbert spaces, 
    $\rho_1\in \mathcal{D}^1(\mathcal{H}_1)$, $\rho_2\in \mathcal{D}^1(\mathcal{H}_2)$ be two density operators, and $C$ be a bounded positive operator.
    Then, 
    \begin{equation*}
        \inf_{\rho:\coupling{\rho_1}{\rho_2}} \Tr (C\rho)
        = \sup_{\rbra{n, C_1, C_2}\in \DualSet{C} } \Tr(C_1\rho_1) + \Tr(C_2\rho_2) -n,
    \end{equation*}
    where $\DualSet{C} = \set{\rbra{n,C_1, C_2}}{C_1\otimes I + I \otimes C_2 \sqsubseteq C+ nI}$.
\end{theorem}

\begin{proof}
    In the following,
    we use $\opt{\cdot}$ to denote the optimal value of the primal optimization problem, i.e.,
    \begin{equation*}
        \opt{C} \coloneqq \inf_{\rho:\coupling{\rho_1}{\rho_2}} \Tr (C\rho),
    \end{equation*}
    and $\dopt{\cdot}$ to denote the optimal value of the dual optimization problem, i.e.,
    \begin{equation*}
        \dopt{C} \coloneqq \sup_{\rbra{n, C_1, C_2}\in \DualSet{C} } \Tr(C_1\rho_1) + \Tr(C_2\rho_2) -n.
    \end{equation*}
    Since for $i = 1,2$, $\rho_i$ is a density operator on $\mathcal{H}_i$, 
    we can write its spectral decomposition as
    \begin{equation*}
        \rho_i = \sum_{j=1}^{\infty} \lambda^{(i)}_j \ket{e_j^{(i)}}\bra{e_j^{(i)}},
    \end{equation*}
    with $\sum_{j=1}^{\infty} \lambda^{(i)}_{j} = 1$, $ \left\{ \ket{e_j^{(i)}} \right\}_{j=1}^{\infty}$ being a orthonormal basis of $\mathcal{H}_i$, and $\lambda_j^{(i)}\ge 0 $ for all integer $j$.

     Now, let $0< \delta < 1/3$ be a fixed constant. Since $\sum_{j=1}^{\infty} \lambda^{(i)}_{j} = 1$ for $i = 1,2$, there is an integer $N$ such that $\sum_{j=1}^{\infty} \lambda^{(i)}_{j} \ge 1 -\delta$ for $i = 1,2$.
     For $i = 1,2$, let $K^{(i)}$ denote $\spanspace\left\{ \ket{e_j^{(i)}} \right\}_{j=1}^{N}$, and $P_{K^{(i)}}$ denote the projector onto $K^{(i)}$.
     In addition, for $i = 1, 2$, let 
     \begin{equation*}
         \rho^{\prime}_j \coloneqq \frac{\sum_{j=1}^{N} \lambda^{(i)}_{j} \ket{e_j^{(i)}}\bra{e_j^{(i)}}}{\sum_{j=1}^{N} \lambda^{(i)}_{j}}.
     \end{equation*}
     which is a density operator, and $C'$ be the restriction of $C$ on the space $K^{(1)}\otimes K^{(2)}$.
     Now, let 
     $\opt{C^{\prime}}$ denote the optimal value of the primal optimization problem for the modified problem $C'$ and $\rho_1', \rho_2'$, i.e.,
    \begin{equation*}
        \opt{C'} \coloneqq \inf_{\rho:\coupling{\rho_1'}{\rho_2'}} \Tr (C'\rho),
    \end{equation*}
    We claim that 
    \begin{equation*}
        \opt{C} \le \opt{C'} + 2\delta \norm{C},
    \end{equation*}
    whose proof is deferred to~\Cref{prop:pertubation-primal-quantum-duality}.
    Similarly, 
    let 
     $\dopt{C^{\prime}}$ denote the optimal value of the dual optimization problem for the modified problem $C'$ and $\rho_1', \rho_2'$, i.e.,
    \begin{equation*}
        \dopt{C'} \coloneqq \sup_{\rbra{n, C_1, C_2}\in \DualSet{C'} } \Tr(C_1\rho_1') + \Tr(C_2\rho_2') -n.
    \end{equation*}
    We claim that
    \begin{equation*}
        \dopt{C} \ge \dopt{C'} - 3\sqrt{\delta}- 4\delta \norm{C}-6\sqrt{\delta}\norm{C}^2,
    \end{equation*}
    whose proof is deferred to~\Cref{prop:pertubation-dual-quantum-duality}.
    Then, by~\Cref{thm:quantum-Kantorovich-duality-finite-dim}, we have $\dopt{C'} = \opt{C'}$, giving
    \begin{equation*}
         \opt{C} \le \opt{C'} + 2\delta \norm{C} \le \dopt{C} + 3\sqrt{\delta}+ 6\delta \norm{C}+ 6\sqrt{\delta}\norm{C}^2.
    \end{equation*}
    Combined with the trivial weak duality $\opt{C}\ge \dopt{C}$, the result then follows by taking the limit $\delta\to 0$.
\end{proof}

\begin{proposition}%
\label{prop:pertubation-primal-quantum-duality}
    Let $\delta, C, C', \rho_1, \rho_1', \rho_2, \rho_2', \opt{C}$ and $\opt{C'}$ be the same as defined in the proof of~\Cref{thm:quantum-Kantorovich-duality-appendix}.
    Then, 
    \begin{equation*}
        \opt{C} \le \opt{C'} + 2\delta \norm{C}.
    \end{equation*}
\end{proposition}

\begin{proof}
    Let $\rho'$ be any coupling of $\rho_1'$ and $\rho_2'$
    on $K^{(1)}\otimes K^{(2)}$.
    Recall that 
    for $i = 1,2$, the spectral decomposition of $\rho_i$ is
    \begin{equation*}
        \rho_i = \sum_{j=1}^{\infty} \lambda^{(i)}_j \ket{e_j^{(i)}}\bra{e_j^{(i)}},
    \end{equation*}
    with $\sum_{j=1}^{\infty} \lambda^{(i)}_{j} = 1$, $ \left\{ \ket{e_j^{(i)}} \right\}_{j=1}^{\infty}$ being a orthonormal basis of $\mathcal{H}_i$, and $\lambda_j^{(i)}\ge 0 $ for all integer $j$.
    We then define a density operator $\rho$ on $\mathcal{H}_1\otimes \mathcal{H}_2$ as
    \begin{equation*}
        \rho\coloneqq \rbra*{\sum_{j=1}^N \lambda_j^{(1)}}\rbra*{\sum_{j=1}^N \lambda_j^{(2)}} \rho^{\prime} + \sum_{\max\{j_1, j_2\}\ge N} \lambda_{j_1}^{(1)}\lambda_{j_2}^{(2)} \ket{e_{j_1}^{(1)}}\bra{e_{j_1}^{(1)}}\otimes \ket{e_{j_2}^{(2)}}\bra{e_{j_2}^{(2)}}.
    \end{equation*}
    We first show $\tr_2(\rho) = \rho_1$. In fact, we have
    \begin{equation*}
        \begin{aligned}
            \tr_2(\rho) &= \sum_{j_2} \bra{e_{j_2}^{(2)}} \rho \ket{e_{j_2}^{(2)}} \\
            &= \rbra*{\sum_{j=1}^N \lambda_j^{(1)}}\rbra*{\sum_{j=1}^N \lambda_j^{(2)}} \rho_1^{\prime} +  \sum_{\max\{j_1, j_2\}\ge N} \lambda_{j_1}^{(1)}\lambda_{j_2}^{(2)} \ket{e_{j_1}^{(1)}}\bra{e_{j_1}^{(1)}} \\
             &= \rbra*{\sum_{j=1}^N \lambda_j^{(2)}} \rbra*{\sum_{j_1=1}^N \lambda_{j}^{(1)} \ket{e_{j}^{(1)}}\bra{e_{j}^{(1)}}} 
             +   \rbra*{\sum_{j=1}^{N} \lambda_j^{(2)}} \sum_{j_1 = N+1}^{\infty}\lambda_{j_1}^{(1)}\ket{e_{j_1}^{(1)}}\bra{e_{j_1}^{(1)}}  
             +  \rbra*{\sum_{j=N+1}^{\infty} \lambda_j^{(2)}} \sum_{j_1 = 1}^{\infty}\lambda_{j_1}^{(1)}\ket{e_{j_1}^{(1)}}\bra{e_{j_1}^{(1)}}\\
             &= \rbra*{\sum_{j=1}^{\infty} \lambda_j^{(2)}} \rbra*{\sum_{j_1=1}^{\infty} \lambda_{j}^{(1)} \ket{e_{j}^{(1)}}\bra{e_{j}^{(1)}}} = \rho_1.
        \end{aligned}
    \end{equation*}
    By symmetry, $\tr_1(\rho) = \rho_2$. 
    Therefore, $\rho$ is a valid coupling of $\rho_1$ and $\rho_2$.
    Moreover, we have
    \begin{equation*}
    \begin{aligned}
        \tr(C\rho) &= \rbra*{\sum_{j=1}^N \lambda_j^{(1)}}\rbra*{\sum_{j=1}^N \lambda_j^{(2)}} \tr(C\rho^{\prime}) + \sum_{\max\{j_1, j_2\}\ge N} \lambda_{j_1}^{(1)}\lambda_{j_2}^{(2)} \bra{e_{j_1}^{(1)}} \bra{e_{j_2}^{(2)}} C  \ket{e_{j_1}^{(1)}}\ket{e_{j_2}^{(2)}} \\
        & \le \tr(C\rho^{\prime} ) + \norm {C } \sum_{\max\{j_1, j_2\}\ge N} \lambda_{j_1}^{(1)}\lambda_{j_2}^{(2)} \\
        & \le \tr(C\rho^{\prime} ) + \norm {C } \rbra*{1-  \sum_{j_1, j_2\le N} \lambda_{j_1}^{(1)}\lambda_{j_2}^{(2)}} \\
        &\le \tr(C\rho^{\prime} ) + 2\delta \norm {C },
    \end{aligned}
    \end{equation*}
    where we use $\sum_{j_1, j_2\le N} \lambda_{j_1}^{(1)}\lambda_{j_2}^{(2)} \ge (1-\delta)^2\ge 1-2\delta$.
    Taking infimum on the right hand side, we know
    \begin{equation*}
        \inf_{\rho:\coupling{\rho_1}{\rho_2}} \tr(C\rho ) \le \inf_{\rho':\coupling{\rho_1'}{\rho_2'}} \tr(C\rho' ) + 2\delta,
    \end{equation*}
    which is what we want.
\end{proof}

\begin{proposition}%
\label{prop:pertubation-dual-quantum-duality}
       Let $\delta, C, C', \rho_1, \rho_1', \rho_2, \rho_2', \dopt{C}$ and $\dopt{C'}$ be the same as defined in the proof of~\Cref{thm:quantum-Kantorovich-duality-appendix}.
    Then, 
    \begin{equation*}
        \dopt{C} \ge \dopt{C'} - 3\sqrt{\delta}- 4\delta \norm{C}-6\sqrt{\delta}\norm{C}^2,
    \end{equation*}
\end{proposition}

\begin{proof}
  We first note that
      \begin{equation*}
        \dopt{C'} = \sup_{ C_1, C_2 } \Tr(C_1\rho_1') + \Tr(C_2\rho_2')
    \end{equation*}
    where $C_1$ and $C_2$ are bounded self-adjoint operators (not necessarily positive) satisfying 
    \begin{equation*}
        C_1\otimes I + I\otimes C_2 \sqsubseteq C'.
    \end{equation*}
    By~\Cref{lem:perturbation-analysis}, for a fixed $\varepsilon\in \interval{0}{1/2}$, there exist bounded self-adjoint operators $B_1'$ and $B_2'$ on $K^{(1)}$ and $K^{(2)}$ satisfying
    \begin{itemize}
        \item $B_1'\otimes I + I\otimes B_2' \sqsubseteq C'$;
        \item $\Tr(B_1'\rho_1') + \Tr(B_2' \rho_2')\ge \dopt{C'}- 2\varepsilon$;
        \item $\max\{\norm{B_1'}, \norm{B_2'}\}\le \norm{C'} + \frac{1}{\varepsilon}\norm{C'}^2$.
    \end{itemize}
    Then, we define $B_1$ and $B_2$ on $\mathcal{H}_1$ and $\mathcal{H}_2$ respectively as
    \begin{equation*}
        B_1 \triangleq B_1' - \frac{\varepsilon}{2} P_{K^{(1)}} -  \rbra*{\frac{\norm{C}^2}{\varepsilon}+\norm{B_2'}} P_{K^{(1),\perp}}, \quad
        B_2\triangleq B_2' - \frac{\varepsilon}{2} P_{K^{(2)}} -  \rbra*{\frac{\norm{C}^2}{\varepsilon}+\norm{B_1'}} P_{K^{(2),\perp}}.
    \end{equation*}
     Therefore, $B_1, B_2$ are bounded self-adjoint operators. 
     
     We claim that $B_1\otimes I + I\otimes B_2 \sqsubseteq C$.
     To show this, we prove that for any $\ket{v}\in \mathcal{H}_1\otimes \mathcal{H}_2$, it holds that 
     $\bra{v}\rbra{C-B_1\otimes I- I\otimes B_2}\ket{v}\ge 0 $.
     In fact, for any $\ket{v}\in \mathcal{H}_1\otimes \mathcal{H}_2 $
     with $\ket{v}\ne 0$, we can uniquely write it as
     $\ket{v} = \ket{v_0} + \ket{v_{\perp}}$, with $\ket{v_0}\in K^{(1)}\otimes K^{(2)}$, and $\ket{v_{\perp}} \in \rbra{K^{(1),\perp}\otimes K^{(2)}}\oplus \rbra{K^{(1)}\otimes K^{(2),\perp}}\oplus \rbra{K^{(1),\perp}\otimes K^{(2),\perp}}$.
     We have
     $\bra{v_0}B_1\otimes I\ket{v_0} = \bra{v_0}B_1'\otimes I\ket{v_0} - \bra{v_0} \frac{\varepsilon}{2} P_{K^{(1)}}\otimes I \ket{v_0}$,
     $\bra{v_0}I\otimes B_2\ket{v_0} = \bra{v_0}I\otimes B_2'\ket{v_0} - \bra{v_0} I\otimes\frac{\varepsilon}{2} P_{K^{(2)}} \ket{v_0}$,
     $\bra{v_{\perp}}B_1\otimes I\ket{v_0} = 0$,  $\bra{v_{\perp}}I\otimes B_2\ket{v_0} = 0$.
     Therefore, we obtain
     \begin{equation*}
         \begin{aligned}
             \bra{v_0}C-B_1\otimes I- I\otimes B_2 \ket{v_0} &=  \bra{v_0}C'-B_1'\otimes I- I\otimes B_2' \ket{v_0} + \bra{v_0} \frac{\varepsilon}{2} P_{K^{(1)}}\otimes I \ket{v_0}  +\bra{v_0} I\otimes \frac{\varepsilon}{2} P_{K^{(2)}} \ket{v_0} \\
             &\ge  \varepsilon \norm{\ket{v_0}}^2,
         \end{aligned}
     \end{equation*}
     and
    \begin{equation*}
        \bra{v_{\perp}} C-B_1\otimes I- I\otimes B_2 \ket{v_0} 
        =  \bra{v_{\perp}} C\ket{v_0} -  \bra{v_{\perp}} B_1\otimes I+ I\otimes B_2 \ket{v_0}  = \bra{v_{\perp}} C\ket{v_0} \le \norm{C} \norm{\ket{v_0}} \norm{\ket{v_{\perp}}}.
    \end{equation*}
    For computing $ \bra{v_{\perp}} C-B_1\otimes I- I\otimes B_2 \ket{v_{\perp}}$, we need to decompose
    $\ket{v_{\perp}} = \ket{v_{0\perp}} + \ket{v_{\perp0}}  + \ket{v_{\perp\perp}}$, where $\ket{v_{0\perp}}\in K^{(1),\perp}\otimes K^{(2)}$, 
    $\ket{v_{\perp0}}\in K^{(1)}\otimes K^{(2),\perp}$, and $\ket{v_{\perp\perp}}\in K^{(1),\perp}\otimes K^{(2),\perp}$. Then we have
    \begin{align*}
    &\quad \langle v_\perp | C - B_1 \otimes I - I \otimes B_2 | v_\perp \rangle \\
    & \ge - \langle v_\perp | B_1 \otimes I + I \otimes B_2 | v_\perp \rangle \\
    & = \left\langle v_{0\perp} \left| -B_1^{\prime} \otimes I + \frac{\varepsilon}{2} P_{\mathcal{K}^{(1)}} \otimes I + I \otimes \left( \frac{\|C\|^2}{\varepsilon} + \|B_1^{\prime}\| \right) P_{\mathcal{K}^{(2),\perp}} \right| v_{0\perp} \right\rangle \\
    & \quad + \left\langle v_{\perp0} \left| \left( \frac{\|C\|^2}{\varepsilon} + \|B_2^{\prime}\| \right) P_{\mathcal{K}^{(1),\perp}} \otimes I - I \otimes B_2^{\prime} + \frac{\varepsilon}{2} I \otimes P_{\mathcal{K}^{(2)}} \right| v_{\perp0} \right\rangle \\
    & \quad + \left\langle v_{\perp\perp} \left| \left( \frac{\|C\|^2}{\varepsilon} + \|B_2^{\prime}\| \right) P_{\mathcal{K}^{(1),\perp}} \otimes I + I \otimes \left( \frac{\|C\|^2}{\varepsilon} + \|B_1^{\prime}\| \right) P_{\mathcal{K}^{(2),\perp}} \right| v_{\perp\perp} \right\rangle \\
    & \ge \left( \frac{\varepsilon}{2} + \frac{\|C\|^2}{\varepsilon} \right) \|v_{0\perp}\|^2 + \left( \frac{\varepsilon}{2} + \frac{\|C\|^2}{\varepsilon} \right) \|v_{\perp0}\|^2 + \left( \frac{2}{\varepsilon}\|C\|^2 + \|B_1^{\prime}\| + \|B_2^{\prime}\| \right) \|v_{\perp\perp}\|^2 \\
    & \ge \frac{\|C\|^2}{\varepsilon} \left( \|v_{0\perp}\|^2 + \|v_{\perp0}\|^2 + \|v_{\perp\perp}\|^2 \right) \\
    & = \frac{\|C\|^2}{\varepsilon} \|v_\perp\|^2.
    \end{align*}
    Therefore, we get
    \begin{align*}
    &\quad\langle v | C - B_1 \otimes I - I \otimes B_2 | v \rangle \\
    &\ge \langle v_0 | C - B_1 \otimes I - I \otimes B_2 | v_0 \rangle + \langle v_{\perp} | C - B_1 \otimes I - I \otimes B_2 | v_{\perp} \rangle 
    - 2 \left| \langle v_{\perp} | C - B_1 \otimes I - I \otimes B_2 | v_0 \rangle \right| \\
    &\ge \varepsilon \| v_0 \|^2 + \frac{1}{\varepsilon} \|C\|^2 \| v_{\perp} \|^2 - 2 \|C\| \| v_{\perp} \| \| v_0 \| \\
    &= \left( \sqrt{\varepsilon} \| v_0 \| - \frac{\|C\|}{\sqrt{\varepsilon}} \| v_{\perp} \| \right)^2 \\
    &\ge 0.
    \end{align*}
    Then, we have
    \begin{align*}
    \tr(B_1\rho_1) 
    &= \sum_{j=1}^{\infty} \lambda_j^{(1)} \langle e_j^{(1)} | B_1 | e_j^{(1)} \rangle \\
    &= \sum_{j=1}^{\infty} \lambda_j^{(1)} \langle e_j^{(1)} | B_1^{\prime} - \frac{\varepsilon}{2} P_{\mathcal{K}^{(1)}} - \left( \frac{\|C\|^2}{\varepsilon} + \|B_2^{\prime}\| \right) P_{\mathcal{K}^{(1),\perp}} | e_j^{(1)} \rangle \\
    &= \left( \sum_{j=1}^{N} \lambda_j^{(1)} \right) \tr(B_1^{\prime}\rho_1^{\prime}) - \frac{\varepsilon}{2} \sum_{j=1}^{N} \lambda_j^{(1)} - 
    \left( \frac{\|C\|^2}{\varepsilon} + \|B_2^{\prime}\| \right) \sum_{j=N+1}^{\infty} \lambda_j^{(1)} \\
    &=\tr(B_1^{\prime}\rho_1^{\prime}) - \frac{\varepsilon}{2} \sum_{j=1}^{N} \lambda_j^{(1)} - \left( \sum_{j=N+1}^{\infty} \lambda_j^{(1)} \right) \left( \frac{\|C\|^2}{\varepsilon} + \|B_2^{\prime}\| + \tr(B_1^{\prime}\rho_1^{\prime}) \right) \\
    &\ge \tr(B_1^{\prime}\rho_1^{\prime}) - \frac{\varepsilon}{2} \quad - \quad \delta \cdot \left( \frac{\|C\|^2}{\varepsilon} + \|B_1^{\prime}\| + \|B_2^{\prime}\| \right).
    \end{align*}
    Similarly, 
    \[
    \tr(B_2\rho_2) \ge \tr(B_2^{\prime}\rho_2^{\prime}) - \frac{\varepsilon}{2} - \delta \left( \frac{\|C\|^2}{\varepsilon} + \|B_1^{\prime}\| + \|B_2^{\prime}\| \right).
    \]
    Therefore,
    \begin{align*}
    \dopt{C} &\ge \tr(B_1\rho_1) + \tr(B_2\rho_2)\\
    &\ge \tr(B_1^{\prime}\rho_1^{\prime}) + \tr(B_2^{\prime}\rho_2^{\prime}) - \varepsilon - 2\delta \left( \frac{\|C\|^2}{\varepsilon} + \|B_1^{\prime}\| + \|B_2^{\prime}\| \right) \\
    &\ge \dopt{C'} - 3\varepsilon - 2\delta \left( 2\|C\| + \frac{3}{\varepsilon}\|C\|^2 \right).
    \end{align*}
    And by taking $\varepsilon := \sqrt{\delta}$, we have the desired inequality.
\end{proof}

\subsection{Unbounded Duality Theorem}
\begin{theorem}[Kantorovich Duality for Infinite Dimensional Quantum Systems with Unbounded Cost]%
    \label{thm:quantum-Kantorovich-duality-unbounded-inf-appendix}
     Let $\mathcal{H}_1$ and $\mathcal{H}_2$ be two Hilbert spaces, 
    $\rho_1\in \mathcal{D}^1(\mathcal{H}_1)$, $\rho_2\in \mathcal{D}^1(\mathcal{H}_2)$ be two density operators, and $C$ be a bounded-by-below linear relation.
    Then, 
    \begin{equation*}
        \inf_{\rho\in\couplings{\rho_1}{\rho_2}} \Tr (C\rho)
        = \sup_{\rbra{n, A_1, A_2}\in \DualSet{C} } \Tr(A_1\rho_1) + \Tr(A_2\rho_2) -n,
    \end{equation*}
    where $\DualSet{C} = \set{\rbra{n,A_1, A_2}}{A_1\otimes I + I \otimes A_2 \sqsubseteq C+ nI}$.
\end{theorem}
\begin{proof}
    Denote $C^{(j)}\triangleq \trunc{C}{j}$ for $j\in \mathbb{N}$. By~\Cref{prop:truncation_monotonicity}, we know for any $j$, 
    $C^{(j)} \sqsubseteq C$. This means $\DualSet{C^{(j)}}\subseteq \DualSet{C}$, and $\bigcup_{j\in \mathbb{N}} \DualSet{C^{(j)}} \subseteq \DualSet{C}$.
    Then, noting that $C^{(j)}$ is bounded, 
    we directly have 
\begin{equation}
     \inf_{\rho\in\couplings{\rho_1}{\rho_2}} \Tr (C^{(j)}\rho)
        = \sup_{\rbra{n, A_1, A_2}\in \DualSet{C^{(j)}} } \Tr(A_1\rho_1) + \Tr(A_2\rho_2) -n,
\end{equation}    
    By the lower semi-continuity~(\Cref{thm:lsc_rho}) and compactness of couplings~(\Cref{lem:coupling-compact}), we know there exists a coupling $\rho^{(j)}\in\couplings{\rho_1}{\rho_2}$ such that 
    $\Tr(C^{(j)}\rho^{(j)})=\sup_{\rbra{n, A_1, A_2}\in \DualSet{C^{(j)}} } \Tr(A_1\rho_1) + \Tr(A_2\rho_2) -n$. 
    Let $\rho^{*}$ be a limit point of $\{\rho^{(j)}\}$. By compactness, 
    $\rho^{*}$ is also a coupling of $\rho_1$ and $\rho_2$.
    We then know
    \begin{equation*}
    \begin{aligned}
                &\sup_{\rbra{n, A_1, A_2}\in \DualSet{C} } \Tr(A_1\rho_1) + \Tr(A_2\rho_2) -n \\
                \ge\;& \sup_j \sup_{\rbra{n, A_1, A_2}\in \DualSet{C^{(j)}} } \Tr(A_1\rho_1) + \Tr(A_2\rho_2) -n \\
                =\;& \sup_j \Tr(C^{(j)}\rho^{(j)}) \\
                \ge\;& \liminf_j \Tr(C^{(j)}\rho^{(j)}) \\
                \ge\;&  \Tr(C\rho^{*}) \\
                \ge\;& \inf_{\rho\in\couplings{\rho_1}{\rho_2}} \Tr(C\rho) \\
    \end{aligned}
    \end{equation*}
    Combined with the direct observation 
    \begin{equation*}
        \inf_{\rho\in\couplings{\rho_1}{\rho_2}} \Tr(C\rho) \ge \sup_{\rbra{n, A_1, A_2}\in \DualSet{C} } \Tr(A_1\rho_1) + \Tr(A_2\rho_2) -n,
    \end{equation*}
    we get the desired claim.
\end{proof}

\section{Proofs in Probabilistic Programs}
\label{sec:appendix-proofs-in-probabilistic}

For completeness, we recall the weakest precondition for probabilistic programs proposed in~\cite{MMS96} as follows.
Here,  the truncated iterates $\whiledtit{b}{c}{n}$ of a loop $\whiled{b}{c}$ are defined inductively as follows:
$$\begin{array}{rcl}
\whiledtit{b}{c}{0} & = & \ifte{b}{\abort}{\skp}, \\
\whiledtit{b}{c}{n+1} & = & \ifte{b}{c;\whiledtit{b}{c}{n}}{\skp}. \\
\end{array}$$

\begin{table}[h]
\centering
\begin{tabular}{p{3.5cm}p{9cm}}
\toprule
\textbf{Command} & \textbf{Weakest Precondition} \\
\midrule
$\skp$ & $\wprecond{\skp}{\cqpredicate} \triangleq \cqpredicate$ \\[0.8em]

$\abort$ & $\wprecond{\abort}{\cqpredicate} \triangleq 0$ \\[0.8em]

$\assign{x}{e}$ & $\wprecond{\assign{x}{e}}{\cqpredicate} \triangleq \cqpredicate[e/x]$ \\[0.8em]

$\sample{x}{\mu}$ & $\wprecond{\sample{x}{\mu}}{\cqpredicate} \triangleq \E_{v\sim\mu}[\cqpredicate[v/x]]$ \\[0.8em]

$c_1;c_2$ & $\wprecond{c_1;c_2}{\cqpredicate} \triangleq \wprecond{c_1}{(\wprecond{c_2}{\cqpredicate})}$ \\[0.8em]

$\ifte{b}{c_1}{c_2}$ & 
$\wprecond{\ifte{b}{c_1}{c_2}}{\cqpredicate} \triangleq ( \wprecond{c_1}{\cqpredicate})|_b + 
(\wprecond{c_2}{\cqpredicate})|_{\neg b}$ \\[0.8em]

$\whiled{b}{c}$ & 
$\wprecond{\whiled{b}{c}}{\cqpredicate} \triangleq
\lim_{n\rightarrow\infty} \wprecond{\whiledtit{b}{c}{n} }{\cqpredicate}$ \\

\bottomrule
\end{tabular}
\caption{Structural representations of non-relational weakest preconditions in probabilistic programs.
}
\label{tab:weakest-precondition-prob}
\end{table}

Recall that a function $f: X\to [-\infty, +\infty]$ is lower-semicontinuous if and only if for all
$x_{\alpha}\to x$, $\liminf_{\alpha} f(x_{\alpha}) \ge f(x)$ (see Lemma 2.42 in~\cite{AB06}).
We first recall the following theorem, which states that we can always find a minimizer in a compact set for a lower-semicontinuous function.
\begin{theorem}[Theorem 2.43 in~\cite{AB06}]\label{thm:compact-lsc-attain-minimum}
    A real-valued lower-semicontinuous function on a compact
space attains a minimum value, and the nonempty set of minimizers is compact.
\end{theorem}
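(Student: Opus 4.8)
The plan is to use the standard fact that a function is lower semicontinuous if and only if all of its sublevel sets are closed, combined with the finite intersection property of compact spaces. First I would set $m \triangleq \inf_{x \in X} f(x) \in [-\infty,+\infty)$ and, for $c \in \RR$, write $F_c \triangleq \{x \in X : f(x) \le c\}$ for the sublevel set at height $c$. By lower semicontinuity each $F_c$ is closed, and by definition of the infimum $F_c$ is nonempty whenever $c > m$; these sets are nested, since $c \le c'$ implies $F_c \subseteq F_{c'}$.

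Next I would rule out $m = -\infty$. If it held, then $\{F_{-n}\}_{n \in \NN}$ would be a decreasing sequence of nonempty closed subsets of the compact space $X$, hence $\bigcap_n F_{-n} \ne \emptyset$ by compactness; any $x^\ast$ in this intersection would satisfy $f(x^\ast) \le -n$ for all $n$, so $f(x^\ast) = -\infty$, contradicting that $f$ is real-valued. Therefore $m \in \RR$. Now $\{F_{m + 1/n}\}_{n \in \NN}$ is again a decreasing sequence of nonempty closed sets, so by compactness there is $x^\ast \in \bigcap_n F_{m+1/n}$, i.e.\ $f(x^\ast) \le m + 1/n$ for all $n$, whence $f(x^\ast) \le m$. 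Since $m$ is the infimum, $f(x^\ast) = m$, so the minimum is attained.

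Finally, the set of minimizers is exactly $\{x : f(x) = m\} = \{x : f(x) \le m\} = F_m$, where the second equality holds because $f \ge m$ everywhere. This set is closed by lower semicontinuity and nonempty by the previous step, hence compact as a closed subset of the compact space $X$.

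Since this is a classical result (it is quoted from~\cite{AB06}), there is no genuine obstacle; the only points requiring a little care are (i) the equivalence between the net formulation of lower semicontinuity used elsewhere in the paper ($\liminf_\alpha f(x_\alpha) \ge f(x)$ for every net $x_\alpha \to x$) and the closedness of sublevel sets, which is a routine unwinding of definitions and would equally permit running the argument with a minimizing net and a convergent subnet, and (ii) the fact that one may approach the infimum along a \emph{countable} chain of heights, so that the compactness argument only needs decreasing sequences of closed sets rather than arbitrary nets.
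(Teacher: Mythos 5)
Your proof is correct. The paper does not prove this statement itself---it simply quotes Theorem 2.43 of~\cite{AB06}---and your argument via closed sublevel sets and the finite intersection property is the standard textbook proof of exactly this fact (including the correct handling of the two minor points you flag: the equivalence of the net and sublevel-set formulations of lower semicontinuity, and the sufficiency of a countable chain of heights), so there is nothing to compare against and nothing missing.
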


\begin{theorem}[Kantorovich-Rubinstein Duality Theorem,~{\cite[Theorem 5.10]{Villani08}}]
\label{thm:kr-duality-appendix}
Let $Q\in\bbmaps{X_1\times X_2}$, and let $\dualset{Q}\subseteq
  \bndmaps{X_1}\times\bndmaps{X_2}$ such that
$(Q_1,Q_2)\in \dualset{Q}$ iff
      $Q_1 \boxplus Q_2\sqsubseteq Q$. Then
$$
 \min_{\mu
  \in \couplings{\mu_1}{\mu_2}} \E_{\mu}[Q] =\sup_{(Q_1,Q_2)\in\dualset{Q}}
  \{\E_{\mu_1}[Q_1]+\E_{\mu_2}[Q_2]\}.$$
\end{theorem}

\begin{proposition}\label{prop:split-expectation-prob}
    Let $\mu_1\in \distr(X_1)$, $\mu_2\in \distr(X_2)$, and $\mu\in \couplings{\mu_1}{\mu_2}$. Then, for $Q_1\in \bbmaps{X_1}$ and $Q_2\in \bbmaps{X_2}$, we have
    \begin{equation*}
        \E_{\mu}[Q_1\boxplus Q_2] = \E_{\mu_1}[Q_1] + \E_{\mu_2}[Q_2].
    \end{equation*}
\end{proposition}
\begin{proof}
     By definition, we know $(Q_1\boxplus Q_2)(x_1,x_2) = Q_1(x_1)+Q_2(x_2)$. Therefore, we have
     \begin{equation*}
         \begin{aligned}
             \E_{\mu}[Q_1\boxplus Q_2] 
             &=\sum_{x_1\in X_1,x_2\in X_2} \mu(x_1,x_2) ( Q_1(x_1) + Q_2(x_2) ) \\
            &=\sum_{x_1\in X_1,x_2\in X_2} \mu(x_1,x_2)  Q_1(x_1) + \sum_{x_1\in X_1,x_2\in X_2} \mu(x_1,x_2)  Q_2(x_2)  \\
             &=\sum_{x_1\in X_1}   Q_1(x_1) \sum_{x_2\in X_2} \mu(x_1,x_2) + \sum_{x_1\in X_1,x_2\in X_2}  Q_2(x_2) \sum_{x_1\in X_1} \mu(x_1,x_2) \\
             &=\sum_{x_1\in X_1}   Q_1(x_1) \mu_1(X_1) + \sum_{x_1\in X_1,x_2\in X_2}  Q_2(x_2) \mu_2(X_2) \\
             &= \E_{\mu_1}[Q_1]+\E_{\mu_2}[Q_2],
         \end{aligned} 
     \end{equation*}
     which gives the desired result.
\end{proof}

Note that the set of couplings is compact, and the expectation is lower-semicontinuous. Therefore, there is always
\begin{theorem}[Soundness and Completeness of core rules]\label{thm:prob-complete-appendix}
A judgment is valid iff it can be derived with the rules \RuleRef{duality},
\RuleRef{conseq}, and \RuleRef{wp}.
\end{theorem}

\begin{proof}
    For soundness, we prove as follows:
    
    \RuleRef{conseq}: For any $s_1,s_2\in\cstates$, by definition of validity and the assumption, there
exists $\mu\in \couplings{\sem{S_1}~s_1}{\sem{S_2}~s_2}$ such that $P(s_1,s_2)\ge \E_{\mu}[Q]$. Since $P'\sqsupseteq P$, we know $P(s_1,s_2)\le P'(s_1,s_2)$. Also, from $Q\sqsupseteq Q'$ we know $ \E_{\mu}[Q]\ge  \E_{\mu}[Q']$. This gives $P'(s_1,s_2)\ge \E_{\mu}[Q']$.

   \RuleRef{duality}: We fix $s_1,s_2\in\cstates$. By definition of validity and the assumption, for any $(Q_1,Q_2)\in \dualset{Q}$, there
exists $\mu\in \couplings{\sem{S_1}~s_1}{\sem{S_2}~s_2}$ such that $$P(s_1,s_2)\ge \E_{\mu}[Q_1\boxplus Q_2] =  \E_{\sem{S_1}~s_1}[Q_1]+\E_{\sem{S_2}~s_2}[Q_2]$$ by~\Cref{prop:split-expectation-prob}. This means, $$P(s_1,s_2)\ge \sup_{(Q_1,Q_2)\in \dualset{Q}} \E_{\sem{S_1}~s_1}[Q_1]+\E_{\sem{S_2}~s_2}[Q_2].$$ 
However, by~\Cref{thm:kr-duality-appendix}, we know 
$$\sup_{(Q_1,Q_2)\in \dualset{Q}} \E_{\sem{S_1}~s_1}[Q_1]+\E_{\sem{S_2}~s_2}[Q_2] = \min_{\mu\in \couplings{\sem{S_1}~s_1}{\sem{S_2}~s_2}} \E_{\mu}[Q].$$ Therefore, taking $\mu$ being the minimizer of $\min_{\mu\in \couplings{\sem{S_1}~s_1}{\sem{S_2}~s_2}} \E_{\mu}[Q]$, we know $P(s_1,s_2)\ge \E_{\mu}[Q]$.

\RuleRef{wp}: We fix $s_1,s_2\in\cstates$, and let $\mu$ be the trivial coupling of $\sem{S_1}~s_1$ and $\sem{S_2}~s_2$ (i.e., the product measure of the two distribution). By~\Cref{prop:split-expectation-prob}, we know 
$$\E_{\mu}[Q_1\boxplus Q_2] = \E_{\sem{S_1}~s_1}[Q_1] + \E_{\sem{S_2}~s_2}[Q_2].$$ Since $\E_{\sem{S_1}~s_1}[Q_1] = \wprecond{S_1}{Q_1} (s_1)$ and $\E_{\sem{S_2}~s_2}[Q_2] = \wprecond{S_2}{Q_2} (s_2)$, we know $$\wprecond{S_1}{Q_1}\boxplus \wprecond{S_2}{Q_2} (s_1,s_2) = \wprecond{S_1}{Q_1} (s_1) + \wprecond{S_2}{Q_2} (s_2) = \E_{\mu}[Q_1\boxplus Q_2].$$

For completeness, we prove as follows:

Assume that $\models
\rtriple{P}{S_1}{S_2}{Q}$. By definition of validity, 
we have for every $s_1,s_2\in\cstates$, there
exists $\mu\in \couplings{\sem{S_1}~s_1}{\sem{S_2}~s_2}$ such that
$\E_{\mu}[Q] \leq P(s_1,s_2)$. 
This means, for every $s_1,s_2\in\cstates$, $$ P(s_1,s_2) \geq \inf_{\mu\in \couplings{\sem{S_1}~s_1}{\sem{S_2}~s_2}}\E_{\mu}[Q].$$
By~\Cref{thm:kr-duality-appendix}, this means
for every $s_1,s_2\in\cstates$, $$ P(s_1,s_2) \geq \sup_{(Q_1,Q_2)\in \dualset{Q}}\E_{\sem{S_1}~s_1}[Q_1] + \E_{\sem{S_2}~s_2}[Q_2] .$$
By the property of weakest
precondition that $\wprecond{S}{Q}=\lambda s. \E_{\sem{S}~s}[Q]$, and by~\Cref{prop:split-expectation-prob}, 
we have for every $(Q_1,Q_2)\in\dualset{Q}$, and
for every $s_1,s_2\in\cstates$,
$$ P(s_1,s_2) \geq \wprecond{S_1}{Q_1} (s_1) + \wprecond{S_2}{Q_2}(s_2) .$$
By definition,
this means
$\wprecond{S_1}{Q_1}\boxplus
\wprecond{S_2}{Q_2}\sqsubseteq P$ for every $(Q_1,Q_2)\in\dualset{Q}$.  
By the \RuleRef{wp} and \RuleRef{conseq} rules, it
follows that $\vdash \rtriple{P}{S_1}{S_2}{Q_1\boxplus Q_2}$. One concludes by finally
applying the \RuleRef{duality} rule.
\end{proof}

\section{Proofs in Finite-Dimensional and Infinite-Dimensional Quantum Programs}

We first recall the semantics of the qWhile language in~\cite{Ying11,Ying16}, which is formulated for infinite-dimensional quantum programs. The corresponding semantics for finite-dimensional programs follows by minor and direct modifications, and can therefore be regarded as a simple specialisation of the infinite-dimensional case.
\begin{definition}[Denotational Semantics of qWhile,~\cite{Ying11,Ying16}]\label{lem-structural} 
For any input state $\rho \in \qstate{\mathcal{H}}$, we have: 
\begin{enumerate}
  \item \label{dsem_skp} $\sem{\skp}(\rho) = \rho$;
  \item \label{dsem_init} 
  $\sem{q := \ket{0}}(\rho) = 
  \begin{cases}
    |0\rangle_q\langle 0|\rho|0\rangle_q\langle 0| + |0\rangle_q\langle 1|\rho|1\rangle_q\langle 0|, \text{ if} \text{ type}(q) = \mathsf{Boolean}; \\
       \sum_{n=-\infty}^{\infty} |0\rangle_q\langle n|\rho|n\rangle_q\langle 0|, \text{ if} \text{ type}(q) = \mathsf{Integer}:
  \end{cases}      $
  \item \label{dsem_uni} $\sem{\oq := U[\oq]}(\rho) = U_{\oq} \rho U_{\oq}^\dagger$;
  \item \label{dsem_comp} $\sem{S_1; S_2}(\rho) = \sem{S_2}(\sem{S_1}(\rho))$; 
  \item \label{dsem_if} $\sem{\ifb(\guard v\cdot M[\oq] = v \to S_v)\ife}(\rho) = \sum_v \sem{S_v}(M_v \rho M_v^\dagger)$; 
    \item \label{dsem_while} for the while loop $\while[M, S]\triangleq \while\ M[\oq]=1\ \wdo\ S\ \wod$: 
    $$\sem{\while[M, S]}(\rho) = \bigsqcup_{k = 0}^{\infty} \sem{\while^{(k)}[M,S]}(\rho),$$ 
where $\while^{(k)}[M,S]$ is the $k$-fold iteration of the loop: 
\begin{equation*}\label{iteration}\begin{cases}
  &\while^{(0)}[M,S]  \triangleq \textbf{abort}, \\
  &\while^{(k+1)}[M,S] \triangleq \begin{aligned} 
  &\ifb\ M[\oq]  =\  0 \to \skp\\  
  &\guard\qquad \qquad 1\to S; \while^{(k)}[M,S]\ \ife
\end{aligned}\end{cases}\end{equation*}
for $k\geq 0$, $\bigsqcup$ stands for the least upper bound in the CPO of partial density operators with the L\"owner order $\sqsubseteq$, 
and $\abort$ is a program that never terminates, i.e., $\sem{\abort}(\rho) = \bf{0}$ for all $\rho$.
\end{enumerate}\end{definition}

For completeness, we present the explicit form of the weakest precondition of qWhile in~\Cref{tab:weakest-precondition-quantum}, which is proposed in~{\cite[Proposition 7.1]{Ying11}}. Note that all predicates here are \emph{bounded}.

\begin{table}[tb]
\centering
\begin{tabular}{p{3.5cm}p{9cm}}
\toprule
\textbf{Command} & \textbf{Weakest Precondition} \\
\midrule
$\skp$ & $\wprecond{\skp}{\cqpredicate} \triangleq \cqpredicate$ \\[0.8em]

$q \coloneqq \ket{v}$ & $\wprecond{q \coloneqq \ket{v}}{\cqpredicate} \triangleq \sum_{i} \ket{i}_{q}\bra{v}\cqpredicate\ket{v}_{q}\bra{i}$ \\[0.8em]

$\oq \coloneqq U [\oq]$ & $\wprecond{\oq \coloneqq U [\oq]}{\cqpredicate} \triangleq U^{\dagger}_{\oq}\cqpredicate U_{\oq}$ \\[0.8em]

$c_1;c_2$ & $\wprecond{c_1;c_2}{\cqpredicate} \triangleq \wprecond{c_1}{(\wprecond{c_2}{\cqpredicate})}$ \\[0.8em]

$\ifb\ (\guard v\cdot M[\oq] = v \to S_v)\ \ife$ & 
$\wprecond{\ifb\ (\guard v\cdot M[\oq] = v \to S_v)\ \ife}{\cqpredicate} \triangleq \sum_v M_v^{\dagger}\wprecond{S_v}{Q} M_v$ \\[0.8em]

$\while\ M[\oq]=1\ \wdo\ S\ \wod $ & 
\begin{tabular}[t]{@{}l}
   $\wprecond{\whiled{b}{c}}{\cqpredicate} \triangleq \bigvee_n Q_n, 
$    \\
 where  $\begin{cases}
    Q_0 = 0; \\
    Q_{n+1} = M_0^{\dagger} QM_0 + M_1^{\dagger} \wprecond{S}{Q_n} M_1.
\end{cases}$
\end{tabular}
\\
\bottomrule
\end{tabular}
\caption{Structural representations of non-relational weakest preconditions in quantum programs.
}
\label{tab:weakest-precondition-quantum}
\end{table}

Applying~\Cref{thm:compact-lsc-attain-minimum}, we know we can always find a minimizer for the optimal transport cost, since $\couplings{\rho_1}{\rho_2}$ is a compact set (as it is closed and bounded in trace norm, see~\cite[Proposition C.1]{qotl}). We write this formally as follows.

\begin{theorem}[Kantorovich-Rubinstein Duality Theory for Finite Dimensional Quantum Systems, see~{\cite[Theorem III.3]{qotl}}]%
\label{thm:quantum-Kantorovich-duality-finite-dim-appendix}
     Let $\cH_1$ and $\cH_2$ be two finite-dimensional Hilbert spaces, 
    $\rho_1\in \qstate{\cH_1}$, $\rho_2\in \qstate{\cH_2}$ be two density operators, and $Q\in \qbndmaps{\cH_1\otimes \cH_2}$. 
    Let $\dualset{Q}\subseteq \qbndmaps{\cH_1}\times \qbndmaps{\cH_2}$ such that $(Q_1, Q_2)\in \dualset{Q}$ iff $ Q_1\boxplus Q_2 \sqsubseteq Q$.
    Then, 
    \begin{equation*}
      \min_{\rho\in\couplings{\rho_1}{\rho_2}} \tr (Q\rho) 
    = \sup_{(Q_1, Q_2)\in \dualset{Q}} \tr(Q_1\rho_1) + \tr(Q_2\rho_2).
    \end{equation*}
\end{theorem}

\begin{theorem}[Kantorovich Duality for Infinite-Dimensional Quantum Systems with Bounded-By-Below Cost]%
    \label{thm:quantum-Kantorovich-duality-unbounded-appendix}
     Let $\mathcal{H}_1$ and $\mathcal{H}_2$ be two Hilbert spaces, 
    $\rho_1\in \qstate{\mathcal{H}_1}$, $\rho_2\in \qstate{\mathcal{H}_2}$ be two density operators, and $C\in \qbbmaps{\cH_1\otimes \cH_2}$ be a bounded-by-below self-adjoint linear relation.
Let $\dualset{Q}\subseteq \qbbmaps{\cH_1}\times \qbbmaps{\cH_2}$ be the set such that $(A_1, A_2)\in \dualset{Q}$ iff $Q_1\boxplus Q_2\sqsubseteq Q$,
    Then, 
    \begin{equation*}
         \min_{\rho\in\couplings{\rho_1}{\rho_2}} \Tr (Q\rho)
         = \sup_{(Q_1,Q_2)\in \dualset{Q}} \Tr(Q_1\rho_1) + \Tr(Q_2\rho_2),
    \end{equation*}
\end{theorem}


We now prove some propositions that will be used in the soundness and completeness proof. Note that these propositions hold for infinite-dimensional case.
\begin{proposition}\label{prop:split-expectation-quantum}
    Let $\rho_1\in \qstate{\cH_1}$, $\rho_2\in \qstate{\cH_2}$, and $\rho\in \couplings{\rho_1}{\rho_2}$. Then, for $Q_1\in \bndmaps{\cH_1}$ and $Q_2\in \bndmaps{\cH_2}$, we have
    \begin{equation*}
        \tr((Q_1\boxplus Q_2)\rho) = \tr(Q_1\rho_1) + \tr(Q_2\rho_2).
    \end{equation*}
\end{proposition}
\begin{proof}
     By definition, we know $(Q_1\boxplus Q_2) = Q_1\otimes I + I\otimes Q_2$. Therefore, we have
     \begin{equation*}
         \begin{aligned}
            \tr((Q_1\boxplus Q_2)\rho) &= \tr((Q_1\otimes I + I\otimes Q_2)\rho) \\
             &= \tr((Q_1\otimes I) \rho) + \tr((I\otimes Q_2) \rho) \\
            &=\tr(Q_1\rho_1) + \tr(Q_2\rho_2),
         \end{aligned} 
     \end{equation*}
       which gives the desired result.
\end{proof}

\begin{proposition}\label{prop:partial-trace-tensor}
    For $\rho\in \qstate{\cH_1\otimes\cH_2}$, $\mathcal{E}_1\in \mathcal{QO}(\mathcal{H}_1)$, and  $\mathcal{E}_2\in \mathcal{QO}(\mathcal{H}_2)$ which is trace-preserving,
    we have 
    \begin{equation*}
    \begin{aligned}
        \tr_2(( \mathcal{E}_1\otimes I)(\rho)) &= \mathcal{E}_1(\tr_2(\rho)).\\
        \tr_2((I\otimes \mathcal{E}_2)(\rho)) &= \tr_2(\rho).
    \end{aligned}
    \end{equation*}
\end{proposition}

\begin{proof}
    For the first equation, note that $\tr_2(\cdot) = I\otimes \tr(\cdot)$. Therefore, it commutes with $\mathcal{E}_1\otimes I$.
    
    For the second equation, note that the map $\tr_2((I\otimes \mathcal{E}_2)(\cdot))$ is linear, and for all $\rho = \rho_1\otimes \rho_2$, we have $$\tr_2((I\otimes \mathcal{E}_2)(\rho_1\otimes \rho_2)) = \tr_2( \rho_1\otimes \mathcal{E}_2(\rho_2))  = \tr_2(\rho_1) = \tr_2(\rho_1).$$
    The claim then holds by the uniqueness of partial trace.
\end{proof}

\begin{proposition}\label{prop:ast-preserve-coupling}
     Let $\rho_1\in \qstate{\cH_1}$, $\rho_2\in \qstate{\cH_2}$, and $\rho\in \couplings{\rho_1}{\rho_2}$. If $\mathcal{E}_1\in \mathcal{QO}(\mathcal{H}_1)$ and $\mathcal{E}_2\in \mathcal{QO}(\mathcal{H}_2)$ are trace-preserving, then
     $(\mathcal{E}_1\otimes \mathcal{E}_2)(\rho)\in \couplings{\mathcal{E}_1(\rho_1)}{\mathcal{E}_2(\rho_2)}$.
\end{proposition}

\begin{proof}
   Note that $\mathcal{E}_1\otimes \mathcal{E}_2$ is completely positive and trace-preserving. $(\mathcal{E}_1\otimes \mathcal{E}_2)(\rho)$ is a density operator.
   We now show $\tr_2((\mathcal{E}_1\otimes \mathcal{E}_2)(\rho)) = \mathcal{E}_1(\rho_1)$. By~\Cref{prop:partial-trace-tensor},  we know $$\tr_2((\mathcal{E}_1\otimes \mathcal{E}_2)(\rho)) 
   = \tr_2((\mathcal{E}_1\otimes I)((I\otimes  \mathcal{E}_2)(\rho))) = \mathcal{E}_1(\tr_2((I\otimes  \mathcal{E}_2)(\rho))) = \mathcal{E}_1(\tr_2(\rho))
   \mathcal{E}_1(\rho_1).$$ Similarly, $\tr_1((\mathcal{E}_1\otimes \mathcal{E}_2)(\rho)) = \mathcal{E}_2(\rho_2)$.
\end{proof}

\begin{theorem}[Soundness and Completeness of Core Rules in Quantum Case]\label{thm:finite-quantum-complete-appendix}
A judgment is valid in quantum programs iff it can be derived with the rules \RuleRef{duality},
\RuleRef{conseq}, and \RuleRef{wp}.
\end{theorem}

\begin{proof}
    We prove for the infinite-dimensional case. The proof of finite-dimensional case is the same except that we use~\Cref{thm:quantum-Kantorovich-duality-finite-dim-appendix} instead of~\Cref{thm:quantum-Kantorovich-duality-unbounded-appendix}.

    For soundness, we prove as follows:
    
    \RuleRef{conseq}: By definition of validity and the assumption,  every $\rho \in \qstate{\cH\otimes \cH}$,  
there exists a coupling $\sigma$ in 
  $\couplings{\sem{S_1}(\tr_2(\rho))}{\sem{S_2}(\tr_1(\rho))}$ such that
  $\Tr(P \rho) \geq \Tr(Q \sigma)$. Since $P'\sqsupseteq P$, we know $\Tr(P \rho)\le \Tr(P' \rho)$. Also, from $Q\sqsupseteq Q'$ we know $\Tr(Q \sigma)\ge \Tr(Q' \sigma)$. This gives $\Tr(P' \rho) \geq \Tr(Q' \sigma)$.

   \RuleRef{duality}: We fix $\rho \in \qstate{\cH\otimes \cH}$. By definition of validity and the assumption, for any $(Q_1,Q_2)\in \dualset{Q}$, there exists a coupling $\sigma$ in 
  $\couplings{\sem{S_1}(\tr_2(\rho))}{\sem{S_2}(\tr_1(\rho))}$  such that $$\Tr(P \rho) \geq \Tr((Q_1\boxplus Q_2) \sigma) =  \tr(Q_1\sem{S_1}(\tr_2(\rho))) + \tr(Q_2\sem{S_2}(\tr_1(\rho)))$$ by~\Cref{prop:split-expectation-quantum}. This means, $$\Tr(P \rho) \ge \sup_{(Q_1,Q_2)\in \dualset{Q}}  \tr(Q_1\sem{S_1}(\tr_2(\rho))) + \tr(Q_2\sem{S_2}(\tr_1(\rho))).$$ 
However, by~\Cref{thm:quantum-Kantorovich-duality-unbounded-appendix}, we know 
$$\sup_{(Q_1,Q_2)\in \dualset{Q}}  \tr(Q_1\sem{S_1}(\tr_2(\rho))) + \tr(Q_2\sem{S_2}(\tr_1(\rho))) = \min_{\sigma\in \couplings{\sem{S_1}(\tr_2(\rho))}{\sem{S_2}(\tr_1(\rho))}} \Tr(Q \sigma).$$ Therefore, taking $\sigma$ being the minimizer of $\min_{\sigma\in \couplings{\sem{S_1}(\tr_2(\rho))}{\sem{S_2}(\tr_1(\rho))}} \Tr(Q \sigma)$, we know $\Tr(P \rho) \ge \Tr(Q \sigma)$.

\RuleRef{wp}: We fix $\rho \in \qstate{\cH\otimes \cH}$, and let $\sigma =  (\sem{S_1}\otimes \sem{S_2})(\rho)$. By~\Cref{prop:ast-preserve-coupling}, we know $\sigma \in \couplings{\sem{S_1}(\tr_2(\rho))}{\sem{S_2}(\tr_1(\rho))}$ .  By~\Cref{prop:split-expectation-quantum}, we know 
$$\tr((Q_1\boxplus Q_2 )\sigma) = \tr(Q_1 \sem{S_1}(\tr_2(\rho)) )+ \tr(Q_1 \sem{S_2}(\tr_1(\rho)) ).$$ Since $ \tr(Q_1 \sem{S_1}(\tr_2(\rho)) ) = \tr(\wprecond{S_1}{Q_1}\tr_1(\rho))$ and $\tr(Q_2 \sem{S_2}(\tr_1(\rho)) ) = \tr(\wprecond{S_2}{Q_2}\tr_2(\rho))$, we know $$\tr((\wprecond{S_1}{Q_1}\boxplus \wprecond{S_2}{Q_2} )\rho ) = \tr(\wprecond{S_1}{Q_1}\tr_1(\rho)) + \tr(\wprecond{S_2}{Q_2}\tr_2(\rho))  = \tr((Q_1\boxplus Q_2 )\sigma).$$

For completeness, we prove as follows:

Assume that $\models
\rtriple{P}{S_1}{S_2}{Q}$. By definition of validity, 
we have: for every $\rho \in \mathcal{D}^1(\mathcal{H} \otimes \mathcal{H})$, 
  there exists a coupling $\sigma\in \couplings{\sem{S_1}(\tr_2(\rho))}{\sem{S_2}(\tr_1(\rho))}$ such that
  $\tr(P \rho) \geq \tr(Q \sigma)$. 
This means, for every $\rho \in \mathcal{D}^1(\mathcal{H} \otimes \mathcal{H})$, $$ \tr(P \rho) \geq \inf_{\sigma\in \couplings{\sem{S_1}(\tr_2(\rho))}{\sem{S_2}(\tr_1(\rho))}} \tr(Q \sigma) .$$
By~\Cref{thm:quantum-Kantorovich-duality-unbounded-appendix}, this means
for every $\rho \in \mathcal{D}^1(\mathcal{H} \otimes \mathcal{H})$, $$ \tr(P \rho)  \geq \sup_{(Q_1,Q_2)\in \dualset{Q}} \tr(Q_1 \sem{S_1}(\tr_2(\rho)))  + \tr(Q_2 \sem{S_2}(\tr_1(\rho))) .$$
By the property of weakest
precondition that $\tr(\wprecond{S}{Q}\rho_1)= \tr(Q\sem{S}(\rho_1))$, and by~\Cref{prop:split-expectation-prob}, 
we have
for every $\rho \in \mathcal{D}^1(\mathcal{H} \otimes \mathcal{H})$, and for every $(Q_1,Q_2)\in\dualset{Q}$,
$$ \tr(P \rho)  \geq  \tr((\wprecond{S_1}{Q_1}\boxplus
\wprecond{S_2}{Q_2}) \rho) .$$
By definition,
this means
$\wprecond{S_1}{Q_1}\boxplus
\wprecond{S_2}{Q_2}\sqsubseteq P$ for every $(Q_1,Q_2)\in\dualset{Q}$.  
By the \RuleRef{wp} and \RuleRef{conseq} rules, it
follows that $\vdash \rtriple{P}{S_1}{S_2}{Q_1\boxplus Q_2}$. One concludes by finally
applying the \RuleRef{duality} rule.
\end{proof}

\section{Proofs in Classical-Quantum Programs}
\label{sec:appendix-proofs-cq-programs}
\subsection{Basic Definitions and Properties}


We first recall some basic notions and definitions in classical-quantum setting.
For a countable set $X$ and a separable Hilbert space $\cH$,
the set of classical-quantum (cq) state $\cqstates{X}{\cH}$ consists of the functions $\cqstate:X\to \qpstate{\cH}$ with trace no more than $1$,
where its trace is defined as
\[
\tr(\cqstate) \triangleq \sum_x\tr(\cqstate(x)).
\]
Correspondingly, the trace norm distance $\|\cdot\|_{1}$ between two cq-states $\cqstate_1, \cqstate_2\in \cqstates{X}{\cH}$
\[
\|\cqstate_1-\cqstate_2\|_{1}\triangleq\sum_{x\in X}\|\cqstate_1(x)-\cqstate_2(x)\|_{1}=\sum_{x\in X}\tr(\sqrt{(\cqstate_1(x)^{\dagger}-\cqstate_2(x)^{\dagger})(\cqstate_1(x)-\cqstate_2(x))}).
\]
The partial trace $\tr_2(\cdot)$ can be defined as, for $\cqstate\in \cqpstates{X_1\times X_2}{\cH_1\otimes\cH_2}$,
\begin{equation*}
    \tr_2(\cqstate) (x_1) = \sum_{x_2\in X_2} \tr_2(\cqstate(x_1,x_2)).
\end{equation*}
$\tr_1(\cdot)$ can be defined similarly. We say $\cqstate\in \couplings{\cqstate_1}{\cqstate_2}$ if $\tr_2(\cqstate) = \cqstate_1$ and $\tr_1(\cqstate) = \cqstate_2$.
For a countable set $X$ and a separable Hilbert space $\cH$,
the set of bounded maps $\cqbndmaps{X}{\cH}$  consists of the functions $\cqpredicate:X\to \qpstate{\cH}$ with bounded operator norm,
which is defined as
\begin{equation*}
    \norm{\cqpredicate} = \sup_{x\in X} \norm{\cqpredicate(x)}.
\end{equation*}
For $\cqpredicate_1\in\cqbndmaps{X_1}{\cH_1}$ and $\cqpredicate_2\in\cqbndmaps{X_2}{\cH_2}$, we define
$\cqpredicate_1\boxplus \cqpredicate_2\in \cqbndmaps{X_1\times X_2}{\cH_1\otimes \cH_2}$ as 
\begin{equation*}
    (\cqpredicate_1\boxplus \cqpredicate_2) (x_1, x_2) \triangleq \cqpredicate_1(x_1)\otimes I + I\otimes \cqpredicate_2(x_2).
\end{equation*}
For $\cqpredicate\in\bbmaps{X,\cH}$ and $\cqstate\in\states{X,\cH}$, the expectation $\E_\cqstate[\cqpredicate]$ is defined by $\sum_{x\in X}\Tr(\cqpredicate(x)\cqstate(x))$. 

\begin{definition}[Extended L\"owner Order for Classical-Quantum Predicates]\label{def:extended-lowner-order-cqpredicate}
    Let $\cqpredicate_1,\cqpredicate_2\in\cqbbmaps{X}{\cH}$. We say $\cqpredicate_1\sqsubseteq\cqpredicate_2$ if $\cqpredicate_1(x)\sqsubseteq\cqpredicate_2(x)$ holds for every $x\in X$.
\end{definition}

\begin{definition}[Embedding Map]\label{def:embedding}
    Let the embedding map $\iota:\cqbbmaps{X}{\cH}\to \qbbmaps{\mathbb{C}^{X}\otimes \cH}$ be defined by $\iota (\cqpredicate)\triangleq \sum_{x\in X} \ket{x}\bra{x} \otimes \cqpredicate(x)$.
\end{definition}

\begin{definition}[Retraction Map]\label{def:restriction}
    For each $x \in X$, let $V_x: \cH \to \bC^X \otimes \cH$ be the isometry embedding defined by $V_x |u\rangle = |x\rangle \otimes |u\rangle$ for all $|u\rangle \in \mathcal{H}$.
    Let $\mathcal{R}: \qbndmaps{\bC^X \otimes \cH} \to \qbndmaps{X,\cH}$ be the retraction map defined by:
    $$ \mathcal{R}(\cqpredicate)(x) \triangleq V_x^\dagger \cqpredicate V_x $$
    for any bounded operator $\cqpredicate$, where $V^{\dagger} $ is a projection: $\mathbb{C}^X \otimes \mathcal{H} \to \mathcal{H}$:
    \[
    V_x^{\dagger} (|y\rangle \otimes |v\rangle) =
    \begin{cases}
    |v\rangle, & \text{if\,} y = x \\
    0, & \text{\if\, } y \neq x
    \end{cases}
    \]
\end{definition}


\begin{proposition}\label{prop:bounded-cq-expectation}
    For $\cqpredicate\in \qbndmaps{\mathbb{C}^{X}\otimes \cH}$ and $\cqstate\in \cqstates{X}{\cH}$,
    $\tr(\cqpredicate\iota(\cqstate)) = \E_{\cqstate}[\mathcal{R}(\cqpredicate)]$. 
\end{proposition}

\begin{proof}
    Recall that $\iota(\cqstate) = \sum_{x \in X} |x\rangle\langle x| \otimes \cqstate(x)$. Note that $|x\rangle\langle x| \otimes \cqstate(x)$ can be rewritten using the isometry $V_x$:
    $$ |x\rangle\langle x| \otimes \cqstate(x) = V_x \cqstate(x) V_x^\dagger $$

    Then,
    $$
    \begin{aligned}
    \tr(\cqpredicate \iota(\cqstate)) &= \tr\left( \cqpredicate \sum_{x\in X} V_x \cqstate(x) V_x^\dagger \right) \\
    &= \sum_{x} \tr( \cqpredicate V_x \cqstate(x) V_x^\dagger ) \quad (\text{Linearity}) \\
    &= \sum_{x\in X} \tr( V_x^\dagger \cqpredicate V_x \cqstate(x) ) \quad (\text{Cyclic property of the usual trace, see \cite[Theorem VI.25]{reed1980methods}}) \\
    &= \sum_{x\in X} \tr( \mathcal{R}(\cqpredicate)(x) \cqstate(x) ) \quad (\text{Definition of } \mathcal{R}) \\
    &= \mathbb{E}_{\cqstate}[\mathcal{R}(\cqpredicate)].
    \end{aligned}
    $$
\end{proof}

\begin{proposition}\label{prop:inf-embeding-cq}
    For $\cqstate_1, \cqstate_2\in \cqstates{X}{\cH}$ and $\cqpredicate\in \cqbbmaps{X\times X}{\cH\otimes \cH}$, we have
    $\inf_{\cqstate\in \couplings{\cqstate_1}{\cqstate_2}} \E_{\cqstate}[\cqpredicate] = \inf_{\rho\in \couplings{\iota(\cqstate_1)}{\iota(\cqstate_2)}}\Tr(\iota(\cqpredicate)\rho)$.
\end{proposition}
\begin{proof}
    Firstly we prove $\mathrm{LHS}\geq \mathrm{RHS}$. For every $\cqstate\in\couplings{\cqstate_1}{\cqstate_2}$, we construct $\iota(\cqstate)$ in $\couplings{\iota(\cqstate_1)}{\iota(\cqstate_2)}$:
    \[
    \iota(\cqstate)=\sum_{x_1\in X_1, x_2\in X_2}\ket{x_1}\bra{x_1}\otimes\ket{x_2}\bra{x_2}\otimes\cqstate(x_1,x_2)
    \]
    It's easy to verify that $\tr_2(\iota(\cqstate))=\iota(\cqstate_1)$ and $\tr_1(\iota(\cqstate))=\iota(\cqstate_2)$, thus $\iota(\cqstate)$ is a coupling in $\couplings{\iota(\cqstate_1)}{\iota(\cqstate_2)}$.
    Since
    \[
    \Tr(\iota(\cqpredicate)\iota(\cqstate))=\E_{\cqstate}[\cqpredicate],
    \]
    and $\couplings{\cqstate_1}{\cqstate_2}\subseteq\couplings{\iota(\cqstate_1)}{\iota(\cqstate_2)}$, 
    we know  $\inf_{\cqstate\in \couplings{\cqstate_1}{\cqstate_2}}\E_{\cqstate}[\cqpredicate]\geq\inf_{\rho\in \couplings{\iota(\cqstate_1)}{\iota(\cqstate_2)}}\tr(\iota(\cqpredicate)\rho)$. 

   Next we show that $\mathrm{LHS}\leq \mathrm{RHS}$. Thi s is sufficient to show that for any quantum coupling $\rho \in\couplings{\iota(\cqstate_1)}{\iota(\cqstate_2)}$, there exists a classical-quantum coupling $\cqstate$ yielding an equal or lower cost.

    Let $\mathbf{x} = (x_1, x_2)$ and $\mathsf{P}_{\mathbf{x}} = |x_1\rangle\langle x_1| \otimes |x_2\rangle\langle x_2| \otimes I_{\mathcal{H} \otimes \mathcal{H}}$. We define the dephasing channel (measurement in the classical basis) $\mathcal{E}$ as:
    $$ \rho' \triangleq \mathcal{E}(\rho) = \sum_{\mathbf{x} \in X \times X} \mathsf{P}_{\mathbf{x}} \rho \mathsf{P}_{\mathbf{x}}. $$
    The state $\rho'$ is block-diagonal with respect to the classical registers, implying $\rho' = \iota(\cqstate)$ for some $\cqstate \in \cqstates{X \times X}{\mathcal{H} \otimes \mathcal{H}}$. Furthermore, since the marginals $\iota(\cqstate_1)$ and $\iota(\cqstate_2)$ are already diagonal in the classical basis, the dephasing operation preserves them. Thus, $\cqstate \in \mathcal{C}(\cqstate_1, \cqstate_2)$.

    We now compare the costs using the truncation technique to handle the unboundedness of $\cqpredicate$. Let $H \triangleq \iota(\cqpredicate)$. Since $\cqpredicate$ is bounded below, $H$ is a bounded-below self-adjoint linear relation. Let $H^{(n)} = \trunc{H}{n}$ be the sequence of bounded truncations (\Cref{def:truncation}).

    Crucially, since $H$ is constructed via the embedding $\iota$, it is block-diagonal with respect to the classical basis. Consequently, its truncations $H^{(n)}$ are also block-diagonal and commute with the projections $\mathsf{P}_{\mathbf{x}}$. This implies that $H^{(n)}$ is a fixed point of the channel $\mathcal{E}$ (or equivalently, $\mathcal{E}^\dagger(H^{(n)}) = H^{(n)}$).

    Since $H^{(n)}$ is bounded, we can apply the cyclic property of the trace:
    $$
    \begin{aligned}
    \tr(H^{(n)} \rho') &= \tr(H^{(n)} \mathcal{E}(\rho)) \\
    &= \tr(\mathcal{E}^\dagger(H^{(n)}) \rho) \quad (\text{Self-    duality of } \mathcal{E}) \\
    &= \tr(H^{(n)} \rho).
    \end{aligned}
    $$
    Finally, we apply \Cref{lemma:trace_approx}, which states that $\Tr(H \sigma) = \lim_{n \to \infty} \tr(H^{(n)} \sigma)$. Taking the limit $n \to \infty$ on both sides:
    $$ \Tr(\iota(\cqpredicate) \rho') = \lim_{n \to \infty} \tr(H^{(n)} \rho') = \lim_{n \to \infty} \tr(H^{(n)} \rho) = \Tr(\iota(\cqpredicate) \rho). $$
    This shows that for any quantum coupling $\rho$, the induced classical-quantum coupling $\cqstate$ (via $\rho'$) achieves the same cost. Therefore, the minimum over classical-quantum couplings cannot be larger than the minimum over quantum couplings. Thus, LHS $\leq$ RHS.

    Combining both directions, the equality holds. 
\end{proof}

\begin{proposition}\label{prop:cq-duality-set-containment}
    For $\cqpredicate \in \cqbndmaps{X}{\cH}$, let $\dualset{\cqpredicate}\subseteq
  \cqbndmaps{X_1}{\cH_1}\times\cqbndmaps{X_2}{\cH_2}$ such that
    $(\cqpredicate_1,\cqpredicate_2)\in \dualset{\cqpredicate}$ iff
      $\cqpredicate_1 \boxplus \cqpredicate_2\sqsubseteq \cqpredicate$ (\Cref{def:extended-lowner-order-cqpredicate}), and  $\dualset{\iota(\cqpredicate)}\subseteq \qbndmaps{\cH_1}\times \qbndmaps{\cH_2}$ such that $(A_1, A_2)\in \dualset{\iota(\cqpredicate)}$ iff $ A_1\boxplus A_2 \sqsubseteq \iota(\cqpredicate)$. If $(A_1, A_2)\in \dualset{\iota(\cqpredicate)}$, then $(\mathcal{R}(A_1), \mathcal{R}(A_2)) \in \dualset{\cqpredicate}$.
\end{proposition}

\begin{proof}
    Let $\Phi$ be the dephasing channel (measurement in the classical basis) on $\cH_1 \otimes \cH_2$. Explicitly, $\Phi(\rho) = \sum_{\vec{x}} \mathsf{P}_{\vec{x}} \rho \mathsf{P}_{\vec{x}}$, where $\mathsf{P}_{\vec{x}}$ are projections onto the classical basis, and $\vec{x}\triangleq(x_1,x_2),x_1\in X_1,x_2\in X_2$. $\Phi$ is completely positive and unital, thus preserves the Löwner order.
    
    A key property linking retraction and embedding is that $\Phi(A \otimes I) = \iota(\mathcal{R}(A)) \otimes I$. This holds because:
    \[
        \Phi(A \otimes I) = \sum_{x_1} (\mathsf{P}_{x_1} \otimes I) (A \otimes I) (\mathsf{P}_{x_1} \otimes I) 
        = \sum_{x_1} V_{x_1} (V_{x_1}^\dagger A V_{x_1}) V_{x_1}^\dagger \otimes I
        = \iota(\mathcal{R}(A)) \otimes I.
    \]
    
    Assume $(A_1, A_2) \in \dualset{\iota(\cqpredicate)}$, i.e., $A_1 \otimes I + I \otimes A_2 \sqsubseteq \iota(\cqpredicate)$. Applying $\Phi$ to both sides:
    \[
        \Phi(A_1 \otimes I) + \Phi(I \otimes A_2) \sqsubseteq \Phi(\iota(\cqpredicate)).
    \]
    Using the property derived above, the LHS becomes $\iota(\mathcal{R}(A_1)) \otimes I + I \otimes \iota(\mathcal{R}(A_2)) = \iota(\mathcal{R}(A_1) \boxplus \mathcal{R}(A_2))$. 
    For the RHS, since $\iota(\cqpredicate)$ is already block-diagonal, it is a fixed point of $\Phi$, so $\Phi(\iota(\cqpredicate)) = \iota(\cqpredicate)$.
    
    Thus, $\iota(\mathcal{R}(A_1) \boxplus \mathcal{R}(A_2)) \sqsubseteq \iota(\cqpredicate)$. Since $\iota$ is an order-embedding (i.e., $\iota(A) \sqsubseteq \iota(B) \iff A(\vec{x}) \sqsubseteq B(\vec{x})$ holds for every $\vec{x}$), we conclude $\mathcal{R}(A_1) \boxplus \mathcal{R}(A_2) \sqsubseteq \cqpredicate$.
\end{proof}

\begin{proposition}\label{prop:split-expectation-cq}
    Let $\cqstate_1\in \cqstates{X_1}{\cH_1}$, $\cqstate_2\in \cqstates{X_2}{\cH_2}$, and $\cqstate\in \couplings{\cqstate_1}{\cqstate_2}$. Then, for $\cqpredicate_1\in \cqbndmaps{X_1}{\cH_1}$ and $\cqpredicate_2\in \cqbndmaps{X_2}{\cH_2}$, we have
    \begin{equation*}
        \E_{\cqstate}[\cqpredicate_1\boxplus \cqpredicate_2] = \E_{\cqstate_1}[\cqpredicate_1] + \E_{\cqstate_2}[\cqpredicate_2].
    \end{equation*}
\end{proposition}
\begin{proof}
     By definition, we know $(\cqpredicate_1\boxplus \cqpredicate_2)(x_1,x_2) = \cqpredicate_1(x_1)\otimes I +I\otimes \cqpredicate_2(x_2)$. Therefore, we have
     \begin{equation*}
         \begin{aligned}
             \E_{\cqstate}[\cqpredicate_1\boxplus \cqpredicate_2]
             &=\sum_{x_1\in X_1,x_2\in X_2} \Tr((\cqpredicate_1(x_1)\otimes I +I\otimes \cqpredicate_2(x_2)) \cqstate(x_1,x_2)) \\
            &=\sum_{x_1\in X_1,x_2\in X_2} \Tr((\cqpredicate_1(x_1)\otimes I) \cqstate(x_1,x_2)) +  \Tr((I\otimes \cqpredicate_2(x_2)) \cqstate(x_1,x_2)) \\
             &=\sum_{x_1\in X_1,x_2\in X_2} \Tr(\cqpredicate_1(x_1)\tr_2(\cqstate(x_1,x_2))) +  \Tr( \cqpredicate_2(x_2)\tr_1(\cqstate(x_1,x_2))) \\
            &=\sum_{x_1\in X_1} \Tr(\cqpredicate_1(x_1)\sum_{x_2\in X_2} \tr_2(\cqstate(x_1,x_2))) + \sum_{x_2\in X_2}  \Tr( \cqpredicate_2(x_2) \sum_{x_2\in X_2}\tr_1(  \cqstate(x_1,x_2))) \\
            &=\sum_{x_1\in X_1} \Tr(\cqpredicate_1(x_1)\cqstate_1(x_1)) + \sum_{x_2\in X_2}  \Tr( \cqpredicate_2(x_2) \cqstate_2(x_2)),
         \end{aligned} 
     \end{equation*}
     which gives the desired result.
\end{proof}

For $x\in X$ and $\rho\in\qstate{\cH}$, we use $\cqsim{x}{\rho}\in\cqstates{X}{\cH}$ to denote a singleton classical-quantum state, which satisfies
\begin{equation*}
    \cqsim{x}{\rho} (y) 
    = \begin{cases}
        \rho, \text{ if } y = x; \\
        0, \text{ otherwise}.
    \end{cases}
\end{equation*}

\begin{proposition}[Adapted from the ePrint version of \cite{barbosa2021easypqc}, Proposition A.9]\label{prop:curri-semantic-function}
    For any cq-program $c_1$, there exist a mapping
    \[
    f: X_1 \times X_1 \to (\qpstate{\cH_1} \to \qpstate{\cH_1})
    \]
    satisfying
    \begin{itemize}
        \item for all $x_i, x_o\in X_1$ and $\rho\in\qstate{\cH_1}$, we have
        \[ \sem{c_1}\cqsim{x_i}{\rho}(x_o) = f(x_i,x_o)(\rho);\]
        \item for all $x_i, x_o\in X_1$,
        $f(x_i, x_o)$ is a trace non-increasing quantum operation on $\cH_{\qvar_1}$;
        \item for all $x_i\in X_1$, $\sum_{x_o\in X_1} f(x_i, x_o)$ is a trace non-increasing quantum operation on $\cH_{1}$; it is trace-preserving if $c_1$ is almost surely terminating ($\AST$);
        \item for any input simple state $\cqsim{(x_{i1},x_{2})}{\rho}$ where $\rho \in\qpstate{\cH_1\otimes\cH_2}$, we have
        \[
        \sem{c}\cqsim{(x_{i1},x_{2})}{\rho}
        = \sum_{x_{o1}} \cqsim{(x_{o1}, x_2)}{(f(x_{i1}, x_{o1})\otimes\mathcal{I}_{\cH_2})(\rho)}.
        \]
    \end{itemize}
\end{proposition}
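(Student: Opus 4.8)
The plan is to prove the proposition by structural induction on the left command $c_1$, constructing the mapping $f$ explicitly and reading off the four clauses as we go. The key preliminary observation is linear-algebraic: by \Cref{prop-lin-dsem} the map $\sem{c_1}$ is linear, and $\rho\mapsto\cqsim{\sigma_i}{\rho}$ is linear, so for fixed $\sigma_i,\sigma_o\in\cstates_1$ the assignment $\rho\mapsto(\sem{c_1}\cqsim{\sigma_i}{\rho})(\sigma_o)$ on density operators extends uniquely to a linear superoperator on $\cL(\cH_{\qvar_1})$; we take this to be $f(\sigma_i,\sigma_o)$, so the first clause holds by construction (and pins down $f$ uniquely). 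Two of the remaining clauses then follow from general facts rather than from the induction. Since $\sem{c_1}$ sends states to states, $f(\sigma_i,\sigma_o)$ maps positive operators to positive operators; and since $\sum_{\sigma_o}\tr\big(f(\sigma_i,\sigma_o)(\rho)\big)=\tr\big(\sem{c_1}\cqsim{\sigma_i}{\rho}\big)$, trace-nonincreasingness of $\sem{c_1}$ (again \Cref{prop-lin-dsem}) shows that $\sum_{\sigma_o}f(\sigma_i,\sigma_o)$ is trace-nonincreasing, hence so is each individual $f(\sigma_i,\sigma_o)$; moreover $\sum_{\sigma_o}f(\sigma_i,\sigma_o)$ is trace-preserving for every $\sigma_i$ exactly when $\sem{c_1}$ is trace-preserving, i.e.\ when $c_1\in\AST$. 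What genuinely requires the induction is complete positivity of each $f(\sigma_i,\sigma_o)$ and the relational tensor identity.

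For the atomic commands we read $f$ off the structural representation of $\sem{c_1}$ in \Cref{fig:update function}: for $\iskip$ and deterministic assignment $f(\sigma_i,\sigma_o)$ is the identity channel when $\sigma_o$ is the unique resulting memory and $0$ otherwise; for $\iabort$ it is $0$; for $\irnd{x}{\mu}$ it is $\mu(v)$ times the identity channel when $\sigma_o=\sigma_i[v/x]$ and $0$ otherwise; for $\iqut{\qbar}{U}$ it is the conjugation $\rho\mapsto U_{\qbar}\rho U_{\qbar}^\dagger$; for $\iqass{\qbar}{v}$ it is the reset-and-prepare channel $\rho\mapsto\sum_i|\sem{v}_{\sigma_i}\>_{\qbar}\<i|\,\rho\,|i\>_{\qbar}\<\sem{v}_{\sigma_i}|$; and for $\imeas{x}{M}{\qbar}$ the component $f(\sigma_i,\sigma_i[v/x])$ is the single-Kraus map $\rho\mapsto M[\qbar](v)\,\rho\,M[\qbar](v)^\dagger$. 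In each case complete positivity is manifest (unitary conjugations, single-Kraus maps, non-negative scalar multiples, and finite sums of completely positive maps), and since every Kraus operator here acts nontrivially only on registers contained in $\qvar_1$, each $f(\sigma_i,\sigma_o)$ commutes with $\mathcal{I}_{\qvar_2}$, which gives the relational identity directly from \Cref{fig:update function}.

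The inductive cases mirror the semantics. For $\iseq{c}{c'}$, with $g,g'$ the mappings supplied by the induction hypothesis for $c$ and $c'$, we set $f(\sigma_i,\sigma_o)\triangleq\sum_{\sigma_m\in\cstates_1}g'(\sigma_m,\sigma_o)\circ g(\sigma_i,\sigma_m)$; this sum converges on density operators since $\sum_{\sigma_m}g(\sigma_i,\sigma_m)$ is trace-nonincreasing and each $g'(\sigma_m,\sigma_o)$ is completely positive and trace-nonincreasing, and composition and such convergent positive sums preserve complete positivity, while the relational identity for the composite is obtained by composing those for $c$ and $c'$. The conditional $\icond{b}{c}{c'}$ is a case split on $\sigma_i\models b$. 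For the loop $\iwhile{b}{c}$ we use the series representation $\sem{\iwhile{b}{c}}(\Delta)=\sum_{i=0}^{\infty}r_{\neg b}\circ(\sem{c}\circ r_b)^i(\Delta)$ from \Cref{prop:structural semantics}: evaluating on a simple state and reading off the $(\sigma_i,\sigma_o)$-component expresses $f(\sigma_i,\sigma_o)$ as a sum over $i$ of finite compositions of the $g$'s along $b$-respecting memory paths that end outside $b$; every partial sum is completely positive and trace-nonincreasing, the partial sums are L\"owner-nondecreasing and uniformly trace-bounded by $1$, hence converge in the finite-dimensional space of superoperators, and the limit stays completely positive and trace-nonincreasing because that cone (equivalently, the cone of their Choi operators) is closed. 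The relational identity passes to this limit termwise.

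The step I expect to be the main obstacle is the loop: making the convergence of the series defining $f(\sigma_i,\sigma_o)$ rigorous, checking that complete positivity survives the limit, and reconciling the trace-preservation clause with the $\AST$ side condition. I would isolate a short lemma recording that $\sum_{i=0}^{\infty}r_{\neg b}\circ(\sem{c}\circ r_b)^i$ converges, in the $\omega$-cpo of states, to $\sem{\iwhile{b}{c}}$, with total output trace equal to the input trace precisely when $\iwhile{b}{c}\in\AST$ --- this is essentially \Cref{prop:structural semantics} together with the definition of almost-sure termination --- after which trace-preservation of $\sum_{\sigma_o}f(\sigma_i,\sigma_o)$ is just the general observation from the first paragraph. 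Everything else is bookkeeping about Kraus operators and which registers they act on.
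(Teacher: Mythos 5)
The paper does not actually prove this proposition; it imports it by citation from the ePrint version of EasyPQC (Proposition~A.9 there), so there is no in-paper argument to compare against. Your reconstruction by structural induction is correct and is essentially the argument one would expect the cited source to contain. The division of labour is well chosen: defining $f(\cstate_i,\cstate_o)$ as the linear extension of $\rho\mapsto(\sem{c_1}\cqsim{\cstate_i}{\rho})(\cstate_o)$ makes the first clause (and uniqueness of $f$) immediate, and you correctly observe that positivity and the trace bounds --- including trace preservation under $\AST$, which is the paper's very definition of $\AST$ --- follow from \Cref{prop-lin-dsem} without any induction, whereas complete positivity and the relational tensor identity genuinely require exhibiting Kraus structure by induction on the command. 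The atomic, sequencing, conditional and loop cases are all handled correctly; in particular, for the loop you are right to go through the series representation of \Cref{prop:structural semantics} and to invoke closedness of the cone of completely positive trace-nonincreasing maps (equivalently, of their Choi operators) under monotone bounded limits in finite dimensions. Two small points you use implicitly and could make explicit: first, the sequencing case needs linearity of $\sem{c'}$ over \emph{countable} conic combinations of simple states, which is how \Cref{prop-lin-dsem} is used elsewhere in the paper but deserves a remark; second, the unique linear extension from the cone of density operators to a superoperator on all of $\cL(\cH_{\qvar_1})$ rests on the fact that density operators span $\cL(\cH_{\qvar_1})$ in finite dimensions. Neither is a gap.
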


\begin{proposition}\label{prop:partial-trace-tensor-cq}
    For $\cqstate\in \cqstates{X_1\times X_2}{\cH_1\otimes\cH_2}$, $c_1$, $c_2$ are left and right AST cq-programs, i.e., satisfying $\sem{c_1}:\cqstates{X_1}{\cH_1}\to \cqpstates{X_1}{\cH_1}$, and  $\sem{c_2}: \cqstates{X_2}{\cH_2}\to \cqstates{X_2}{\cH_2}$ which is trace-preserving,
    we have 
    \begin{equation*}
    \begin{aligned}
        \tr_2(( \sem{c_1}\otimes I)(\cqstate)) &= \sem{c_1}(\tr_2(\cqstate)).\\
        \tr_2((I\otimes \sem{c_2})(\cqstate)) &= \tr_2(\cqstate).
    \end{aligned}
    \end{equation*}
\end{proposition}
\begin{proof}
    By linearity, it suffices to prove the proposition for singleton states. Let $\cqstate = \cqsim{x}{\rho}$ for $x = (x_{i1},x_2)\in X_1\times X_2$ and $\rho\in\qstate{\cH_1\otimes \cH_2}$.
    By~\Cref{prop:curri-semantic-function}, we have
    $\sem{c_1} \cqsim{(x_{i1},x_{2})}{\rho}
        = \sum_{x_{o1}} \cqsim{(x_{o1}, x_2)}{(f(x_{i1}, x_{o1})\otimes\mathcal{I}_{\cH_2})(\rho)}$.
    Therefore, by~\Cref{prop:partial-trace-tensor}, we have
    \begin{equation*}
       \begin{aligned}
            \tr_2(\sem{c_1} \cqsim{(x_{i1},x_{2})}{\rho}) &= \sum_{x_{o1}} \cqsim{x_{o1}}{\tr_2((f(x_{i1}, x_{o1})\otimes\mathcal{I}_{\cH_2})(\rho))} \\
             &= \sum_{x_{o1}} \cqsim{x_{o1}}{\tr_2\left((f(x_{i1}, x_{o1})\otimes\mathcal{I}_{\cH_2})(\rho)\right)}\\
             &= \sum_{x_{o1}} \cqsim{x_{o1}}{f(x_{i1}, x_{o1})(\tr_2(\rho))}\\
             &= \sem{c_1}(\tr_2(\cqstate)).
       \end{aligned}
    \end{equation*}

    Similarly, let $\cqstate = \cqsim{x}{\rho}$ for $x = (x_{1},x_{i2})\in X_1\times X_2$ and $\rho\in\qstate{\cH_1\otimes \cH_2}$.
    By~\Cref{prop:curri-semantic-function}, we 
    have $\sem{c_2} \cqsim{(x_{1},x_{i2})}{\rho}
        = \sum_{x_{o2}} \cqsim{(x_{1}, x_{o2})}{(\mathcal{I}_{\cH_1}\otimes f(x_{i2},x_{o2}))(\rho)}$.
    By~\Cref{prop:curri-semantic-function}, $\sum_{x_{o2}}f(x_{i2},x_{o2})$ is a completely positive trace preserving map.
    Therefore, by~\Cref{prop:partial-trace-tensor},
    \begin{equation*}
        \begin{aligned}
           \tr_2(I\otimes \sem{c_2} \cqsim{(x_{1},x_{i2})}{\rho})
        &= \sum_{x_{o2}} \cqsim{x_{1}}{\tr_2((\mathcal{I}_{\cH_1}\otimes f(x_{i2},x_{o2}))(\rho))} \\
        &=  \cqsim{x_{1}}{\tr_2\left((\mathcal{I}_{\cH_1}\otimes \sum_{x_{o2}}f(x_{i2},x_{o2}))(\rho)\right)} \\
        &=  \cqsim{x_{1}}{\tr_2(\rho)} = \tr_2(\cqstate).
        \end{aligned}
    \end{equation*}
    This yields the proof.
\end{proof}

\begin{proposition}\label{prop:ast-preserve-coupling-cq}
     Let $\cqstate_1\in \cqstates{X_1}{\cH_1}$, $\cqstate_2\in \cqstates{X_2}{\cH_2}$, and $\cqstate\in \couplings{\cqstate_1}{\cqstate_2}$. If $c_1$, $c_2$ are left and right AST cq-programs, i.e., satisfying $\sem{c_1}:\cqstates{X_1}{\cH_1}\to \cqstates{X_1}{\cH_1}$ and  $\sem{c_2}: \cqstates{X_2}{\cH_2}\to \cqstates{X_2}{\cH_2}$ which are trace-preserving, then
     $(\sem{c_1}\otimes \sem{c_2})(\cqstate)\in \couplings{\sem{c_1}(\cqstate_1)}{\sem{c_2}(\cqstate_2)}$.
\end{proposition}

\begin{proof}
   Note that $\sem{c_1}\otimes \sem{c_2}(\cqstate)$ is a cq-state.
   We now show $\tr_2((\sem{c_1}\otimes \sem{c_2})(\cqstate)) = \sem{c_1}(\cqstate_1)$. 
   
   By~\Cref{prop:partial-trace-tensor-cq},  we know $$\tr_2((\sem{c_1}\otimes \sem{c_2})(\cqstate)) 
   = \tr_2((\sem{c_1}\otimes I)((I\otimes  \sem{c_2})(\cqstate))) = \sem{c_1}(\tr_2((I\otimes  \sem{c_2})(\cqstate))) = \sem{c_1}(\tr_2(\cqstate)) = 
   \sem{c_1}(\cqstate_1).$$ 
   Similarly, $\tr_1((\sem{c_1}\otimes \sem{c_2})(\cqstate)) = \sem{c_2}(\rho_2)$.
\end{proof}

\begin{proposition}\label{prop:convex-combinations-couplings}
    Suppose for every $i\in \mathbb{N}$, $\cqstate^{(i)} \in \cqstates{X_1\times X_2}{\cH_1\otimes\cH_2}$ is a coupling of 
    $\cqstate_1^{(i)}\in\cqstates{X_1}{\cH_1}$ and $\cqstate_2^{(i)}\in\cqstates{X_2}{\cH_2}$.
    Then, for any sequence of positive reals $\lambda_i$
    satisfying $\sum_i \lambda_i = 1$,
    $\sum_i \lambda_i \cqstate^{(i)}$
    is a coupling of 
    $\sum_i \lambda_i \cqstate_1^{(i)}$ and
      $\sum_i \lambda_i \cqstate_2^{(i)}$.
\end{proposition}

\begin{proof}
    It is clear that  $\sum_i \lambda_i \cqstate^{(i)}$ is a state, by the convexity of states.
    To verify that it is a valid coupling, 
    we compute as
    \[
    \tr_1\left(\sum_i \lambda_i \cqstate^{(i)}\right)
    =  \sum_i \lambda_i  \tr_1(  \cqstate^{(i)})
    = \sum_i \lambda_i \cqstate_{1}^{(i)},
    \]
    and similarly for $\tr_2(\sum_i \lambda_i \cqstate^{(i)})$.
\end{proof}

\begin{proposition}[Embedding and Retraction Preserve Couplings]\label{prop:coupling-preserve-embedding}
    Let $\cqstate_1\in \cqstates{X_1}{\cH_1}$ and $\cqstate_2\in \cqstates{X_2}{\cH_2}$. Then if $\rho\in \couplings{\iota(\cqstate_1)}{\iota(\cqstate_2)}$,
    then $\mathcal{R}(\rho) \in \couplings{\cqstate_1}{\cqstate_2}$.
\end{proposition}

\begin{proof}
    We first prove $\tr_2(\mathcal{R}(\rho) ) = \cqstate_1$, and $\tr_1(\mathcal{R}(\rho) ) = \cqstate_2$ is exactly the same.
    Since $\rho\in \couplings{\iota(\cqstate_1)}{\iota(\cqstate_2)}$, we know
    $$\tr_2(\rho) = \iota(\cqstate_1) = \sum_x \ket{x}\bra{x}\otimes \cqstate_1(x).$$
    Therefore,
    $
    \tr_2(\mathcal{R}(\rho))(x) = \tr_2(V_x^{\dagger} \rho V_x) = V_x^{\dagger} \tr_2( \rho ) V_x =  \cqstate_1(x),
    $
    where the second equality is by~\Cref{prop:partial-trace-tensor}.
\end{proof}

\begin{lemma}[Compactness of cq-couplings]\label{lem:coupling-compact-cq}
    For any states $\cqstate_1$ and $\cqstate_2$, the set of their couplings, i.e., $\couplings{\cqstate_1}{\cqstate_2}$, is compact with respect to the trace-norm topology.
\end{lemma}
\begin{proof}
    It is a direct corollary of~\Cref{lem:coupling-compact}, which states that the set of coupling of two density operators in a separable Hilbert space (i.e., Hilbert space of finite or countably-infinite dimensional) is compact.
    The claim follows by noting that the retraction map
    $\mathcal{R}$ is a bounded and continuous linear map and using~\Cref{prop:coupling-preserve-embedding}, since the continuous image of a compact space is compact.
\end{proof}

\begin{lemma}\label{lemma:cq_expectation_equals_lifting}
    Let $\cqpredicate=\{\cqpredicate(x)\}_{x\in X}\in\qbbmaps{X,\cH}$ be a classical-quantum assertion, and $\cqstate\in\cqpstates{X}{\cH}$. Let $\iota(\cqpredicate)$ and $\iota(\cqstate)$ be defined as \Cref{def:embedding}. Then
    \[
    \Tr(\iota(\cqpredicate)\iota(\cqstate))=\E_{\cqstate}[\cqpredicate].
    \]
\end{lemma}
\begin{proof}
    We denote $H$ as $\iota(\cqpredicate)$ and $\rho$ as $\iota(\cqstate)$. For every $x\in X$, we denote the spectral measure induced by $\cqpredicate(x)$ as $E_x$. Then we can construct the candidate spectral measure $E_H(\cdot)$ for $H$ as:
    \[
    E_H(\Omega)\triangleq\sum_{x\in X}\ket{x}\bra{x}\otimes E_x(\Omega),~\forall\,\Omega\in\mathbb{B}(\bR\cup\{+\infty\}).
    \]
    The sum is convergent under the sense of strong operator topology, and it's easy to verify this construction is indeed the spectral measure corresponding to $H$, i.e., it satisfies the \Cref{def:spectral-measure} and \Cref{thm:spectral-theorem-for-LR}. Thus, since $\rho$ can be written as
    \[
    \rho=\sum_{y\in X}\ket{y}\bra{y}\otimes\cqstate(y),
    \]
    we can compute the induced measure of $H$ on $\rho$ as follows:
    \[
    \begin{aligned}
    \mu_{\rho}^{H}(\Omega)=\tr(E_H(\Omega)\rho)&=\tr\left( \left(\sum_{x} |x\rangle\langle x| \otimes E_x(\Omega)\right) \left(\sum_{y} |y\rangle\langle y| \otimes \cqstate(y)\right) \right)\\
    &=\tr\left( \sum_{x,y} \delta_{xy} |x\rangle\langle x| \otimes (E_x(\Omega)\cqstate(y)) \right) \\
    &= \sum_{x} \tr\left( |x\rangle\langle x| \otimes (E_x(\Omega)\cqstate(x)) \right) \\
    &= \sum_{x} \tr\left( E_x(\Omega)\cqstate(x) \right)\\
    &=\sum_{x} \mu_{\cqstate(x)}^{\cqpredicate(x)}(\Omega), ~\text{for every Borel set $\Omega$ on extended reals.}
    \end{aligned}
    \]
    Thus, assuming that $\cqpredicate=\{\cqpredicate(x)\}_{x\in X}$ are uniformly bounded below by $\gamma$, by the definition of extended trace (\Cref{def:trace_integral}) and the linearity of the Lebesgue-Stieltjes integral for countable sums of measures \cite{rudin1987real}, we know
    \begin{align*}
    \Tr(\iota(\cqpredicate)\iota(\cqstate))\text{\,(i.e.\,$\Tr(H\rho)$)}&=\int_{[\gamma,+\infty]}\lambda\,d\mu_{\rho}^{H}\\
    &=\sum_{x}\int_{[\gamma,+\infty]}\lambda\,d\mu_{\cqstate(x)}^{\cqpredicate(x)}\\
    &=\sum_{x}\Tr(\cqpredicate(x)\cqstate(x))\\
    &=\E_{\cqstate}[\cqpredicate].
    \end{align*}
    This completes the proof.
\end{proof}

\begin{proposition}\label{prop:LSC-with-cqstate}
    Let $\cqpredicate \in \qbbmaps{X, \mathcal{H}}$, and $\{\cqstate_n\}_{n \in \mathbb{N}}$ be a sequence in $\cqpstates{X}{\cH}$ converging to $\cqstate \in \cqpstates{X}{\cH}$ in trace norm, i.e., $\lim_{n \to \infty} \|\cqstate_n - \cqstate\|_{1} = 0$. Then:
    $$ 
    \mathbb{E}_{\cqstate}[\cqpredicate] \leq \liminf_{n \to \infty} \mathbb{E}_{\cqstate_n}[\cqpredicate]. 
    $$
\end{proposition}
\begin{proof}
    Firstly it's not hard to verify that the embedding map $\iota$ preserves the trace norm, i.e., $\|\iota(\cqstate)\|_{1}=\|\cqstate\|_{1}$. Therefore, $\cqstate_n\to\cqstate$ in trace norm $\iff\iota(\cqstate_n)\to\iota(\cqstate)$ in trace norm. 

    We denote $\iota(\cqpredicate),\iota(\cqstate_n)$ and $\iota(\cqstate)$ as $H,\rho_n,\rho$ respectively. By \Cref{lemma:cq_expectation_equals_lifting}, we have $\mathbb{E}_{\cqstate}[\cqpredicate]=\Tr(H\rho)$ and $\mathbb{E}_{\cqstate_n}[\cqpredicate]=\Tr(H\rho_n)$. By \Cref{thm:lsc_rho}, we obtain $\Tr(H\rho)\leq\liminf_{n\to\infty}\Tr(H\rho_n)$, which completes the proof.
\end{proof}

\subsection{The Duality Theorem}

\begin{theorem}[Kantorovich-Rubinstein Duality Theorem for Classical-Quantum Systems]
\label{thm:cq-kr-duality-appendix}
Let $\cqpredicate\in\cqbbmaps{X_1\times X_2}{\cH_1\otimes\cH_2}$, and let $\dualset{\cqpredicate}\subseteq
  \cqbndmaps{X_1}{\cH_1}\times\cqbndmaps{X_2}{\cH_2}$ such that
$(\cqpredicate_1,\cqpredicate_2)\in \dualset{\cqpredicate}$ iff
      $\cqpredicate_1 \boxplus \cqpredicate_2\sqsubseteq \cqpredicate$. Then, for any $\cqstate_1\in \cqstates{X_1}{\cH_1}$ and $\cqstate_2\in \cqstates{X_2}{\cH_2}$,
$$
 \inf_{\cqstate
  \in \couplings{\cqstate_1}{\cqstate_2}} \E_{\cqstate}[\cqpredicate] =\sup_{(\cqpredicate_1,\cqpredicate_2)\in\dualset{\cqpredicate}}
  \{\E_{\cqstate_1}[\cqpredicate_1]+\E_{\cqstate_2}[\cqpredicate_2]\}.$$
\end{theorem}

\begin{proof}
    By~\Cref{prop:inf-embeding-cq}, we have
    \begin{equation*}
        \inf_{\cqstate
  \in \couplings{\cqstate_1}{\cqstate_2}} \E_{\cqstate}[\cqpredicate]  = 
\inf_{\rho\in \couplings{\iota(\cqstate_1)}{\iota(\cqstate_2)}}\tr(\iota(\cqpredicate)\rho) .   \end{equation*}
 Combining the above with~\Cref{thm:quantum-Kantorovich-duality-unbounded}, we know
     \begin{equation*}
        \inf_{\cqstate
  \in \couplings{\cqstate_1}{\cqstate_2}} \E_{\cqstate}[\cqpredicate]  = \sup_{(A_1,A_2)\in\dualset{\iota(\cqpredicate)}}
  \{\tr(A_1\iota(\cqstate_1))+\tr(A_2\iota(\cqstate_2))\}.   
  \end{equation*}
  By~\Cref{prop:bounded-cq-expectation,prop:cq-duality-set-containment}, we know
  \begin{equation*}
      \sup_{(A_1,A_2)\in\dualset{\iota(\cqpredicate)}}
  \{\tr(A_1\iota(\cqstate_1))+\tr(A_2\iota(\cqstate_2))\} = \sup_{(A_1,A_2)\in\dualset{\iota(\cqpredicate)}}  \{\E_{\cqstate_1}[\mathcal{R}(A_1)]+\E_{\cqstate_2}[\mathcal{R}(A_2)]\} \le \sup_{(\cqpredicate_1,\cqpredicate_2)\in\dualset{\cqpredicate}}
  \{\E_{\cqstate_1}[\cqpredicate_1]+\E_{\cqstate_2}[\cqpredicate_2]\}.
  \end{equation*}
  To conclude, we have
  \begin{equation*}
      \inf_{\cqstate
  \in \couplings{\cqstate_1}{\cqstate_2}} \E_{\cqstate}[\cqpredicate] = \sup_{(A_1,A_2)\in\dualset{\iota(\cqpredicate)}}
  \{\tr(A_1\iota(\cqstate_1))+\tr(A_2\iota(\cqstate_2))\} 
  \le \sup_{(\cqpredicate_1,\cqpredicate_2)\in\dualset{\cqpredicate}}
  \{\E_{\cqstate_1}[\cqpredicate_1]+\E_{\cqstate_2}[\cqpredicate_2]\}.
  \end{equation*}

  For the other direction, by the definition of $\sqsubseteq$ and~\Cref{prop:split-expectation-cq}, for any $\cqstate\in \couplings{\cqstate_1}{\cqstate_2}$ and $(A_1,A_2)\in\dualset{\iota(\cqpredicate)}$, we have
  \begin{equation*}
     \E_{\cqstate}[\cqpredicate] \ge \E_{\cqstate}[\cqpredicate_1\boxplus \cqpredicate_2] = \E_{\cqstate_1}[\cqpredicate_1]+\E_{\cqstate_2}[\cqpredicate_2].
  \end{equation*}
  This gives 
  \begin{equation*}
       \inf_{\cqstate
  \in \couplings{\cqstate_1}{\cqstate_2}} \E_{\cqstate}[\cqpredicate] \ge \sup_{(\cqpredicate_1,\cqpredicate_2)\in\dualset{\cqpredicate}}
  \{\E_{\cqstate_1}[\cqpredicate_1]+\E_{\cqstate_2}[\cqpredicate_2]\}.
  \end{equation*}
  Combining both directions, we get the desired result.
\end{proof}

\subsection{Semantics and Weakest Preconditions}

Following~\cite{feng2020}, we introduce the restriction function \(r_b\) and denote \(\cqstate|_b \triangleq r_b(\cqstate)\), where \(\cqstate\in\cqpstates{\mathcal{S}}{\cH}\) and \(b\) is the classical assertion \(b : \mathcal{S}\rightarrow \{\true,\false\}\), as
\[
r_b(\cqstate) = \lambda a.
  \begin{cases}
     \cqstate(a), &\text{ if } b(a) = \true,\\
      0 , &\text{ if } b(a) = \false.\\
  \end{cases}
\]
\begin{lemma}
    \label{lem:prop of restriction function}
    We have the following basic properties of restriction function:
    \begin{enumerate}
        \item \(\cqstate = \cqstate|_b + \cqstate|_{\neg b}\); more generally, if $b = b_1\vee b_2$ and $b_1\wedge b_2 = \false$, then \(\cqstate|_b = \cqstate|_{b_1} + \cqstate|_{b_2}\);
        \item  If \(\E_{\cqstate}[b\mid\psi]<+\infty\), we must have \(\cqstate|_b = \cqstate\) and thus \(\E_{\cqstate}[b\mid\psi] = \E_{\cqstate|_b}[\psi]\).
    \end{enumerate}
\end{lemma}

The semantics of cqWhile is presented in~\Cref{fig:semantics-cqwhile}, following~\cite{feng2020}.
For simplicity, we show the effect of applying the semantics on singleton states for the simple commands, and the result on general states can be uniquely determined by linearity.

\begin{figure}[ht]
$$\begin{array}{rcl}	
\sem{\abort} \cqsim{\cstate}{\rho} &= & 0 \\
\sem{\skp}\cqsim{\cstate}{\rho} &= & \cqsim{\cstate}{\rho} \\
\sem{\assign{x}{e}}\cqsim{\cstate}{\rho} &= & \cqsim{\cstate[\sem{e}_\cstate/x]}{\rho} \\
\sem{\sample{x}{\mu}}\cqsim{\cstate}{\rho} &= & 
    \sum_{v\in\supp{\mu}}\mu(v)\cqsim{\cstate[v/x]}{\rho} \\
\sem{q \coloneqq \ket{v}}\cqsim{\cstate}{\rho} &= & 
    \cqsim{\cstate}{\sum_i|\sem{v}_\cstate\>_{\oq} \<i|\rho|i\>_{\oq} \<\sem{v}_\cstate|} \\
\sem{\oq \coloneqq U [\oq]}\cqsim{\cstate}{\rho} &= & 
    \cqsim{\cstate}{U_{\oq} \rho U^\dagger_{\oq}}\\
\sem{x\gets \imeas\; M[\oq]}\cqsim{\cstate}{\rho} &= & 
    \sum_{v}\cqsim{\cstate[v/x]}{M_v(\rho)M_v^{\dagger}} \\
\sem{c;c'}(\cqstate) &= & 
    \sem{c'}(\sem{c}(\cqstate)) \\ 
\sem{\ifte{b}{c_1}{c_2}}(\cqstate) &=& 
    \sem{c}(\cqstate|_b) + \sem{c'}(\cqstate|_{\neg b}) \\
\sem{\whiledtit{b}{c}{n+1}}(\cqstate) &=& 
    \cqstate|_{\neg b} + \sem{\whiledtit{b}{c}{n}}(\sem{c}(\cqstate|_b)) \\
\sem{\whiled{b}{c}}(\cqstate) &=& 
    \sum_{i=0}^\infty r_{\neg b} \circ (\sem{c} \circ r_b)^i (\cqstate)
\end{array}$$
\caption{Semantics of cqWhile.}\label{fig:semantics-cqwhile}
\end{figure}

For completeness, we present the explicit form of the weakest precondition of cqWhile in~\Cref{tab:weakest-precondition-cq}, which is proposed in~\cite[Table 3]{feng2020}. Note that all predicates here are \emph{bounded}.

\begin{table}[h]
\centering
\begin{tabular}{p{3.5cm}p{9cm}}
\toprule
\textbf{Command} & \textbf{Weakest Precondition} \\
\midrule
$\skp$ & $\wprecond{\skp}{\cqpredicate} \triangleq \cqpredicate$ \\[0.8em]

$\abort$ & $\wprecond{\abort}{\cqpredicate} \triangleq 0$ \\[0.8em]

$\assign{x}{e}$ & $\wprecond{\assign{x}{e}}{\cqpredicate} \triangleq \cqpredicate[e/x]$ \\[0.8em]

$\sample{x}{\mu}$ & $\wprecond{\sample{x}{\mu}}{\cqpredicate} \triangleq \E_{v\sim\mu}[\cqpredicate[v/x]]$ \\[0.8em]

$q \coloneqq \ket{v}$ & $\wprecond{q \coloneqq \ket{v}}{\cqpredicate} \triangleq \sum_{i} \ket{i}_{q}\bra{v}\cqpredicate\ket{v}_{q}\bra{i}$ \\[0.8em]

$\oq \coloneqq U [\oq]$ & $\wprecond{\oq \coloneqq U [\oq]}{\cqpredicate} \triangleq U^{\dagger}_{\oq}\cqpredicate U_{\oq}$ \\[0.8em]

$x\gets \imeas\; M[\oq]$ & $\wprecond{x\gets \imeas\; M[\oq]}{\cqpredicate} \triangleq \sum_i M_i^{\dagger} \cqpredicate[i/x] M_i$ \\[0.8em]

$c_1;c_2$ & $\wprecond{c_1;c_2}{\cqpredicate} \triangleq \wprecond{c_1}{(\wprecond{c_2}{\cqpredicate})}$ \\[0.8em]

$\ifte{b}{c_1}{c_2}$ & 
$\wprecond{\ifte{b}{c_1}{c_2}}{\cqpredicate} \triangleq ( \wprecond{c_1}{\cqpredicate})|_b + 
(\wprecond{c_2}{\cqpredicate})|_{\neg b}$ \\[0.8em]

$\whiled{b}{c}$ & 
$\wprecond{\whiled{b}{c}}{\cqpredicate} \triangleq
\lim_{n\rightarrow\infty} \wprecond{\whiledtit{b}{c}{n} }{\cqpredicate}$ \\

\bottomrule
\end{tabular}
\caption{Structural representations of non-relational weakest preconditions in classical-quantum setting.
}
\label{tab:weakest-precondition-cq}
\end{table}

\subsection{Program Logics}

\begin{theorem}[Soundness and Completeness of Core Rules in Classical-Quantum Programs]\label{thm:classical-quantum-complete-appendix}
A judgment is valid iff it can be derived with the rules \RuleRef{duality},
\RuleRef{conseq}, and \RuleRef{wp}.
\end{theorem}

\begin{proof}
    For soundness, we prove as follows:
    
    \RuleRef{conseq}: By definition of validity and the assumption,  
    for every $\cqstate \in \cqstates{\cstates\times \cstates}{\mathcal{H} \otimes \mathcal{H}}$, 
there exists a coupling $\cqstate'$ in 
  $$\couplings{\sem{S_1}(\tr_2(\cqstate))}{\sem{S_2}(\tr_1(\cqstate))}$$ such that
  $\E_{\cqstate}[P] \geq \E_{\cqstate'}[Q]$.
    Since $P'\sqsupseteq P$, we know $\E_{\cqstate}[P]\le \E_{\cqstate}[P']$. Also, from $Q\sqsupseteq Q'$ we know $\E_{\cqstate'}[Q]\ge \E_{\cqstate'}[Q']$. This gives $\E_{\cqstate}[P'] \geq \E_{\cqstate'}[Q']$.

   \RuleRef{duality}: We fix $\cqstate \in \cqstates{\cstates\times \cstates}{\mathcal{H} \otimes \mathcal{H}}$. By definition of validity and the assumption, for any $(Q_1,Q_2)\in \dualset{Q}$,  there exists a coupling $\cqstate'$ in 
  $\couplings{\sem{S_1}(\tr_2(\cqstate))}{\sem{S_2}(\tr_1(\cqstate))}$  such that $$\E_{\cqstate}[P] \geq \E_{\cqstate'}[Q_1\boxplus Q_2] =  \E_{\sem{S_1}(\tr_2(\cqstate))}[Q_1] + \E_{\sem{S_2}(\tr_1(\cqstate))}[Q_2] $$ by~\Cref{prop:split-expectation-cq}. This means, $$\E_{\cqstate}[P] \ge \sup_{(Q_1,Q_2)\in \dualset{Q}}  \E_{\sem{S_1}(\tr_2(\cqstate))}[Q_1] + \E_{\sem{S_2}(\tr_1(\cqstate))}[Q_2].$$ 
However, by~\Cref{thm:cq-kr-duality-appendix}, we know 
$$\sup_{(Q_1,Q_2)\in \dualset{Q}}  \E_{\sem{S_1}(\tr_2(\cqstate))}[Q_1] + \E_{\sem{S_2}(\tr_1(\cqstate))}[Q_2] = \min_{\cqstate'\in \couplings{\sem{S_1}(\tr_2(\cqstate))}{\sem{S_2}(\tr_1(\cqstate))}} \E_{\cqstate'}[Q].$$ Therefore, taking $\cqstate'$ being the minimizer of $\min_{\cqstate'\in \couplings{\sem{S_1}(\tr_2(\cqstate))}{\sem{S_2}(\tr_1(\cqstate))}} \E_{\cqstate'}[Q]$, we know $\E_{\cqstate}[P]  \ge \E_{\cqstate'}[Q] $.

\RuleRef{wp}: We fix $\cqstate \in \cqstates{\cstates\times \cstates}{\mathcal{H} \otimes \mathcal{H}}$, and let $\cqstate'=  (\sem{S_1}\otimes \sem{S_2})(\cqstate)$. By~\Cref{prop:ast-preserve-coupling-cq}, we know $$\cqstate' \in \couplings{\sem{S_1}(\tr_2(\cqstate))}{\sem{S_2}(\tr_1(\cqstate))}.$$  
By~\Cref{prop:split-expectation-cq}, we know 
$$\tr((Q_1\boxplus Q_2 )\sigma) = \tr(Q_1 \sem{S_1}(\tr_2(\rho)) )+ \tr(Q_1 \sem{S_2}(\tr_1(\rho)) ).$$ Since $ \tr(Q_1 \sem{S_1}(\tr_2(\rho)) ) = \tr(\wprecond{S_1}{Q_1}\tr_1(\rho))$ and $\tr(Q_2 \sem{S_2}(\tr_1(\rho)) ) = \tr(\wprecond{S_2}{Q_2}\tr_2(\rho))$, we know $$\tr((\wprecond{S_1}{Q_1}\boxplus \wprecond{S_2}{Q_2} )\rho ) = \tr(\wprecond{S_1}{Q_1}\tr_1(\rho)) + \tr(\wprecond{S_2}{Q_2}\tr_2(\rho))  = \tr((Q_1\boxplus Q_2 )\sigma).$$

For completeness, we prove as follows:

Assume that $\models
\rtriple{P}{S_1}{S_2}{Q}$. By definition of validity, 
we have  if for every  $\cqstate \in \cqstates{\cstates\times \cstates}{\mathcal{H} \otimes \mathcal{H}}$, 
  there exists a coupling $\cqstate'\in \couplings{\sem{S_1}(\tr_2(\cqstate))}{\sem{S_2}(\tr_1(\cqstate))}$ such that
  $\E_{\cqstate}[P] \geq \E_{\cqstate'}[Q]$. 
This means, for every $\cqstate \in \cqstates{\cstates\times \cstates}{\mathcal{H} \otimes \mathcal{H}}$, $$ \E_{\cqstate}[P]  \geq \inf_{\cqstate'\in \couplings{\sem{S_1}(\tr_2(\cqstate))}{\sem{S_2}(\tr_1(\cqstate))}} \E_{\cqstate'}[Q] .$$
By~\Cref{thm:kr-duality-appendix}, this means
for every $\cqstate \in \cqstates{\cstates\times \cstates}{\mathcal{H} \otimes \mathcal{H}}$, $$  \E_{\cqstate}[P] \geq \sup_{(Q_1,Q_2)\in \dualset{Q}} 
\E_{\sem{S_1}\tr_2(\cqstate)}[Q_1]  + \E_{\sem{S_2}(\tr_1(\cqstate))}[Q_2] .$$
By the property of weakest
precondition, and by~\Cref{prop:split-expectation-prob}, 
we have
for every $\cqstate \in \cqstates{\cstates\times \cstates}{\mathcal{H} \otimes \mathcal{H}}$, and for every $(Q_1,Q_2)\in\dualset{Q}$,
$$ \E_{\cqstate}[P]   \geq  \E_{\cqstate}[\wprecond{S_1}{Q_1}\boxplus
\wprecond{S_2}{Q_2}] .$$
By definition,
this means
$\wprecond{S_1}{Q_1}\boxplus
\wprecond{S_2}{Q_2}\sqsubseteq P$ for every $(Q_1,Q_2)\in\dualset{Q}$.  
By the \RuleRef{wp} and \RuleRef{conseq} rules, it
follows that $\vdash \rtriple{P}{S_1}{S_2}{Q_1\boxplus Q_2}$. One concludes by finally
applying the \RuleRef{duality} rule.
\end{proof}

To show the soundness of the selected two-side rules, we first note that the validity has some equivalent characterizations.

\begin{proposition}[Validity Characterizations for $\AST$ programs]
\label{prop:ast-equivalence-of-validity}
    For any $\AST$ programs $c_1$ and $c_2$, the following are equivalent:
    \begin{enumerate}
        \item \(\vDash \rtriple{b\mid \phi}{c_1}{c_2}{\psi}\);
        \item for every $\cqstate\in\cqstates{\cstates\times \cstates}{\cH\otimes \cH}$ such that $\E_{\cqstate}[b\mid\phi] <  +\infty$ (i.e., has finite expectation), there exists a
        coupling $$\cqstate'\in \couplings{\sem{c_1}(\tr_2(\cqstate))}{\sem{c_2}(\tr_1(\cqstate))}$$ such that
            \(
         \E_{\cqstate'}[\psi] \leq   \E_{\cqstate}[\phi] 
            \);
        \item for every $\cqsim{\cstate_1}{\rho_1}$ and $\cqsim{\cstate_2}{\rho_2}$ such that $\sem{b}_{(\cstate_1,\cstate_2)} = \true$, there exists a coupling $\cqstate' \in \couplings{\sem{c_1}\cqsim{\cstate_1}{\rho_1}}{\sem{c_2}\cqsim{\cstate_2}{\rho_2}}$ such that
        \[
        \inf_{\rho\in \couplings{\rho_1}{\rho_2}}\Tr(\phi(\cstate_1,\cstate_2)\rho) \ge \E_{\cqstate'}[\psi];
        \]
        \item for every \(\cstate_1\in\cstates, \cstate_2\in\cstates,\rho\in\qstate{\cH}\) such that $\sem{b}_{(\cstate_1,\cstate_2)} = \true$, 
        there exists a coupling \(\cqstate'\in \couplings{\sem{c_1}\cqsim{\cstate_1}{\tr_2(\rho)}}{\sem{c_2}\cqsim{\cstate_2}{\tr_1(\rho)}}\),
        such that \(\Tr(\phi(\cstate_1,\cstate_2)\rho) \ge \E_{\cqstate'}[\psi]\).
    \end{enumerate} 
\end{proposition}

\begin{proof}
      $(1)\Rightarrow (2)$. This is direct by definition of validity.

      $(2)\Rightarrow (3)$. Choosing $\cqstate = \cqsim{(s_1,s_2)}{\rho}$ where $\rho$ is the minimizer of  $\inf_{\rho\in \couplings{\rho_1}{\rho_2}}\tr(\phi(\cstate_1,\cstate_2)\rho)$.
      If $\E_{\cqstate}[\phi] = +\infty$, then choosing $\cqstate' = \sem{c_1}\otimes \sem{c_2} (\cqstate)$ suffices as $\E_{\cqstate'}[\psi]\le +\infty$. Otherwise, the existence of $\cqstate'$ is implied by $(2)$.

      $(3)\Rightarrow (4)$. Let $\rho_1 = \tr_2(\rho)$ and $\rho_2=\tr_1(\rho)$. By $(3)$, we know there exists a coupling \(\cqstate'\in \couplings{\sem{c_1}\cqsim{\cstate_1}{\tr_2(\rho)}}{\sem{c_2}\cqsim{\cstate_2}{\tr_1(\rho)}}\) such that
      \begin{equation*}
          \tr(\phi(\cstate_1,\cstate_2)\rho) \ge  \inf_{\rho\in \couplings{\rho_1}{\rho_2}}\tr(\phi(\cstate_1,\cstate_2)\rho) \ge \E_{\cqstate'}[\psi];
      \end{equation*}

      $(4)\Rightarrow (1)$. For any $\cqstate$, we can write it as 
      $\cqstate = \sum_{x_1,x_2} \lambda_{x_1,x_2}\cqsim{(x_1,x_2)}{\cqstate(x_1,x_2)}$, where $ \lambda_{x_1,x_2} \ge 0$ and $\sum_{x_1,x_2}  \lambda_{x_1,x_2} = 1$. 
      If there exists $x_1,x_2$ such that $\sem{b}_{(x_1,x_2)} = \false$ and $\cqstate_{x_1,x_2}\ne 0$, then $\E_{\cqstate}[b\mid \phi]= +\infty$ and we take $\cqstate' = \sem{c_1}\otimes\sem{c_2} (\cqstate)$ which is a valid coupling by~\Cref{prop:ast-preserve-coupling-cq}.
      Therefore, in the following we assume all $x_1, x_2$ satisfies $\sem{b}_{(x_1,x_2)} = \true$ .
      By $(4)$, for each $x_1, x_2$ such that $\sem{b}_{(x_1,x_2)} = \true$ and  $\cqstate(x_1,x_2)$, there exists a coupling $\cqstate'_{x_1,x_2}\in \couplings{\sem{c_1}\cqsim{x_1}{\tr_2(\cqstate(x_1,x_2))}}{\sem{c_2}\cqsim{x_2}{\tr_1(\cqstate(x_1,x_2))}}$, such that
      \(\tr(\phi(x_1,x_2)\cqstate(x_1,x_2)) \ge \E_{\cqstate'_{x_1,x_2}}[\psi]\).
      Then, taking $\cqstate' = \sum_{x_1,x_2} \lambda_{x_1,x_2}\cqstate'_{x_1,x_2}$, by~\Cref{prop:convex-combinations-couplings}, we know $\cqstate'$ is a valid coupling.
      Also, $\E_{\cqstate}[b\mid \phi]\ge \sum_{x_1,x_2} \lambda_{x_1,x_2} \E_{\cqstate'_{x_1,x_2}}[\psi] = \E_{\cqstate'}[\psi]$.
\end{proof}

\begin{theorem}[Soundness of the rules While and Sample.]
    The rules in~\Cref{fig:rules:twosided} are sound.
\end{theorem}

\begin{proof}
 We first prove the soundness of the rule \RuleRef{while}.
 By \Cref{prop:ast-equivalence-of-validity}, it is sufficient to show that for every $\cqstate\in\cqstates{X}{\cH}$ such that $\E_{\cqstate}[b_1\leftrightarrow b_2\mid\psi]<+\infty$, there exists a coupling 
  \[\cqstate'\in\couplings{\sem{\whiled{b_1}{c_1}}(\tr_2(\cqstate))}{\sem{\whiled{b_2}{c_1}}(\tr_1(\cqstate))},\]   such that
  \[
    +\infty > \E_{\cqstate}[b_1 \leftrightarrow b_2\mid\psi] \ge 
    \E_{\cqstate'} [\neg b_1 \wedge \neg b_2\mid\psi].
  \]
   We inductively construct the $\cqstate_n$ which satisfies $\E_{\cqstate_n}[b_1\leftrightarrow b_2\mid\psi] < +\infty$ by:
  \begin{itemize}
     \item $\cqstate_0 = \cqstate$ which satisfies $\E_{\cqstate_0}[b_1\leftrightarrow b_2\mid\psi] < +\infty$ by the assumption;
     \item for $n + 1$, by the assumption, we select the coupling 
      \[
        \cqstate_{n+1} \in \couplings{\sem{c_1}(\tr_2(\cqstate_{n}|_{b_1\wedge b_2}))}{\sem{c_2}(\tr_1(\cqstate_{n}|_{b_1\wedge b_2}))}
      \]
      satisfying
      \begin{align*}
        +\infty &> \E_{\cqstate_{n}}[b_1\leftrightarrow b_2\mid\psi] = 
        \E_{\cqstate_{n}|_{b_1\leftrightarrow b_2}}[\psi] \\
        &\ge \E_{\cqstate_{n}|_{b_1\wedge b_2}}[\psi] = \E_{\cqstate_{n}|_{b_1\wedge b_2}}[b_1\wedge b_2\mid \psi] \\
        &\ge \E_{\cqstate_{n+1}}[b_1 \leftrightarrow b_2\mid \psi] = \E_{\cqstate_{n+1}|_{b_1 \leftrightarrow b_2}}[\psi],
      \end{align*}
  \end{itemize}
 by recalling \Cref{lem:prop of restriction function}, and we also know $\cqstate_n = \cqstate_n|_{b_1\leftrightarrow b_2} = \cqstate_n|_{b_1\wedge b_2} + \cqstate_n|_{\neg b_1\wedge \neg b_2}$ for all $n$.
  We first prove some basic properties of $\cqstate_n$: 
  \begin{enumerate}
    \item $\tr_2(\cqstate_n|_{b_1\wedge b_2}) = \tr_2(\cqstate_n)|_{b_1}$ and $\tr_2(\cqstate_n|_{\neg b_1\wedge \neg b_2}) = \tr_2(\cqstate_n)|_{\neg b_1}$;
    \item $\tr_1(\cqstate_n|_{b_1\wedge b_2}) = \tr_1(\cqstate_n)|_{b_2}$ and $\tr_1(\cqstate_n|_{\neg b_1\wedge \neg b_2}) = \tr_1(\cqstate_n)|_{\neg b_2}$.
  \end{enumerate}
  The proofs are similar and we here only show the first part of (1). For any $\cstate_1$, if $\sem{b_1}_{\cstate_1} = \false$, then obviously that $\tr_2(\cqstate_n)|_{b_1}(\cstate_1) = 0$ and
  \[ \tr_2(\cqstate_n|_{b_1\wedge b_2})(\cstate_1) = \sum_{\cstate_2}\tr_1(\cqstate_{n}|_{b_1\wedge b_2}(\cstate_1,\cstate_2)) = \sum_{\cstate_2} 0 = 0 = \tr_2(\cqstate_n)|_{b_1}(\cstate_1); \]
  if $\sem{b_1}_{\cstate_1} = \true$, for any $\cstate_2$, if $\sem{b_2}_{\cstate_2} = \false$ then $\cqstate_n|_{b_1\wedge b_2}(\cstate_1,\cstate_2)\sqsubseteq \cqstate_n(\cstate_1,\cstate_2) = \cqstate_n|_{b_1\leftrightarrow b_2}(\cstate_1,\cstate_2) = 0$, and if $\sem{b_2}_{\cstate_2} = \true$ then $\cqstate_n|_{b_1\wedge b_2}(\cstate_1,\cstate_2) = \cqstate_n(\cstate_1,\cstate_2)$,  therefore, $\cqstate_{n}|_{b_1\wedge b_2}(\cstate_1,\cstate_2) = \cqstate_{n}(\cstate_1,\cstate_2)$ for all $\cstate_2$, and thus,
  \begin{align*}
    \tr_2(\cqstate_{n}|_{b_1\wedge b_2})(\cstate_1)
    &= \sum_{\cstate_2}\tr_2(\cqstate_{n}|_{b_1\wedge b_2}(\cstate_1,\cstate_2)) 
    = \sum_{\cstate_2}\tr_2(\cqstate_{n}(\cstate_1,\cstate_2)) \\
    &= \tr_2(\cqstate_{n})(\cstate_1)
    = \tr_2(\cqstate_{n})|_{b_1}(\cstate_1).
  \end{align*}

  We next check $\tr_2(\cqstate_n) = (\sem{c_1}\circ r_{b_1})^n(\tr_2(\cqstate))$, which is done by induction. 
  The base case $n=0$ is direct. Assuming it holds for $n$, then for $n+1$, we have
  \[
  \tr_2(\cqstate_{n+1}) = \sem{c_1}(\tr_2(\cqstate_{n}|_{b_1\wedge b_2})) = \sem{c_1}(\tr_2(\cqstate_{n})|_{b_1}) = (\sem{c_1}\circ r_{b_1})^{n+1}(\tr_2(\cqstate))
  \]
  as we desired.
  Similarly, 
  we can prove $\tr_1(\cqstate_n)= (\sem{c_2}\circ r_{b_2})^n(\tr_1(\cqstate_0))$.
  Now, we define
  \[
  \cqstate' = \sum_n \cqstate_n|_{\neg(b_1\wedge b_2)}.
  \]
  We then check
  $\cqstate'$ is a valid coupling in
  $\couplings{\sem{\whiled{b_1}{c_1}}(\tr_2(\cqstate))}{\sem{\whiled{b_2}{c_2}}(\tr_1(\cqstate))}$ by:
  \[
  \begin{aligned}
    \tr_2(\cqstate') &= \sum_n \tr_2(\cqstate_n|_{\neg(b_1\wedge b_2)})
    = \sum_n \tr_2(\cqstate_n)|_{\neg b_1} \\
    &= \sum_n r_{\neg b_1} \circ (\sem{c_1}\circ r_{b_1})^n (\tr_2(\cqstate)) \\
    &= \sem{\whiled{b_1}{c_1}}(\tr_2(\cqstate)),
  \end{aligned}
  \]
  and similarly
  $ \tr_2(\cqstate') =  \sem{\whiled{b_2}{c_2}}(\tr_1(\cqstate))$.
  Finally, we check that
  \[
  \begin{aligned}
    \E_{\cqstate}[b_1\leftrightarrow b_2\mid\psi] 
      &= \E_{\cqstate_0|_{b_1\leftrightarrow b_2}}[\psi] \\
      &= \E_{\cqstate_0|_{\neg b_1\wedge \neg b_2}}[\psi] + \E_{\cqstate_0|_{b_1\wedge b_2}}[\psi] \\
      &\ge \E_{\cqstate_0|_{\neg b_1\wedge \neg b_2}}[\psi] + \E_{\cqstate_1|_{b_1\leftrightarrow b_2}}[\psi] \\
      &= \E_{\cqstate_0|_{\neg b_1\wedge \neg b_2}}[\psi] + \E_{\cqstate_1|_{\neg b_1\wedge \neg b_2}}[\psi] + \E_{\cqstate_1|_{b_1\wedge b_2}}[\psi] \\
      &\ge \cdots \ge \\
      &\ge \sum_n\E_{\cqstate_n|_{\neg b_1\wedge \neg b_2}}[\psi] \\
      &\ge \sum_n\E_{\cqstate_n|_{\neg b_1\wedge \neg b_2}}[\neg b_1\wedge \neg b_2 \mid \psi] \\
      &= \E_{\sum_n\cqstate_n|_{\neg b_1\wedge \neg b_2}}[\neg b_1\wedge \neg b_2\mid \psi] \\
      &\ge \E_{\cqstate'}[\neg b_1\wedge \neg b_2\mid \psi]
  \end{aligned}
  \]
  where the last step is by~\Cref{prop:LSC-with-cqstate}.

   We then prove the soundness of the rule \RuleRef{sample-supp}.
   Note that the linear relation $\E_{(v, w)\sim \mu}[\psi[v/x_1, w/x_2]]$ is well defined by~\Cref{corollary:convex-convergence}.
   By~\Cref{prop:ast-equivalence-of-validity}, we only need to show,
   for every $s_1,s_2\in\cstates$ such that $\sem{\forall x_1 x_2.~\xi \rightarrow b}_{s_1,s_2} = \true$, and $\rho\in \qstate{\cH}$, 
        there exists a coupling \[\cqstate'\in \couplings{\sem{c_1}\cqsim{s_1}{\tr_2(\rho)}}{\sem{c_2}\cqsim{s_2}{\tr_1(\rho)}},\]
        such that \(\Tr(\E_{(v, w)\sim \mu}[\psi[v/x_1, w/x_2]](s_1,s_2)\rho) \ge \E_{\cqstate'}[\psi]\). For $\mu$ such that $\supp(\mu)\subseteq \xi$, we know $\sem{b}_{s_1[i/x_1],s_2[j/x_2]}$ for all $(i,j)\in \supp(\mu)$.
        Therefore, let $$\cqstate' = \sum_{i,j} \mu(i,j)\cqsim{(s_1[i/x_1],s_2[j/x_2])}{\rho}.$$
        It is direct to verify that $\cqstate'$ is a coupling and $\cqstate' = \cqstate'|_b$.
        By~\Cref{thm:quantum-monotone-convergence,corollary:convex-convergence}, we have
        \begin{equation*}
        \begin{aligned}
                        \Tr(\E_{(v, w)\sim \mu}[\psi[v/x_1, w/x_2]](s_1, s_2)\rho) &= \sum_{(v,w)\in \supp (\mu)} \mu(v,w)\tr(\psi[v/x_1, w/x_2](s_1, s_2)\rho)\\
                        &= \sum_{(v,w)\in \supp (\mu)} \mu(v,w)\tr(\psi(s_1[v/x_1], s_2[w/x_2])\rho)\\
                        &= \E_{\cqstate'} [\psi] \\
                         &= \E_{\cqstate'|_b} [\psi] \\
                           &= \E_{\cqstate'} [b\mid\psi] 
        \end{aligned}
        \end{equation*}
        as we desired.
\end{proof}
\section{More Details of Bernoulli Sampling}
\label{sec:more-details-bernoulli}

What remains to be proven is (combining the above simplified representation)
\begin{align*}
  &\left(3Q_1[0/x_1] + Q_1[1/x_1]\right) \boxplus \left(\sum_{ij}\<ij|Q_2[i/x_2'][j/x_2]|ij\>\right) \sqsubseteq 0,
\end{align*}
with $Q_1,Q_2$ satisfying if $\sem{x_1 = x_2\wedge x_2'}_{(\cstate_1, \cstate_2)} = \true$ then
\begin{equation}
  \label{eqn: bern duality1}
  Q_1(\cstate_1) \boxplus Q_2(\cstate_2) \sqsubseteq 0.
\end{equation}
To see this, by definition, it is sufficient to check for all $\cstate_1$ and $\cstate_2$, we have:
\begin{align}
\label{eqn: bern goal}
   &\left(3Q_1[0/x_1] + Q_1[1/x_1]\right)(\cstate_1) +  \left(\sum_{ij}\<ij|Q_2[i/x_2'][j/x_2]|ij\>\right)(\cstate_2) \le 0.
\end{align}
We calculate the LHS of \cref{eqn: bern goal} as follows:
\begin{align*}
  LHS =\ &\left(3Q_1(\cstate_1[0/x_1]) + Q_1(\cstate_1[1/x_1])\right) + 
  \left(\sum_{ij}\<ij|Q_2(\cstate_2[i/x_2', j/x_2])|ij\>\right) \\
  =\ &\left(Q_1(\cstate_1[0/x_1]) + \<00|Q_2(\cstate_2[0/x_2', 0/x_2])|00\> \right) \\
  &+ \left(Q_1(\cstate_1[0/x_1]) + \<01|Q_2(\cstate_2[0/x_2', 1/x_2])|01\>\right) \\
  &+ \left(Q_1(\cstate_1[0/x_1]) + \<10|Q_2(\cstate_2[1/x_2', 0/x_2])|10\>\right) \\
  &+ \left(Q_1(\cstate_1[1/x_1]) + \<11|Q_2(\cstate_2[1/x_2', 1/x_2])|11\>\right) \\
  =\ &\<00| \left(Q_1(\cstate_1[0/x_1]) \boxplus Q_2(\cstate_2[0/x_2', 0/x_2]) \right) |00\> \\
  &+ \<01| \left(Q_1(\cstate_1[0/x_1]) \boxplus Q_2(\cstate_2[0/x_2', 1/x_2]) \right) |01\> \\
  &+ \<10| \left(Q_1(\cstate_1[0/x_1]) \boxplus Q_2(\cstate_2[1/x_2', 0/x_2]) \right) |10\> \\
  &+ \<11| \left(Q_1(\cstate_1[1/x_1]) \boxplus Q_2(\cstate_2[1/x_2', 1/x_2]) \right) |11\>
\end{align*}
Recall \cref{eqn: bern duality1}, notice that 
\begin{align*}
  &\sem{x_1 = x_2\wedge x_2'}_{(\cstate_1[0/x_1], \cstate_2[0/x_2', 0/x_2])} = (0 = 0\wedge 0) = \true \\
  &\sem{x_1 = x_2\wedge x_2'}_{(\cstate_1[0/x_1], \cstate_2[0/x_2', 1/x_2])} = (0 = 0\wedge 1) = \true \\
  &\sem{x_1 = x_2\wedge x_2'}_{(\cstate_1[0/x_1], \cstate_2[1/x_2', 0/x_2])} = (0 = 1\wedge 0) = \true \\
  &\sem{x_1 = x_2\wedge x_2'}_{(\cstate_1[1/x_1], \cstate_2[1/x_2', 1/x_2])} = (1 = 1\wedge 1) = \true
\end{align*}
so we have:
\begin{align*}
  &Q_1(\cstate_1[0/x_1]) \boxplus Q_2(\cstate_2[0/x_2', 0/x_2]) \sqsubseteq 0 \\
  &Q_1(\cstate_1[0/x_1]) \boxplus Q_2(\cstate_2[0/x_2', 1/x_2]) \sqsubseteq 0 \\
  &Q_1(\cstate_1[0/x_1]) \boxplus Q_2(\cstate_2[1/x_2', 0/x_2]) \sqsubseteq 0 \\
  &Q_1(\cstate_1[1/x_1]) \boxplus Q_2(\cstate_2[1/x_2', 1/x_2]) \sqsubseteq 0
\end{align*}
and combine these into the equation we have:
\begin{align*}
   &\left(3Q_1[0/x_1] + Q_1[1/x_1]\right)(\cstate_1) +  \left(\sum_{ij}\<ij|Q_2[i/x_2'][j/x_2]|ij\>\right)(\cstate_2) \\
   \le\ & \<00| 0 |00\> +  \<01| 0 |01\> +  \<10| 0 |10\> +  \<11| 0 |11\> \\
   =\ &0
\end{align*}
which completes the proof of \cref{eqn: bern goal} and thus the original judgment.

\section{More Details of Algorithmic Stability of Quantum Neural Networks}
\label{sec:appendix-algorithmic-stability}

The proof of the judgment
$$         \rtriple
 {X_1\sim_1 X_2 \mid 4cT } { S_{T, X_1, \phi_0}} {S_{T, X_2, \phi_0}} { \lambda(m_1, m_2).
   (P\otimes I -I\otimes P)}$$
is similar to the one in~\cite{erhl}, where $X_1\sim_1 X_2$ means that $X_1$ and $X_2$ differ in exactly one element, i.e.,
$\abs{X_1}= \abs{X_2} = \abs{X_1\cap X_2} +1$.
The crucial step is to apply the two-sided \RuleRef{while} rule from~\Cref{fig:rules:twosided} with the invariant
$$
\psi \triangleq (X_1\sim_1 X_2)\wedge (t^{\ave{1}} =t^{\ave{2}}) \mid \lambda(m_1, m_2). P\otimes I- I\otimes P + (T-t^{\ave{1}})4c I
$$
and loop conditions $b_1 = (t^{\ave{1}} < T)$,  $b_2 = (t^{\ave{2}} < T)$.
After applying the \RuleRef{while} rule, it remains to prove
$$\rtriple{b_1\wedge b_2\mid \psi}{B}{B}{b_1\wedge b_2\mid \psi},$$
where $B$ denotes the loop body.
This can be done by applying the \RuleRef{sample-supp} and with 
$\mu$ being the coupling that coincides almost everywhere with the identity coupling, except for the unique pair of distinct elements $x_1\in X_1$ and $x_2\in X_2$ which are coupled together, and $\xi = \true$ and the \RuleRef{wp} rule, which proves the judgment
\begin{equation*}
    \vdash \rtriple{\E_{(v,v')\sim\mu}[t^{\ave{1}}+1 < T\wedge t^{\ave{2}}+1 <T \mid \psi']}{B}{B}{b_1\wedge b_2\mid \psi}
\end{equation*}
where 
\begin{equation*}
\begin{aligned}
        &\psi'\triangleq (X_1\sim_1 X_2) \wedge (t^{\ave{1}}+1 =t^{\ave{2}}+1) \mid  \\
        &\lambda(m_1, m_2).\;U_{v}^{\dagger}PU_{v}\otimes I- I\otimes U_{v'}^{\dagger}PU_{v'} + (T-t^{\ave{1}}-1)4c I,
\end{aligned}
\end{equation*}
Finally, we apply the \RuleRef{conseq} rule, which the entailment
\begin{equation*}
    (X_1\sim_1 X_2)\wedge b_1\wedge b_2\mid \psi \sqsupseteq \E_{(v,v')\sim\mu}[t^{\ave{1}}+1 < T\wedge t^{\ave{2}}+1 <T \mid \psi'],
\end{equation*}
which we prove via the following proposition.

\begin{proposition}
    For $c\ge 0$, a positive semidefinite operator $P$  such that $0\sqsubseteq P \sqsubseteq I$, and a unitary $U$ such that $\norm{U-I}\le c$. Then,
    \begin{equation*}
        \norm{UPU^{\dagger}-P}\le 2c.
    \end{equation*}
\end{proposition}

\begin{proof}
    By assumption, we have $\norm{P}\le 1$, $\norm{UP}\le 1$, and $\norm{U^{\dagger}-I}\le c$. Therefore, 
    we have $$\norm{UPU^{\dagger}-P} = \norm{UPU^{\dagger}- UP+ UP-P}\le \norm{UPU^{\dagger}- UP} + \norm{UP-P} \le\norm{UP} \norm{U^{\dagger}-I}+ \norm{U-I}\norm{P}\le 2c ,$$
    which yields the proof.
\end{proof}

\end{document}